\setlist[enumerate]{leftmargin=.5in}
\setlist[itemize]{leftmargin=.5in}
\newcommand{\red}[1]{\textcolor{black}{#1}} 
\crefname{hypothesis}{Hypothesis}{Hypotheses}
\title{Patterns and quasipatterns from the superposition of two hexagonal lattices\thanks{Submitted to the editors DATE.
\funding{This work was supported by the EPSRC
(\hbox{EP/P015611/1}, AMR) and by the Leverhulme Trust (\hbox{RF-2018-449/9}, AMR)}}}
\author{G\'erard Iooss\thanks{I.U.F., Universit\'e C\^ote d'Azur, CNRS,
Laboratoire J. A. Dieudonn\'{e}, Parc Valrose F-06108, Nice Cedex 2, France
  (\email{gerard.iooss@unice.fr}).}
\and Alastair M. Rucklidge\thanks{School of Mathematics,
 University of Leeds, Leeds LS2 9JT, UK
  (\email{A.M.Rucklidge@leeds.ac.uk}).}}
\newcommand{\bx}{{\mathbf{x}}}
\newcommand{\bk}{{\mathbf{k}}}
\newcommand{\bs}{{\mathbf{s}}}
\newcommand{\bmm}{{\mathbf{m}}}
\newcommand{\bz}{{\mathbf{z}}}
\newcommand{\bp}{{\mathbf{p}}}
\newcommand{\bpp}{{\mathbf{p'}}}
\newcommand{\bdelta}{{\mathbf{\delta}}}
\newcommand{\Ep}{{\mathcal{E}_{p}}}
\newcommand{\Eqp}{{\mathcal{E}_{qp}}}
\newcommand{\EO}{{\mathcal{E}_{0}}}
\newcommand{\Ei}{{\mathcal{E}_{1}}}
\newcommand{\Eii}{{\mathcal{E}_{2}}}
\begin{document}

\maketitle

\begin{abstract}
  \red{When two-dimensional pattern-forming problems are posed on a periodic
  domain, classical techniques (Lyapunov--Schmidt, equivariant bifurcation
  theory) give considerable information about what periodic patterns are formed
  in the transition where the featureless state loses stability. When the
  problem is posed on the whole plane, these periodic patterns are still
  present. Recent work on the Swift--Hohenberg equation (an archetypal
  pattern-forming partial differential equation) has proved the existence of
  quasipatterns, which are not spatially periodic and yet still have long-range
  order. Quasipatterns may have 8-fold, 10-fold, 12-fold and higher rotational
  symmetry, which preclude periodicity. There are also} quasipatterns with
  6-fold rotational symmetry made up from the superposition of two equal-amplitude
  hexagonal patterns rotated by \red{almost any} angle~$\alpha$
  with respect to each other. Here, we \red{revisit} the Swift--Hohenberg
  equation (with quadratic as well as cubic nonlinearities) and prove existence
  of several new quasipatterns. \red{The most surprising are \emph{hexa-rolls}:
  periodic and quasiperiodic patterns made from the superposition of}
  hexagons and rolls (stripes) oriented in almost any direction 
  \red{with respect to each other} and with any
  relative translation; \red{these bifurcate directly from the featureless 
  solution. In addition, we find} quasipatterns made from the superposition of
  hexagons with unequal amplitude (provided the coefficient of the quadratic
  nonlinearity is small). We consider the periodic case as well, and extend the
  class of known solutions, including the superposition of hexagons and rolls.
  \red{While we have focused on the Swift--Hohenberg equation, our work
 contributes to the general question of what periodic or quasiperiodic patterns
 should be found generically in pattern-forming problems on the plane.}
\end{abstract}

\begin{keywords}
  Quasipatterns, superlattice patterns, Swift--Hohenberg equation.
\end{keywords}

\begin{AMS}
  35B36, 37L10, 52C23
\end{AMS}


\section{Introduction}\label{sec:introduction}
Regular patterns are ubiquitous in nature, and carefully controlled laboratory
experiments are capable of producing patterns, in the form of rolls (stripes), squares or
hexagons, with an astonishingly high degree of symmetry. One particular example is the
Faraday wave experiment, in which a layer of viscous fluid is subjected to sinusoidal
vertical vibrations. Without the forcing, the surface of the fluid is flat and
featureless, but as the strength of the forcing increases \red{beyond a 
critical value}, the flat surface loses
stability to two-dimensional patterns of standing waves, which in simple cases take the
form of roll, square or hexagonal patterns~\cite{Arbell2002}. But, with more elaborate
forcing, more complex patterns can be found. \Cref{fig:fw_examples} shows examples of
(a,b)~superlattice patterns and (c,d)~quasipatterns~\cite{Kudrolli1998,Arbell2002}. The
images in (a,c) show the pattern of standing waves on the surface of the fluid, while
(b,d) show the Fourier power spectra. In both cases, the patterns are dominated by twelve
waves, indicated by twelve small circles in \Cref{fig:fw_examples}(b) and by twelve blobs
lying on a circle in \Cref{fig:fw_examples}(d). The distance from the origin to the
twelve peaks gives the wavenumber that dominates the pattern. In the superlattice
example, the twelve peaks are unevenly spaced, but the basic structure is still
hexagonal, and it is spatially periodic with a periodicity equal to $\sqrt{7}$~times the
wavelength of the instability~\cite{Kudrolli1998}. In the quasipattern example, spatial
periodicity has been lost. Instead, the quasipattern has (on average) twelve-fold
rotation symmetry, as seen in the repeating motif of twelve pentagons arranged in a
circle and in the twelve evenly spaced peaks in the Fourier power spectrum in
\cref{fig:fw_examples}(d). The lack of spatial periodicity is apparent in
\cref{fig:fw_examples}(c), while the point nature of the power spectrum in
\cref{fig:fw_examples}(d) indicates that the pattern has long-range order. These two
features, the lack of periodicity (implicit in this case from twelve-fold rotational
symmetry) and the presence of long-range order, are characteristics of quasicrystals in
metallic alloys~\cite{Shechtman1984a} and soft matter~\cite{Hayashida2007}, and in
quasipatterns in fluid dynamics~\cite{Edwards1994}, reaction--diffusion
systems~\cite{Castelino2020} and optical systems~\cite{Aumann2002}.

The discovery of twelve-fold quasipatterns in the Faraday wave
experiment~\cite{Edwards1994} inspired a sequence of papers investigating this
phenomenon~\cite{Muller1994, Zhang1996, Lifshitz1997, Porter2004, Skeldon2007,
Rucklidge2009, Rucklidge2012, Skeldon2015}. One of the main outcomes of this
body of work is an understanding of the mechanism for stabilizing quasipatterns
in Faraday waves. Twelve-fold quasicrystals have also been found in block
copolymer and dendrimer systems~\cite{Zeng2004, Hayashida2007}, in turn
inspiring a considerable volume of work~\cite{Archer2013, Achim2014,
Barkan2014, Jiang2015, Subramanian2016}. It turns out that the same
stabilization mechanism operates in the Faraday wave and the polymer
crystallization systems~\cite{Lifshitz2007a,Ratliff2019}. In both cases, and
indeed in other systems~\cite{Castelino2020, Gokce2020}, a common feature is
that a second unstable or weakly damped length scale plays a key role in
stabilizing the pattern. See~\cite{Savitz2018} for a recent review.

\begin{figure}[tbp]
  \centering
  \hbox to \hsize{%
      \hbox to 0.24\textwidth{\hfil(a)\hfil}\hfil
      \hbox to 0.24\textwidth{\hfil(b)\hfil}\hfil
      \hbox to 0.24\textwidth{\hfil(c)\hfil}\hfil
      \hbox to 0.24\textwidth{\hfil(d)\hfil}}
  \hbox to \hsize{%
      \includegraphics[width=0.24\textwidth]{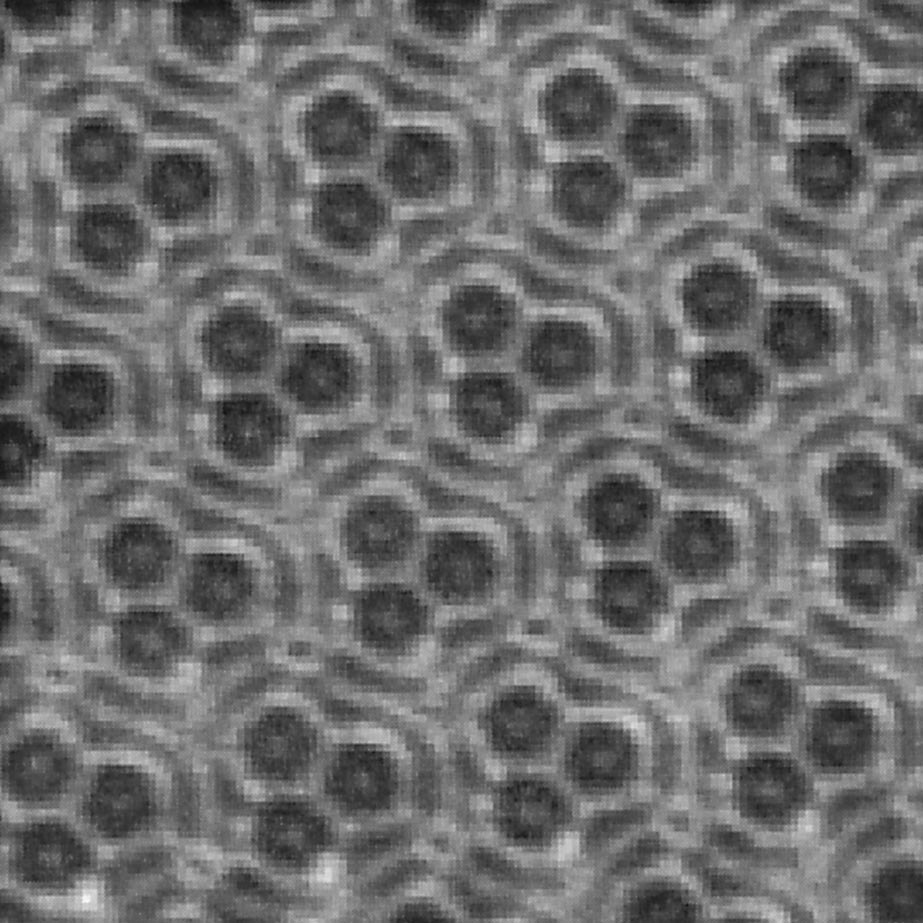}\hfil
      \includegraphics[width=0.24\textwidth]{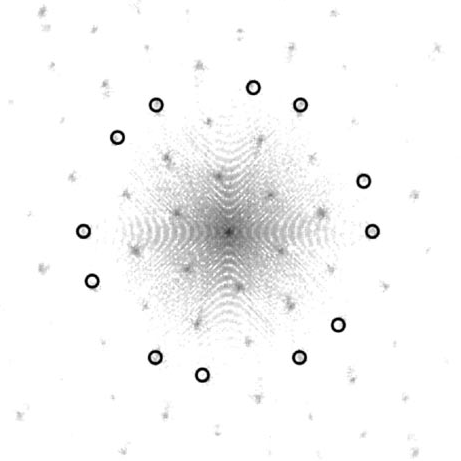}\hfil
      \includegraphics[width=0.24\textwidth]{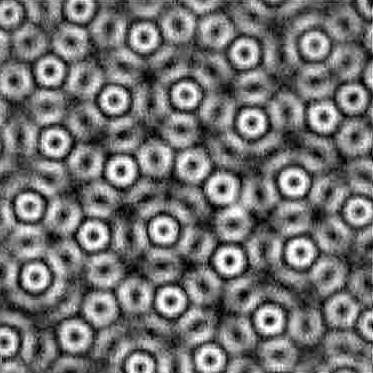}\hfil
      \includegraphics[width=0.24\textwidth]{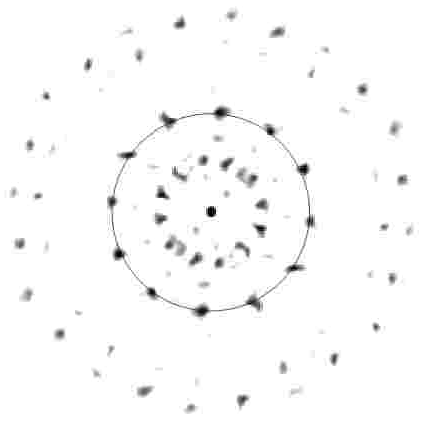}}
  \caption{Examples of (a,b)~superlattice patterns
 (reproduced with permission from~\cite{Kudrolli1998})
 and (c,d)~quasipatterns
 (reproduced with permission from~\cite{Arbell2002}).
 (a,c) show images representing the surface height of the fluid in Faraday wave
experiments, with thin layers of viscous liquids subjected to large-amplitude
multi-frequency forcing;
 (b,d) are Fourier power spectra of the images in (a,c), and indicate the
twelve peaks that dominate the patterns in each case.}
  \label{fig:fw_examples}
\end{figure}

However, as well the question of how superlattice patterns and quasipatterns
are stabilized, there is the question of their existence as solutions of
pattern-forming partial differential equations (PDEs) posed on the plane,
without lateral boundaries~\cite{Iooss2010,Argentina2012,Braaksma2017,Braaksma2019}. Superlattice patterns, which have spatial
periodicity (as in \cref{fig:fw_examples}a) can be analysed in finite domains with
periodic boundary conditions. 
In this case, and near the bifurcation point, spatially periodic patterns have
Fourier expansions with wave vectors that live on a lattice, and the
infinite-dimensional PDE can be reduced rigorously to a finite-dimensional set
of equations for the amplitudes of the primary modes~\cite{Carr1981,
Vanderbauwhede1992}. In the finite dimensional setting, amplitude equations
can be written down, bifurcating equilibrium points found and their stability
analysed~\cite{Dionne1997}. Equivariant bifurcation theory~\cite{Golubitsky1988} is a powerful tool that uses symmetry techniques to prove existence of certain classes of symmetric periodic patterns without recourse to amplitude equations.

But quasipatterns pose a particular challenge for proving existence, in that
the formal power series that describes small amplitude solutions may
diverge~\cite{Rucklidge2003,Iooss2010} owing to the appearance of small
divisors. Nonetheless, existence of quasipatterns with $Q$-fold rotation symmetry ($Q=8$, $10$, $12$, \dots) as
solutions of the steady Swift--Hohenberg equation (see below) has been proved
using methods based on the Nash--Moser theorem~\cite{Braaksma2017}. The same
approach has been applied to other pattern-forming PDEs, such as those for
steady B\'enard--Rayleigh convection~\cite{Braaksma2019}. Throughout, the
existence proofs show that as the amplitude of the quasipattern solution goes
to zero, the solution from the truncated formal expansion approaches a
quasipattern solution of the PDE in a union of disjoint parameter intervals,
going to full measure as the amplitude goes to zero.

\begin{figure}[tbp]
  \centering
  \hbox to \hsize{\hfil
\mbox{\beginpgfgraphicnamed{article_fig_wavevectors_a}%
\begin{tikzpicture}[scale=1.50,>=stealth]
   \draw[->] (-1.3,0) -- (1.3,0) node[right] {$k_x$};
   \draw[->] (0,-1.35) -- (0,1.2) node[above] {$k_y$};
   \draw[very thick] (0,0) circle (1.0);
   \draw[very thick,->] (0,0) --  ( -0.981981, -0.188982);
   \draw[very thick,->] (0,0) --  ( -0.981981,  0.188982);
   \draw[very thick,->] (0,0) --  ( -0.654654, -0.755929) node[below=5pt,  right=-12pt] {$\bk_3$};
   \draw[very thick,->] (0,0) --  ( -0.654654,  0.755929) node[above=8pt,  right=-12pt] {$\bk_5$};
   \draw[very thick,->] (0,0) --  ( -0.327327, -0.944911) node[below=5pt,  right=-12pt] {$\bk_6$};
   \draw[very thick,->] (0,0) --  ( -0.327327,  0.944911) node[above=8pt,  right=-11pt] {$\bk_2$};
   \draw[very thick,->] (0,0) --  (  0.327327, -0.944911);
   \draw[very thick,->] (0,0) --  (  0.327327,  0.944911);
   \draw[very thick,->] (0,0) --  (  0.654654, -0.755929);
   \draw[very thick,->] (0,0) --  (  0.654654,  0.755929);
   \draw[very thick,->] (0,0) --  (  0.981981, -0.188982) node[above=-1pt, right=-1pt] {$\bk_1$};
   \draw[very thick,->] (0,0) --  (  0.981981,  0.188982) node[above= 0pt, right=-1pt] {$\bk_4$};
   \draw (0.294594,   -0.0566946) arc (-10.893376:10.893376:0.3);
   \draw (0.300000,0.000000) -- (1.30000,0.700000) node[above=0pt, right=-2pt] {$\alpha$};
   \fill[gray] (   -0.981981,     -0.944911)  circle (0.03);
   \fill[gray] (   -0.981981,     -0.566947)  circle (0.03);
   \fill[gray] (   -0.981981,     -0.188982)  circle (0.03);
   \fill[gray] (   -0.981981,      0.188982)  circle (0.03);
   \fill[gray] (   -0.981981,      0.566947)  circle (0.03);
   \fill[gray] (   -0.981981,      0.944911)  circle (0.03);
   \fill[gray] (   -0.654654,      -1.13389)  circle (0.03);
   \fill[gray] (   -0.654654,     -0.755929)  circle (0.03);
   \fill[gray] (   -0.654654,     -0.377964)  circle (0.03);
   \fill[gray] (   -0.654654,       0.00000)  circle (0.03);
   \fill[gray] (   -0.654654,      0.377964)  circle (0.03);
   \fill[gray] (   -0.654654,      0.755929)  circle (0.03);
   \fill[gray] (   -0.654654,       1.13389)  circle (0.03);
   \fill[gray] (   -0.327327,     -0.944911)  circle (0.03);
   \fill[gray] (   -0.327327,     -0.566947)  circle (0.03);
   \fill[black] (  -0.327327,     -0.188982)  circle (0.04);
   \fill[black] (  -0.327327,      0.188982)  circle (0.04);
   \fill[gray] (   -0.327327,      0.566947)  circle (0.03);
   \fill[gray] (   -0.327327,      0.944911)  circle (0.03);
   \fill[gray] (     0.00000,      -1.13389)  circle (0.03);
   \fill[gray] (     0.00000,     -0.755929)  circle (0.03);
   \fill[black] (    0.00000,     -0.377964)  circle (0.04);
   \fill[gray] (     0.00000,       0.00000)  circle (0.03);
   \fill[black] (    0.00000,      0.377964)  circle (0.04);
   \fill[gray] (     0.00000,      0.755929)  circle (0.03);
   \fill[gray] (     0.00000,       1.13389)  circle (0.03);
   \fill[gray] (    0.327327,     -0.944911)  circle (0.03);
   \fill[gray] (    0.327327,     -0.566947)  circle (0.03);
   \fill[black] (   0.327327,     -0.188982)  circle (0.04);
   \fill[black] (   0.327327,      0.188982)  circle (0.04);
   \fill[gray] (    0.327327,      0.566947)  circle (0.03);
   \fill[gray] (    0.327327,      0.944911)  circle (0.03);
   \fill[gray] (    0.654654,      -1.13389)  circle (0.03);
   \fill[gray] (    0.654654,     -0.755929)  circle (0.03);
   \fill[gray] (    0.654654,     -0.377964)  circle (0.03);
   \fill[gray] (    0.654654,       0.00000)  circle (0.03);
   \fill[gray] (    0.654654,      0.377964)  circle (0.03);
   \fill[gray] (    0.654654,      0.755929)  circle (0.03);
   \fill[gray] (    0.654654,       1.13389)  circle (0.03);
   \fill[gray] (    0.981981,     -0.944911)  circle (0.03);
   \fill[gray] (    0.981981,     -0.566947)  circle (0.03);
   \fill[gray] (    0.981981,     -0.188982)  circle (0.03);
   \fill[gray] (    0.981981,      0.188982)  circle (0.03);
   \fill[gray] (    0.981981,      0.566947)  circle (0.03);
   \fill[gray] (    0.981981,      0.944911)  circle (0.03);
   \draw[very thick,->] (0,0) --  (0.00000, -0.377964) node[below=0pt] {$\bs_2$};
   \draw[very thick,->] (0,0) --  (0.327327, 0.188982) node[above=4pt, right=-3pt] {$\bs_1$};
 \draw (-1.15, 1.15) node[above] {\strut(a)};
\end{tikzpicture}\endpgfgraphicnamed}\hfil%
\mbox{\beginpgfgraphicnamed{article_fig_wavevectors_b}%
\begin{tikzpicture}[scale=1.50,>=stealth]
   \draw[->] (-1.3,0) -- (1.3,0) node[right] {$k_x$};
   \draw[->] (0,-1.35) -- (0,1.2) node[above] {$k_y$};
   \draw[very thick] (0,0) circle (1.0);
 \draw[very thick,->] (0,0) -- ( -0.9659, -0.2588);                                        
 \draw[very thick,->] (0,0) -- ( -0.9659,  0.2588);                                        
 \draw[very thick,->] (0,0) -- ( -0.7071, -0.7071) node[below=5pt,  right=-12pt] {$\bk_3$};
 \draw[very thick,->] (0,0) -- ( -0.7071,  0.7071) node[above=8pt,  right=-12pt] {$\bk_5$};
 \draw[very thick,->] (0,0) -- ( -0.2588, -0.9659) node[below=5pt,  right=-12pt] {$\bk_6$};
 \draw[very thick,->] (0,0) -- ( -0.2588,  0.9659) node[above=8pt,  right=-11pt] {$\bk_2$};
 \draw[very thick,->] (0,0) -- (  0.2588, -0.9659);                                        
 \draw[very thick,->] (0,0) -- (  0.2588,  0.9659);                                        
 \draw[very thick,->] (0,0) -- (  0.7071, -0.7071);                                        
 \draw[very thick,->] (0,0) -- (  0.7071,  0.7071);                                        
 \draw[very thick,->] (0,0) -- (  0.9659, -0.2588) node[above=-1pt, right=-1pt] {$\bk_1$}; 
 \draw[very thick,->] (0,0) -- (  0.9659,  0.2588) node[above= 0pt, right=-1pt] {$\bk_4$}; 
   \draw (0.289779, -0.0776423) arc (-15:15:0.3);
   \draw (0.30000,0.000000) -- (1.30000,0.700000) node[above=0pt, right=-2pt] {$\alpha$};
 \draw (-1.15, 1.15) node[above] {\strut(b)};
\end{tikzpicture}\endpgfgraphicnamed}\hfil%
\mbox{\beginpgfgraphicnamed{article_fig_wavevectors_c}%
\begin{tikzpicture}[scale=1.50,>=stealth]
   \draw[->] (-1.3,0) -- (1.3,0) node[right] {$k_x$};
   \draw[->] (0,-1.35) -- (0,1.2) node[above] {$k_y$};
   \draw[very thick] (0,0) circle (1.0);
 \draw[very thick,->] (0,0) -- ( -0.97503534, -0.22204975);                                        
 \draw[very thick,->] (0,0) -- ( -0.97503534,  0.22204975);                                        
 \draw[very thick,->] (0,0) -- ( -0.67981839, -0.73338050) node[below=5pt,  right=-12pt] {$\bk_3$};
 \draw[very thick,->] (0,0) -- ( -0.67981839,  0.73338050) node[above=8pt,  right=-12pt] {$\bk_5$};
 \draw[very thick,->] (0,0) -- ( -0.29521695, -0.95543025) node[below=5pt,  right=-12pt] {$\bk_6$};
 \draw[very thick,->] (0,0) -- ( -0.29521695,  0.95543025) node[above=8pt,  right=-12pt] {$\bk_2$};
 \draw[very thick,->] (0,0) -- (  0.29521695, -0.95543025);                                        
 \draw[very thick,->] (0,0) -- (  0.29521695,  0.95543025);                                        
 \draw[very thick,->] (0,0) -- (  0.67981839, -0.73338050);                                        
 \draw[very thick,->] (0,0) -- (  0.67981839,  0.73338050);                                        
 \draw[very thick,->] (0,0) -- (  0.97503534, -0.22204975) node[above=-1pt, right=-1pt] {$\bk_1$}; 
 \draw[very thick,->] (0,0) -- (  0.97503534,  0.22204975) node[above= 0pt, right=-1pt] {$\bk_4$}; 
   \draw (0.292511,   -0.0666149) arc (-12.83:12.83:0.3);
   \draw (0.30000,0.000000) -- (1.30000,0.700000) node[above=0pt, right=-2pt] {$\alpha$};
 \draw (-1.15, 1.15) node[above] {\strut(c)};
\end{tikzpicture}\endpgfgraphicnamed}%
\hfil}

\caption{(a)~Two sets of six equally spaced wave vectors ($\bk_1$, $\bk_2$,
$\bk_3$ and their opposites, and $\bk_4$, $\bk_5$, $\bk_6$ and their opposites)
rotated an angle~$\alpha$ with respect to each other so as to produce spatially
periodic patterns: $\alpha\approx21.79^\circ$, with $\cos\alpha=\frac{13}{14}$
and $\sqrt{3}\sin\alpha=\frac{9}{14}$. The gray dots indicate that the twelve
vectors lie on an underlying hexagonal lattice, generated by the vectors~$\bs_1$ and 
$\bs_2$. Compare with \cref{fig:fw_examples}(b).
 (b)~12-fold quasipatterns are generated by twelve equally spaced vectors:
$\alpha=\frac{\pi}{6}=30^\circ$, with $\cos\alpha=\frac{1}{2}\sqrt{3}$. Compare
with \cref{fig:fw_examples}(d).
 (c)~6-fold quasiperiodic case: $\alpha\approx25.66^\circ$, with
$\cos\alpha=\frac{1}{4}\sqrt{13}$ and
$\sqrt{3}\sin\alpha=\frac{3}{4}$. Quasipatterns generated by equal
combinations of the twelve waves have six-fold rotation symmetry but lack spatial
periodicity.}

\label{fig:wavevectors}

\end{figure}
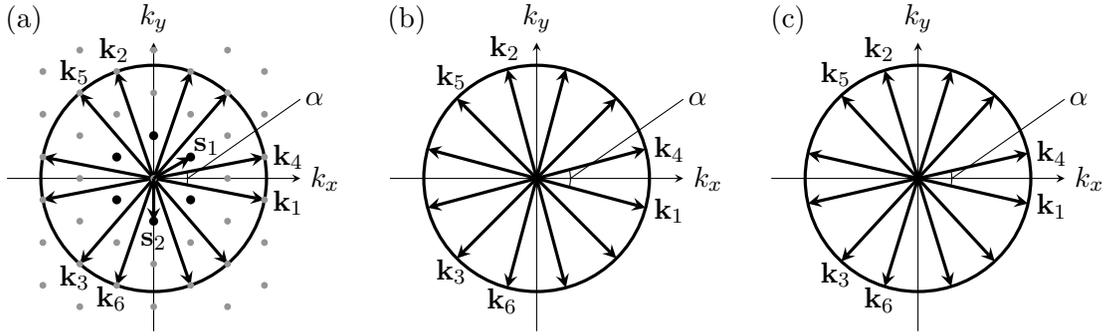

Most previous work on quasipatterns has concentrated on Fourier spectra that
exhibit ``prohibited'' symmetries: eight-, ten-, twelve-fold and higher
rotation symmetries, as in \cref{fig:fw_examples}(c), or icosahedral symmetry
in three dimensions~\cite{Subramanian2016}. There is, however, a class of quasipatterns
with six-fold rotation symmetry, related to the superlattice patterns already discussed.
These patterns can be described in terms of the superposition of twelve waves with twelve
wavevectors, grouped into two sets of six as in \cref{fig:wavevectors}, with the six
vectors within each set spaced evenly around the circle, and with the two sets rotated by
an angle~$\alpha$ with respect to each other, with $0<\alpha<\frac{\pi}{3}$. 
In the quasiperiodic case, we can choose $\alpha$ to be the smallest angle
between the vectors, so $0<\alpha\leq\frac{\pi}{6}$.

The discovery, in the Faraday wave experiment and elsewhere, of these elaborate
superlattice patterns and quasipatterns, with and without spatial periodicity, motivated
investigations into the bifurcation structure of pattern formation problems posed both in
periodic domains and on the whole plane, without lateral boundaries. We focus on an
example of such a problem, the \red{steady} Swift--Hohenberg equation, which is:
 \begin{equation}
 (1+\Delta)^{2}u-\mu u+\chi u^{2}+u^{3}=0,  \label{eq:SHeq}
 \end{equation}
where $u(\bx)$ is a real function of $\bx=(x,y)\in\mathbb{R}^{2}$, $\Delta$~is
the Laplace operator, $\mu$~is a real bifurcation parameter and $\chi$~is a
real parameter. The time-dependent version of this PDE was proposed originally
as a model of small-amplitude fluctuations near the onset of
convection~\cite{Swift1977}, but is now considered an archetypal model of
pattern formation~\cite{Hoyle2006}.

The trivial state $u=0$ is always a solution of~\cref{eq:SHeq}, and as~$\mu$ increases
through zero, many branches of small-amplitude solutions of~\cref{eq:SHeq} are created.
These include periodic patterns such as rolls, squares, hexagons and superlattice
patterns, quasipatterns with the prohibited rotation symmetries of eight-, ten-,
twelve-fold and higher (proved in~\cite{Braaksma2017} with $\chi=0$), as well as (again with $\chi=0$) two
families of six-fold quasipatterns with equal sums of the twelve Fourier modes illustrated
in \cref{fig:wavevectors}(c)~\cite{Iooss2019,Fauve2019}. In this paper, we extend the
analysis in~\cite{Iooss2019} by allowing $\chi\neq0$ and including quasipatterns with
unequal combinations of the twelve Fourier modes, discovering several new classes of solutions.

We approach this problem by deriving nonlinear amplitude equations for the twelve Fourier modes on the unit circle.
One important requirement on the twelve selected modes illustrated in \cref{fig:wavevectors} is therefore that nonlinear combinations of these modes should generate no further modes with wavevectors on the unit circle. If they did, additional amplitude equations would have to be included, a problem we leave for another day. We call the \red{(full measure, as proved in~\cite{Iooss2019} in Lemma~5)} set of~$\alpha$'s that satisfy this condition~$\EO$, defined more precisely in~\cite{Iooss2019} and \red{in \cref{def:E0} below}. Throughout, we use the names of the sets of values of~$\alpha$ from~\cite{Iooss2019}.

There are three possible situations as $\alpha$~is varied: the (zero measure)
periodic case, the (full measure) quasiperiodic case where the results
of~\cite{Iooss2019} can be used, and other quasiperiodic values of~$\alpha$
(zero measure). 
\red{See the definitions below and in \cref{app:definitions_of_all_sets} for more detail.}

\begin{enumerate}

\item
The \red{lattice} is {\emph{periodic}}, and $\alpha\in\Ep$, as in
\cref{fig:wavevectors}(a) \red{(see \cref{def:Ep})}. For these
angles, restricted to $0<\alpha<\frac{\pi}{3}$, both $\cos\alpha$ and $\sqrt{3}\sin\alpha$ must be rational, and the
wave vectors generate a lattice (see \cref{def:Ep} and \cref{lem:Ep_ab} below). This is the case examined
by~\cite{Dionne1997}, and $\alpha\approx21.79^\circ$
($\cos\alpha=\frac{13}{14}$ and $\sqrt{3}\sin\alpha=\frac{9}{14}$) is an
example. For reasons explained below, for some values of~$\alpha\in\Ep$, is it
more convenient to consider $\frac{\pi}{3}-\alpha$ instead, relabelling the
vectors. This set is dense but of measure zero. Not all values of $\alpha\in\Ep$ are also in~$\EO$.

\item
The angle~$\alpha$ is not in~$\Ep$ but it
{\emph{satisfies all three of the requirements}} for the existence proofs
in~\cite{Iooss2019}. The first requirement is that $\alpha\in\EO$ \red{(see \cref{def:E0} below)}:
no integer combination
of the twelve vectors already chosen should lie on the unit circle apart from
the twelve. The second and third requirements are that the numbers $\cos\alpha$ and
$\sqrt{3}\sin\alpha$ should satisfy two ``good'' Diophantine properties. We
define $\Ei$ and $\Eii$ to be the set of such angles, restricted to
$0<\alpha\leq\frac{\pi}{6}$ \red{(see definitions in \cref{app:definitions_of_all_sets}). Then, the set~$\Eii$, which itself requires $\EO$ and~$\Ei$, is the set of angles that satisfy all three requirements}. All rational multiples of~$\pi$ (restricted to $0<\alpha\leq\frac{\pi}{6}$) are in~$\Eii$,  for example, $\alpha=\frac{\pi}{6}=30^\circ$ as in \cref{fig:wavevectors}(b). The angle $\alpha\approx25.66^\circ$ is another example, ($\cos\alpha=\frac{1}{4}\sqrt{13}$ and
$\sqrt{3}\sin\alpha=\frac{3}{4}$, see \cref{fig:wavevectors}(c) and \cref{app:two_examples}). This set is of full measure.

\item
The angle~$\alpha$, still restricted to $0<\alpha\leq\frac{\pi}{6}$, is not in $\Ep$ or~$\Eii$, and although patterns made from these modes may be quasiperiodic, the existence proofs based on the
approach of~\cite{Iooss2019} do not work, at least not without further extension. The angle $\alpha\approx26.44^\circ$
($\cos\alpha=\frac{1}{12}(5+\sqrt{33})$ and
$\sqrt{3}\sin\alpha=\frac{1}{12}(15-\sqrt{33})$) is an example (see \cref{app:two_examples}) since it is not in~$\EO$. This set is dense but of
measure zero.

\end{enumerate}

For~$\alpha\in\Ep\cap\EO$, the resulting superlattice patterns are
spatially periodic, and their bifurcation structure is determined at finite order when
the small amplitude pattern is expressed as a formal power series~\cite{Dionne1997}. The
wavevectors for these spatially periodic superlattice patterns lie on a finer hexagonal
lattice (as in \cref{fig:wavevectors}a).

We define $\Eqp$ to be the complement of~$\Ep$ restricted to $0<\alpha\leq\frac{\pi}{6}$. For $\alpha\in\Eqp$, linear combinations of waves are typically quasiperiodic, but only for $\alpha\in\Eii\subset\Eqp$ can the techniques of~\cite{Iooss2019} be used to prove existence of quasipatterns with these modes as nonlinear solutions of the PDE~\cref{eq:SHeq}.
For the special case $\alpha=\frac{\pi}{6}\in\Eii$, as
in \cref{fig:wavevectors}(b), the quasipattern has twelve-fold rotation symmetry, but
more generally, as in \cref{fig:wavevectors}(c), there can be six-fold rotation symmetry,
more usually associated with hexagons. 
The proof in~\cite{Iooss2019} makes use of the properties of~$\Eii$; at this time, no existence result is known about $\alpha\notin\Eii\cup\Ep$.

The periodic case has been analysed by~\cite{Dionne1997,Silber1998}. They write
the small-amplitude pattern $u(\bx)$ as the sum of six complex amplitudes $z_1$,
\dots, $z_6$ times the six waves $e^{i\bk_1\cdot\bx}$, \dots,
$e^{i\bk_6\cdot\bx}$:
 \begin{equation}
 u(\bx) = \sum_{j=1}^{6} z_j e^{i\bk_j\cdot\bx} + c.c. + \text{high-order terms},
 \label{eq:leading_order}
 \end{equation}
where $c.c.$ refers to the complex conjugate, and the six wavevectors $\bk_1$,
\dots, $\bk_6$ are as illustrated in \cref{fig:wavevectors}(a). They then
derive, using symmetry considerations, the amplitude equations:
 \begin{equation}
 \begin{split}
 0 &= z_1 f_1(u_1,\dots,u_6,q_1,q_4,{\bar q}_4) +
   {\bar z}_2 {\bar z}_3 f_2 (u_1,\dots,u_6,{\bar q}_1,q_4,{\bar q}_4) + {} \\
   & \qquad {} + \text{high-order resonant terms},
 \end{split}
 \label{eq:amplitude}
 \end{equation}
where $u_1=|z_1|^2$, \dots, $u_6=|z_6|^2$, $q_1=z_1z_2z_3$, and $q_4=z_4z_5z_6$. Here,
$f_1$ and $f_2$ are smooth functions of their nine arguments. Five additional equations
can be deduced from permutation symmetry. The high-order resonant terms, present only in
the periodic case, are at least fifth order polynomial functions of the six amplitudes
and their complex conjugates, and depend on the choice of~$\alpha\in\Ep$. Even without
the amplitude equations~\cref{eq:amplitude}, equivariant bifurcation theory can be
used~\cite{Golubitsky1988,Dionne1997} to deduce the existence of various hexagonal and
triangular superlattice patterns, and, within the amplitude equations, the stability of
these patterns can be computed. 

The approach we take does not use equivariant bifurcation theory. Instead, we derive amplitude equations of the form~\cref{eq:amplitude} in the quasiperiodic and periodic cases. In the quasiperiodic case, the equation is a formal power series, but in both cases, the cubic truncation of the first component of amplitude equations is of the form
 \begin{equation}
 0 = \mu z_1 - \alpha_0{\bar z}_2 {\bar z}_3 
    - z_1(\alpha_1u_1 + \alpha_2u_2 + \alpha_2u_3 + \alpha_4u_4 + \alpha_5u_5 + \alpha_6u_6),
 \label{eq:amplitudecubic}
 \end{equation}
where $\alpha_0$, \dots, $\alpha_6$ are coefficients that can be computed from the PDE~\cref{eq:SHeq}. We find small amplitude solutions of the cubic truncation~\cref{eq:amplitudecubic} then verify that these correspond to small amplitude solutions of the untruncated amplitude equations~\cref{eq:amplitude}. 
One remarkable
result is that the formal expansion in powers of the amplitude (and parameter $\chi$ in the cases when $\chi$ is close to 0) of the bifurcating
patterns is given at leading order by the same formulae in both the quasiperiodic and the periodic cases. 
From solutions of the amplitude equations, 
the mathematical
proof of existence of the periodic patterns is given by the classical
Lyapunov--Schmidt method, while for quasipatterns the proof follows the same
lines as in~\cite{Iooss2019}. The truncated expansion
of the formal power series provides the first approximation to the quasipattern solution,
which is a starting point for the Newton iteration process, using the Nash--Moser method
for dealing with the small divisor problem~\cite{Iooss2019} 
\red{(for more details see~\S\labelcref{sec:quasipat_case_high_orders})}.

\begin{table}
\begin{center}
\begin{tabular}{|l|llll|l|}
\hline
Name & \makecell[l]{Section\\Figure} 
     & \makecell[l]{Periodic\\or QP}
     & \makecell[l]{Example\\amplitudes}
     & $\chi$
     & \makecell[l]{Earlier\\results}\rule{0pt}{4.0ex}\\[7pt]    
\hline
\makecell[l]{QP-super-\\hexagons}
     & \makecell[l]{\S\labelcref{sec:quasipat_case_high_orders}\\Fig.~\labelcref{fig:example_twohex_patterns}}
     & QP 
     & $z_1=\dots=z_6\in\mathbb{R}$
     & Any
     & \cite{Iooss2019}\rule{0pt}{4.0ex}\\[12pt]    
\makecell[l]{Unequal QP-super-\\hexagons} 
     & \makecell[l]{\S\labelcref{sec:quasipat_case_high_orders}\\Fig.~\labelcref{fig:example_twohex_patterns}}
     & QP 
     & \makecell[l]{$z_1=z_2=z_3\neq{}$\\$z_4=z_5=z_6\in\mathbb{R}$}
     & $|\chi|\ll1$
     & New\\[12pt]
\makecell[l]{QP-anti-hexagons,\\QP-triangles etc.} 
     & \makecell[l]{\S\labelcref{sec:quasipat_case_high_orders}\\Fig.~\labelcref{fig:example_chizero_patterns}}
     & QP 
     & Various: see \cref{eq:antihexagonsetc}
     & $\chi=0$
     & New\\[12pt]
\makecell[l]{Super-\\hexagons}  
     & \makecell[l]{\S\labelcref{sec:periodic_case_high_orders}\\Fig.~\labelcref{fig:example_periodic_patterns}}
     & Periodic
     & $z_1=\dots=z_6\in\mathbb{R}$
     & Any
     & \cite{Dionne1997}\\[12pt]
\makecell[l]{Triangular\\superlattice}  
     & \makecell[l]{\S\labelcref{sec:periodic_case_high_orders}\\Fig.~\labelcref{fig:example_periodic_patterns}}
     & Periodic
     & \makecell[l]{Equal amplitudes\\$\mbox{Phases}\approx\frac{\pi}{3},\frac{2\pi}{3}$}
     & Any
     & \cite{Silber1998}\\[12pt]
\makecell[l]{Hexa-rolls\\(rolls dominant)}  
     & \makecell[l]{\S\labelcref{sec:hexarolls_rolls_dominant}\\Fig.~\labelcref{fig:example_hexaroll_patterns}}
     & \makecell[l]{QP and\\periodic}
     & \makecell[l]{\makecell[l]{$z_1\approx z_2\approx z_3\ll z_4$,\\$z_5=z_6=0$}}
     & \makecell[l]{$\chi$~neither\\too small nor\\too large}
     & New\\[12pt]
\makecell[l]{Hexa-rolls\\(balanced)}  
     & \makecell[l]{\S\labelcref{sec:hexarolls_balanced}\\Fig.~\labelcref{fig:example_hexaroll_patterns}}
     & \makecell[l]{QP and\\periodic}
     & \makecell[l]{\makecell[l]{$z_1\approx z_2\approx z_3\sim z_4$,\\$z_5=z_6=0$}}
     & $|\chi|\ll1$
     & New\\[8pt]
\hline
\end{tabular}
\end{center}
\vspace{0.6ex}
\caption{\noindent\red{Summary of the different solutions we consider.
``Periodic'' and ``QP'' refer to periodic ($\alpha\in\Ep\cap\EO$) and
quasiperiodic ($\alpha\in\Eii$) respectively. We give examples of the six~$z_j$
amplitudes as well as restrictions on the values of~$\chi$. The term
``super-hexagon'' refers to the superposition of two hexagonal patterns, which
can be equal or unequal amplitude. The last column gives references to relevant
earlier results or indicates whether the solutions are new.}}
 \label{tab:RoadMap} 
\end{table}

\red{We find several new types of solution, in the quasiperiodic and in the periodic
cases, and in the $\chi\neq0$ and $|\chi|\ll1$ cases. These are summarized in
\cref{tab:RoadMap}. The most significant new class of solutions is the
superposition of hexagons and roll patterns (\emph{hexa-rolls}), with the rolls 
arranged at almost any
orientation with respect to the hexagons ($\alpha\in(\Ep\cup\Eii)\cap\EO$) and
translated with respect to each other by arbitrary amounts. These bifurcate
directly from the featureless pattern even when~$\chi$ is not small (provided 
$\chi$ is not too large, see~\S\labelcref{sec:hexarolls_rolls_dominant}), in both
the periodic and the quasiperiodic cases. In the quasiperiodic case, the phason
symmetry~\cite{Echebarria2001} characteristic of quasipatterns leads to the
freedom to have arbitrary relative translations of the hexagons and rolls;
finding this same freedom in the periodic case was a surprise.}

\red{We also show that the particular example of periodic triangular
superlattice patterns reported experimentally in~\cite{Kudrolli1998} (see \cref{fig:fw_examples}a) and
explored theoretically in~\cite{Silber1998} can also be found in a much wider
class of periodic lattices. Moreover, for nearby angles~$\alpha\in\Eii$, we
find that the quasiperiodic super-hexagons can be thought of as long-range
modulations between the periodic super-hexagons and two types of periodic
superlattice triangles (see \cref{fig:example_periodic_patterns}).}

\red{Our work extends the periodic results of~\cite{Dionne1997} to the quasiperiodic
case, including quasiperiodic versions of the anti-hexagon, super-triangle and
anti-triangle patterns that occur with $\chi=0$. We also extend the previous
quasiperiodic work of~\cite{Fauve2019,Iooss2019}, which took $\chi=0$: we find
small-amplitude bifurcating solutions in~\cref{eq:amplitude} for any
$\chi\neq0$, including new quasiperiodic superposed hexagon patterns with
unequal amplitudes for $0<|\chi|\ll1$, and show that there are corresponding
quasiperiodic (and periodic) solutions of the Swift--Hohenberg equation.}

Amongst the solutions we find in the quasiperiodic
case are combinations of two hexagonal patterns, as well as the \red{hexa-roll
patterns mentioned above}.
In both the periodic and the quasiperiodic cases, the superposed hexagon and
roll patterns are new, and would not be found using the equivariant bifurcation
lemma as they have no symmetries (beyond periodic in that case). Also in both cases,
we consider the possibility that $\chi$ is small, and use the method
of~\cite{Iooss2019} on power series in two small parameters to find new
superposed hexagon patterns with unequal amplitudes, again out of range of the
equivariant bifurcation lemma.

We open the paper with a statement of the problem in \cref{sec:statement} and
develop the formal power series for the amplitude equations in
\cref{sec:formal}. We solve these equations in \cref{sec:solutions}, focusing
on the new solutions, and conclude in \cref{sec:conclusion}. Some details of
the definitions, examples and proofs are in the six appendices.

\section{Statement of the problem}\label{sec:statement}
We begin by explaining how we describe functions on lattices and quasilattices, and how the symmetries of the problem act on these functions.

\subsection{Lattices and quasilattices}

In the Fourier plane, we have two sets of six basic wave vectors
as illustrated in \cref{fig:wavevectors}:
 $\{\bk_{j},-\bk_{j}:j=1,2,3\}$ and
 $\{\bk_{j},-\bk_{j}:j=4,5,6\}$,
 both equally spaced on the unit circle, with angle $\frac{2\pi}{3}$
 between $\bk_{1}$, $\bk_{2}$ and~$\bk_{3}$ and
 between $\bk_{4}$, $\bk_{5}$ and~$\bk_{6}$,
 such that $\bk_{1}+\bk_{2}+\bk_{3}=0$ and
           $\bk_{4}+\bk_{5}+\bk_{6}=0$.
The two sets of six vectors are rotated by an angle~$\alpha$ ($0<\alpha<\frac{\pi}{3}$)
with respect to each other, so that
 $\bk_{1}$~makes an angle $-\alpha/2$ with the $x$~axis,
 while $\bk_{4}$ makes an angle $\alpha/2$ with the $x$~axis.
 The case $\alpha=\frac{\pi}{6}$ corresponds to the situation 12-fold
quasipattern treated in~\cite{Braaksma2017}, though with $\chi=0$.

The lattice (in the periodic case) or quasilattice $\Gamma$ 
are made up of integer sums of the six basic wave vectors: 
 \begin{equation}
 \Gamma =\left\{\bk\in \mathbb{R}^{2}: \bk = \sum_{j=1}^{6}m_{j}\bk_{j},
   \quad\text{with}\quad m_{j}\in \mathbb{Z}\right\}.
 \label{eq:basicvectors}
 \end{equation}
Notice that if $\bk\in\Gamma$ then $-\bk\in\Gamma$.
In the periodic case, the lattice is not dense, as in \cref{fig:wavevectors}(a),
while in the quasiperiodic case, the points in~$\Gamma$ are dense in the plane.

The periodic case occurs whenever the two sets of six wave vectors are not
rationally independent, meaning that, for example, $\bk_4$, $\bk_5$ and $\bk_6$
can all be written as rational sums of $\bk_1$ and~$\bk_2$. This happens
whenever $\cos\alpha$ and $\cos(\alpha+\frac{\pi}{3})$ are both rational, and
in this case, patterns defined by~\cref{eq:leading_order} are periodic in
space. We define the set~$\Ep$ to be these angles.
 \begin{definition}\label{def:Ep}
 Periodic case: the set $\Ep$ of angles is defined as
 \begin{equation*}
 \Ep:=\left\{\alpha \in \left(0,\frac{\pi}{3}\right): 
        \cos \alpha \in \mathbb{Q}
        \quad\text{\rm and}\quad
        \cos \left(\alpha + \frac{\pi}{3}\right)\in \mathbb{Q}\right\}.
 \end{equation*}
\end{definition}
In this case, $\Gamma$~is a lattice \red{with hexagonal symmetry}. We can replace
$\cos(\alpha+\frac{\pi}{3})$ in this definition by $\sqrt{3}\sin\alpha$.
 The set~$\Ep$ has the following properties:
 \begin{lemma} \label{lem:Ep_ab}
 (i) The set $\Ep$ is dense and has zero measure in~$(0,\frac{\pi}{3})$.

 (ii) If the wave vectors $\bk_{1}$, $\bk_{2}$, $\bk_{4}$ and
$\bk_{5}$ are not independent on $\mathbb{Q}$, then $\alpha\in\Ep$.

 (iii) If $\alpha \in \Ep$ then there exist co-prime integers $a,b$ such that 
 \begin{align}
 &a>b>\frac{a}{2}>0, \quad a\geq 3, \quad 
 \text{$a+b$ not a multiple of 3},\notag\\
 &\cos \alpha = \frac{a^{2}+2ab-2b^{2}}{2(a^{2}-ab+b^{2})},
 \quad
 \sqrt{3}\sin \alpha = \frac{3a(2b-a)}{2(a^{2}-ab+b^{2})}.  \label{eq:cosalpha}
 \end{align}
\red{Then the} wave
vectors $\bk_{j}$ are integer combinations of two smaller vectors 
$\bs_{1}$ and~$\bs_{2}$, of equal length~$\lambda=(a^{2}-ab+b^{2})^{-1/2}$,
making an angle of $\frac{2\pi}{3}$, with
 \begin{align}
  \bk_{1} &=  a    \bs_{1} + b    \bs_{2}, \quad
 &\bk_{2} &=  (b-a)\bs_{1} - a    \bs_{2}, \quad
 &\bk_{3} &= -b    \bs_{1} + (a-b)\bs_{2}, \label{eq:periodic_kj} \\
  \bk_{4} &=  a    \bs_{1} + (a-b)\bs_{2}, \quad
 &\bk_{5} &= -b    \bs_{1} - a    \bs_{2}, \quad
 &\bk_{6} &= (b-a) \bs_{1} + b    \bs_{2}. \nonumber
 \end{align}
\end{lemma}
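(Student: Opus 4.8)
The engine for all three parts is to identify $\mathbb{R}^2\cong\mathbb{C}$ and to reformulate the defining condition in terms of the Eisenstein integers. Writing $\omega=e^{2\pi i/3}$, the identity $\cos(\alpha+\frac{\pi}{3})=\frac12\cos\alpha-\frac{\sqrt3}{2}\sin\alpha$ shows that ``$\cos\alpha\in\mathbb{Q}$ and $\cos(\alpha+\frac{\pi}{3})\in\mathbb{Q}$'' is equivalent to ``$\cos\alpha\in\mathbb{Q}$ and $\sqrt3\sin\alpha\in\mathbb{Q}$'' (this also justifies the replacement noted after \cref{def:Ep}). Since $i\sqrt3=1+2\omega$, the latter is equivalent to $e^{i\alpha}\in\mathbb{Q}(\omega)$. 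I will represent each wave vector as a unit complex number $\bk_j=e^{i\theta_j}$, so that $\bk_2=\omega\bk_1$, $\bk_3=\omega^2\bk_1$, $\bk_5=\omega\bk_4$, $\bk_6=\omega^2\bk_4$, and $\bk_4=e^{i\alpha}\bk_1$; every statement then becomes an assertion about $\mathbb{Q}(\omega)$ and its ring of integers $\mathbb{Z}[\omega]$ (a Euclidean domain, hence a PID), whose norm is $N(a+b\bar\omega)=a^2-ab+b^2$.

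For (i), the zero-measure claim is immediate: $\cos$ is strictly monotone, hence injective, on $(0,\frac\pi3)$, so $\{\alpha:\cos\alpha\in\mathbb{Q}\}$ is the preimage of a countable set under an injection and is therefore countable, and $\Ep$ is a subset of it. For density I set $c=\cos\alpha$, $d=\sqrt3\sin\alpha$, so the constraint is the conic $3c^2+d^2=3$, which carries the rational point $(-1,0)$. Intersecting with lines of rational slope $m$ through that point gives the rational parametrization $c=\frac{3-m^2}{3+m^2}$, $d=\frac{6m}{3+m^2}$. As $m$ ranges over $(0,1)\cap\mathbb{Q}$ the value $c$ is dense in $(\frac12,1)$ with $d>0$; since $\cos$ is a homeomorphism of $(0,\frac\pi3)$ onto $(\frac12,1)$, the corresponding $\alpha\in\Ep$ are dense.

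For (ii), a nontrivial rational relation $p_1\bk_1+p_2\bk_2+p_4\bk_4+p_5\bk_5=0$ becomes, after factoring, $(p_1+p_2\omega)e^{-i\alpha/2}+(p_4+p_5\omega)e^{i\alpha/2}=0$. Because $1$ and $\omega$ are linearly independent over $\mathbb{Q}$, neither coefficient can vanish unless the whole relation is trivial, so $e^{i\alpha}=-(p_1+p_2\omega)/(p_4+p_5\omega)\in\mathbb{Q}(\omega)$, which by the first paragraph gives $\alpha\in\Ep$.

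Part (iii) is the main work and the main obstacle. Given $\alpha\in\Ep$, the element $z=e^{i\alpha}$ lies in $\mathbb{Q}(\omega)$ and has field norm $z\bar z=1$. By Hilbert~90 for $\mathbb{Q}(\omega)/\mathbb{Q}$ (equivalently, by writing $z$ in lowest terms and observing that a coprime numerator and denominator of equal norm must be conjugate associates in the UFD $\mathbb{Z}[\omega]$), I can write $z=-\omega\,\bar\eta/\eta$ with $\eta=a+b\bar\omega\in\mathbb{Z}[\omega]$. I normalize $\eta$ to be primitive, forcing $\gcd(a,b)=1$, and coprime to the ramified prime $1-\omega$ above $3$; since $N(\eta)=a^2-ab+b^2\equiv(a+b)^2\pmod 3$, this coprimality is exactly $3\nmid(a+b)$. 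Setting $\bs_2=\bar\omega\,\bs_1$ with $|\bs_1|=\lambda=(a^2-ab+b^2)^{-1/2}$ gives $\bk_1=(a+b\bar\omega)\bs_1=a\bs_1+b\bs_2$ and, using $\bk_4=-\omega\bar\eta\,\bs_1=(a+(a-b)\bar\omega)\bs_1=a\bs_1+(a-b)\bs_2$ together with $\bk_2=\omega\bk_1$, $\bk_3=\omega^2\bk_1$, $\bk_5=\omega\bk_4$, $\bk_6=\omega^2\bk_4$, all six formulas in~\cref{eq:periodic_kj}; a direct expansion of $\bk_4/\bk_1$ yields the stated expressions for $\cos\alpha$ and $\sqrt3\sin\alpha$. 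The inequalities follow from restricting to $\alpha\in(0,\frac\pi3)$: fixing the sign so that $a>0$, the condition $\sin\alpha>0$ forces $2b>a$, while $\cos\alpha>\frac12$ reduces to $a>b$, and $\gcd(a,b)=1$ with $a>b>\frac a2>0$ then forces $a\geq3$. The delicate point throughout is the normalization in $\mathbb{Z}[\omega]$—pinning down the unit $-\omega$, the sign of $\eta$, and the coprimality to $1-\omega$—so that $(a,b)$ is the canonical pair meeting every listed constraint simultaneously; once that is settled, the verification that the twelve $\bk_j$ have unit length and the prescribed angles is a routine computation.
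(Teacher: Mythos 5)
Parts (i) and (ii) of your proposal are correct. Your density argument for (i) (rational parametrization of the conic $3c^2+d^2=3$) is equivalent to the paper's (which parametrizes $\cos\alpha$ by the rational $x=a/b$ through the formula in (iii)), and zero measure is indeed immediate from countability. For (ii) you supply a short self-contained argument via $e^{i\alpha}\in\mathbb{Q}(\omega)$ where the paper simply cites its reference; that argument is sound, and your verification of the formulas \cref{eq:periodic_kj} and \cref{eq:cosalpha} from $\eta=a+b\bar\omega$, together with the derivation of the inequalities $a>b>\frac{a}{2}>0$, $a\geq3$, all checks out.

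Part (iii), however, has a genuine gap, located exactly at the point you flag as ``delicate'' and then declare settled. The two normalizations ``unit $=-\omega$'' and ``$\eta$ coprime to $1-\omega$'' are \emph{not} simultaneously achievable for all $\alpha\in\Ep$. Rescaling $\eta$ by a unit $v$ multiplies the Hilbert-90 unit $u$ in $z=u\bar\eta/\eta$ by $v/\bar v\in\{1,\omega,\omega^2\}$, so the class of $u$ in $\{\pm1\}\cong(\text{units})/(\text{units})^3$ is an invariant of the chosen divisibility class of $\eta$; multiplying $\eta$ by $1-\omega$ flips that class (since $(1-\bar\omega)/(1-\omega)=-\bar\omega$). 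Hence for each $\alpha\in\Ep$ exactly one of your two normalizations is available. A concrete failure: take $\cos\alpha=\frac{13}{19}$, $\sqrt3\sin\alpha=\frac{24}{19}$, which lies in $\Ep$. The representation with unit $-\omega$ forces $\eta=\pm(8+7\bar\omega)$, with $N(\eta)=57=3\cdot19$ and $a+b=15$ divisible by $3$; the representation coprime to $1-\omega$ has its unit in the positive class and does not produce \cref{eq:cosalpha}. Indeed the only coprime pair with $a>b>\frac{a}{2}>0$ and $3\nmid(a+b)$ of the relevant norm is $(a,b)=(5,3)$, and it gives $\cos\left(\frac{\pi}{3}-\alpha\right)=\frac{37}{38}$, not $\frac{13}{19}$. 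This is precisely why the paper's proof contains the additional step of replacing $\alpha$ by $\frac{\pi}{3}-\alpha$ and relabelling the wave vectors whenever the constructed $a+b$ is a multiple of $3$ (in your language, absorbing a factor $1-\omega$ at the cost of passing to the $\sqrt3$-larger lattice basis); without that step, the conclusion of (iii) as you argue it is false for roughly half of $\Ep$. Your $\mathbb{Z}[\omega]$ framework is well suited to expressing this dichotomy cleanly, but the case $(1-\omega)\mid\eta$ must be treated, not normalized away.
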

Part (ii) of the Lemma is proved in~\cite{Iooss2019}, and parts (i) and (iii)
are proved in \cref{app:proof_of_Lemma_2p2}. The vectors $\bs_{1}$ and~$\bs_{2}$ are
illustrated in \cref{fig:wavevectors} in the case $(a,b)=(3,2)$ with
$\lambda=1/\sqrt{7}$. Requiring $a+b$ not to be a multiple of~3 means that we
need to allow $0<\alpha<\frac{\pi}{3}$ in the periodic case. In the
quasiperiodic case ($\alpha\in\Eqp$), we can always take $\alpha$ to be the smallest of the
angles between the vectors, which is why we define the set~$\Eqp$ to be the complement
of~$\Ep$ within the interval~$(0,\frac{\pi}{6}]$.

In~\cref{eq:basicvectors}, vectors $\bk\in\Gamma$ are indexed by six integers
$\bmm=(m_1,\dots,m_6)\in\mathbb{Z}^{6}$. However, using the fact that $\bk_1+\bk_2+\bk_3=0$ and $\bk_4+\bk_5+\bk_6=0$, the set~$\Gamma$ can be indexed
by fewer than six integers, and any $\bk\in\Gamma$ may be written, in both the
periodic and the quasiperiodic cases, as
 \begin{equation}
 \bk(\bmm) = m_{1}\bk_{1} + m_{2}\bk_{2} + m_{4}\bk_{4} + m_{5}\bk_{5},
 \quad
 (m_{1},m_{2},m_{4},m_{5})\in\mathbb{Z}^{4},
 \label{eq:fourvectors}
 \end{equation}
though in fact $\Gamma$~is indexed by two integers in the periodic
case~$\alpha\in\Ep$. 

\subsection{Functions on the (quasi)lattice}

We are now in a position to specify more precisely the form of the sum
in~\cref{eq:leading_order}. The function $u(\bx)$ is a real function that we
write in the form of a Fourier expansion with Fourier coefficients $u^{(\bk)}$:
 \begin{equation}
 u(\bx) = \sum_{\bk\in\Gamma} u^{(\bk)} e^{i\bk\cdot\bx},
 \quad 
 u^{(\bk)} = {\bar{u}}^{(-\bk)} \in \mathbb{C}.
 \label{eq:Fourier_u}
 \end{equation}
With $\bk\in \Gamma $ written as in~\cref{eq:fourvectors},
in the quasiperiodic case ($\alpha\in\Eqp$)
four indices are needed in the sum since the four vectors in~\cref{eq:fourvectors} are rationally independent.
In the periodic case, two indices are needed.
A norm $N_{\bk}$ for $\alpha\in\Eqp$ is defined by
 \begin{equation*}
 N_{\bk(\mathbf{m})}=|m_{1}|+|m_{2}|+|m_{4}|+|m_{5}|=|\mathbf{m}|,
 \end{equation*}
\red{where the coefficients~$m_{j}$ are uniquely defined for a given vector~$\bk\in\Gamma$}.
To give a meaning to the above Fourier expansion we need to introduce
Hilbert spaces $\mathcal{H}_{s}$, $s\geq 0:$
\begin{equation*}
 \mathcal{H}_{s} = 
   \left\{ u=\sum_{\bk\in\Gamma }u^{(\bk)}e^{i\bk\cdot\bx};
   \quad
   u^{(\bk)}=\overline{u}^{(-\bk)}\in\mathbb{C},
  \quad 
  \sum\limits_{\bk\in\Gamma}|u^{(\bk)}|^{2}(1+N_{\bk}^{2})^{s}<\infty \right\} ,
\end{equation*}
It is known that $\mathcal{H}_{s}$ is a Hilbert space with the scalar
product
 \begin{equation*}
 \langle u,v\rangle_{s} = 
 \sum_{\bk\in \Gamma }(1+N_{\bk}^{2})^{s}u^{(\bk)}\overline{v}^{(\bk)},
 \end{equation*}
and that $\mathcal{H}_{s}$ is an algebra for $s>2$ (see~\cite{Braaksma2017}), and possesses properties of Sobolev spaces $H_{s}$ in dimension~4, \red{for example $u$ is of class $C^{l}$ for $s>l+2$}. For $\alpha\in\Eqp$, a function
in~$\mathcal{H}_{s}$, defined by a convergent Fourier series as
in~\cref{eq:Fourier_u}, represents in general a quasipattern, i.e., a function that is
quasiperiodic in all directions. It is possible of course for such functions
still to be periodic (e.g., rolls or hexagons) if subsets of the Fourier
amplitudes are zero. With this definition of the scalar product, the twelve 
basic modes are orthogonal in~$\mathcal{H}_s$ and orthonormal in~$\mathcal{H}_0$:
 \begin{equation*}
   \left\langle e^{i\bk_{j}\cdot \bx},e^{i\bk_{l}\cdot\bx}\right\rangle_{0} 
 = \left\langle e^{-i\bk_{j}\cdot \bx},e^{-i\bk_{l}\cdot\bx}\right\rangle _{0}
 = \delta_{j,l}
 \quad\text{and}\quad
 \left\langle e^{\pm{i}\bk_{j}\cdot \bx},e^{\mp i\bk_{l}\cdot \bx}\right\rangle _{0}=0,
 \end{equation*}
where $\delta_{j,l}$ is the Kronecker delta.

The following useful Lemma is proven in~\cite{Iooss2019}:
 \begin{lemma} \label{lem:E0_mod_k_eq_1}
 For nearly all $\alpha \in (0,\frac{\pi}{6}]$, and in particular for 
 $\alpha\in\mathbb{Q}\pi \cap (0,\frac{\pi}{6}]$, 
 the only solutions of $|\bk(\bmm)|=1$ are $\pm \bk_{j}$, $j=1,\dots,6$.
 These vectors can be expressed with four integers as in~\cref{eq:fourvectors}:
 \begin{equation*}
 \bmm=(\pm 1,0,0,0),(0,\pm 1,0,0),(0,0,\pm 1,0),(0,0,0,\pm 1),
 \pm (1,1,0,0),\pm (0,0,1,1).
 \end{equation*}
 \end{lemma}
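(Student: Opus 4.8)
The plan is to turn the geometric condition $|\bk(\bmm)|=1$ into an arithmetic one by realising each of the two hexagonal sub-lattices as a copy of the Eisenstein integers and isolating the part of $|\bk(\bmm)|^{2}$ that depends on~$\alpha$. Setting $\omega=e^{2i\pi/3}$ and identifying $\mathbb{R}^{2}$ with~$\mathbb{C}$, the geometry of \cref{fig:wavevectors} gives $\bk_{j}=\omega^{\,j-1}e^{-i\alpha/2}$ for $j=1,2,3$ and $\bk_{j}=\omega^{\,j-4}e^{+i\alpha/2}$ for $j=4,5,6$, so that the vector of \cref{eq:fourvectors} becomes
\begin{equation*}
\bk(\bmm)=z\,e^{-i\alpha/2}+w\,e^{+i\alpha/2},\qquad z=m_{1}+m_{2}\omega,\quad w=m_{4}+m_{5}\omega,\quad z,w\in\mathbb{Z}[\omega].
\end{equation*}
Expanding $|\bk(\bmm)|^{2}=\bk(\bmm)\overline{\bk(\bmm)}$ yields
\begin{equation*}
|\bk(\bmm)|^{2}=N_{1}+N_{2}+L\cos\alpha+M\sqrt{3}\sin\alpha,
\end{equation*}
where $N_{1}=|z|^{2}=m_{1}^{2}-m_{1}m_{2}+m_{2}^{2}$, $N_{2}=|w|^{2}=m_{4}^{2}-m_{4}m_{5}+m_{5}^{2}$, $L=2\,\mathrm{Re}(z\bar w)$ and $M=\tfrac{2}{\sqrt3}\,\mathrm{Im}(z\bar w)$ are integers with $N_{1},N_{2}\ge0$. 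The structural fact I will lean on is that $N_{1}=0\iff z=0$ and $N_{2}=0\iff w=0$, and that $z=0$ or $w=0$ forces $z\bar w=0$, hence $L=M=0$.

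First I would dispose of the ``core'' solutions. If $(L,M)=(0,0)$ then $|\bk(\bmm)|^{2}=N_{1}+N_{2}$, and $N_{1}+N_{2}=1$ with $N_{1},N_{2}\ge0$ forces one of them to be~$1$ and the other~$0$; the equation $m_{1}^{2}-m_{1}m_{2}+m_{2}^{2}=1$ has exactly the six units of $\mathbb{Z}[\omega]$ as solutions, and likewise for the $(m_{4},m_{5})$ block. A short bookkeeping check identifies these twelve index vectors with the list in the statement. Everything therefore reduces to excluding an \emph{extra} solution, i.e. some $\bmm$ with $(L,M)\ne(0,0)$ and $N_{1}+N_{2}+L\cos\alpha+M\sqrt{3}\sin\alpha=1$.

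For the full-measure (``nearly all'') case the argument is soft. For a fixed $\bmm$ with $(L,M)\ne(0,0)$, the equation $L\cos\alpha+M\sqrt{3}\sin\alpha=1-(N_{1}+N_{2})$ reads $\sqrt{L^{2}+3M^{2}}\,\cos(\alpha-\psi)=\mathrm{const}$ with amplitude $\sqrt{L^{2}+3M^{2}}>0$, so it holds only on a finite subset of $(0,\tfrac{\pi}{6}]$. The set $B$ of angles admitting some extra solution is thus a countable union (over $\bmm\in\mathbb{Z}^{4}$) of finite sets, hence countable and of measure zero; off~$B$ the twelve vectors are the only ones of unit length.

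The main obstacle is the second family $\alpha\in\mathbb{Q}\pi$, which has measure zero and must be shown to avoid~$B$ by a genuinely arithmetic argument, since these are precisely the angles at which $\cos\alpha$ and $\sqrt3\sin\alpha$ are algebraic and may satisfy rational relations. Suppose an extra solution existed at some $\alpha\in\mathbb{Q}\pi\cap(0,\tfrac{\pi}{6}]$. Writing $R:=1-(N_{1}+N_{2})\in\mathbb{Z}$, isolating $M\sqrt{3}\sin\alpha=R-L\cos\alpha$ and squaring (using $\sin^{2}\alpha=1-\cos^{2}\alpha$) shows $\cos\alpha$ is a root of the integer polynomial $(L^{2}+3M^{2})x^{2}-2LRx+(R^{2}-3M^{2})$, whose leading coefficient is nonzero; hence $\deg_{\mathbb{Q}}\cos\alpha\le2$. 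But for $\alpha=2\pi k/n\in\mathbb{Q}\pi$ the number $2\cos\alpha$ is an algebraic integer of degree $\varphi(n)/2$ (the classical Niven--Lehmer result), so $\deg_{\mathbb{Q}}\cos\alpha\le2$ forces $\varphi(n)\le4$; running through $n\in\{1,2,3,4,5,6,8,10,12\}$ and keeping only $\alpha\in(0,\tfrac{\pi}{6}]$ leaves the single angle $\alpha=\tfrac{\pi}{6}$. That case I would settle by hand: there $\cos\alpha=\sqrt{3}\sin\alpha=\tfrac{\sqrt3}{2}$, so $|\bk(\bmm)|^{2}=N_{1}+N_{2}+\tfrac{\sqrt3}{2}(L+M)=1$ forces $N_{1}+N_{2}=1$ by the irrationality of $\sqrt3$, whence one of $z,w$ vanishes and $L=M=0$, contradicting $(L,M)\ne(0,0)$. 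This excludes extra solutions for every $\alpha\in\mathbb{Q}\pi\cap(0,\tfrac{\pi}{6}]$ and completes the proof. The only external input the hard case requires is the degree formula for the cosine of a rational multiple of~$\pi$; everything else is elementary.
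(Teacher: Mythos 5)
Your proof is correct and complete. There is nothing in the paper itself to compare it against: the authors do not prove \cref{lem:E0_mod_k_eq_1} but quote it from~\cite{Iooss2019} (Lemma~5 there). That said, your splitting $|\bk(\bmm)|^{2}=N_{1}+N_{2}+L\cos\alpha+M\sqrt{3}\sin\alpha$ with integer $N_{1},N_{2},L,M$ is exactly the rational/irrational decomposition the authors carry out by hand in \cref{app:two_examples} for the two example angles (your $L=2m_{1}m_{4}+2m_{2}m_{5}-m_{1}m_{5}-m_{2}m_{4}$ is their $q_{0}$), so your argument is the natural globalization of what the paper does pointwise. All three steps check out: the Eisenstein-integer identification correctly shows $L=M=0$ if and only if $z\bar{w}=0$, which reduces the ``core'' solutions to the six units of $\mathbb{Z}[\omega]$ in each block, i.e.\ precisely the twelve listed~$\bmm$; the countable-union-of-finite-sets argument gives the full-measure statement; and for $\alpha\in\mathbb{Q}\pi$ the quadratic $(L^{2}+3M^{2})x^{2}-2LRx+(R^{2}-3M^{2})$ has nonzero leading coefficient, so the Niven--Lehmer bound $\varphi(n)\le 4$ leaves only $\alpha=\frac{\pi}{6}$ in $(0,\frac{\pi}{6}]$, which you correctly eliminate via the irrationality of~$\sqrt{3}$. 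The single external input (the degree $\varphi(n)/2$ of $2\cos(2\pi k/n)$) is standard and is genuinely needed, since the ``nearly all'' clause does not cover the measure-zero set~$\mathbb{Q}\pi$.
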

\red{For these values of~$\alpha$, the only vectors in~$\Gamma$ that are on the 
unit circle are the original twelve vectors, defining the set~$\EO$:}
 \red{\begin{definition}\label{def:E0}
 $\EO$ is the set of $\alpha$'s such that \cref{lem:E0_mod_k_eq_1} applies: the set of $\alpha \in (0,\frac{\pi}{6}]$ such that the only solutions of $|\bk(\bmm)|=1$ are $\pm \bk_{j}$, $j=1,\dots,6$.
\end{definition}}
The set~$\EO$ is dense and of full measure in $(0,\frac{\pi}{6}]$  
\red{(see~\cite{Iooss2019}, proof of Lemma~5)}, \red{and contains angles 
$\alpha\in\Ep$ and $\alpha\in\Eqp$}.
\red{Not every $\alpha\in\Ep$ is also in~$\EO$; for example, if $(a,b)=(8,5)$, we have $3\bk_1+\bk_2-2\bk_4+\bk_5=(5b-4a)\bs_2=(0,1)$, which is a vector on the unit circle but not in the original twelve.
For $\alpha\in\Eqp$,} it is possible to
show, for example, that $\alpha\approx25.66^\circ$
($\cos\alpha=\frac{1}{4}\sqrt{13}$) is in~$\EO$, while
$\alpha\approx26.44^\circ$ ($\cos\alpha=\frac{1}{12}(5+\sqrt{33})$) is not
(neither of these examples is a rational multiple of~$\pi$). See \cref{app:two_examples} for details \red{of these two examples}.

\subsection{Symmetries and actions}

Our problem possesses important symmetries. First, the system~\cref{eq:SHeq} is
invariant under the Euclidean group~$E(2)$ of rotations, reflections and
translations of the plane. We denote by~$\mathbf{R}_{\theta}u$ the pattern~$u$
rotated by an angle~$\theta$ centered at the origin, so
$(\mathbf{R}_{\theta}u)(\bx)=u(\mathbf{R}_{-\theta}\bx)$, \red{where $\mathbf{R}_{-\theta}\bx$ is $\bx$ rotated by an angle~$-\theta$}. We define similarly
the reflection~$\tau$ in the $x$~axis, and the translation
$\mathbf{T}_{\bdelta}$ by an amount~$\bdelta$, so $(\tau
u)(x,y)=u(x,-y)$ and $(\mathbf{T}_{\bdelta}u)(\bx)=u(\bx-\bdelta)$. Finally, in the case $\chi=0$,
equation~\cref{eq:SHeq} is odd in~$u$ and so commutes with the
symmetry~$\mathbf{S}$ defined by $\mathbf{S}u=-u$. If $\chi\neq0$, then in
addition to the change $u\rightarrow-u$, we need to change
$\chi\rightarrow-\chi$.

The leading order part~$v_1(\bx)$ of our solution will be as in~\cref{eq:leading_order}:
 \begin{equation}
 v_{1}(\bx) = \sum_{j=1}^{6} z_{j} e^{i\bk_{j}\cdot\bx} 
                           + {\bar{z}}_{j} e^{-i\bk_{j}\cdot \bx},
 \quad\text{with}\quad
 z_{j}\in\mathbb{C}.
 \label{eq:ui}
 \end{equation}
With Fourier modes restricted to those with wavevectors in~$\Gamma$, not all symmetries 
in~$E(2)$ are possible, \red{in particular, only rotations that preserve the (quasi)lattice~$\Gamma$ are permitted}. Those that are allowed
act on the basic Fourier functions as follows:
 \begin{align*}
 \mathbf{T}_{\bdelta}(e^{i\bk_{j}\cdot \bx}) &= 
   e^{i\bk_{j}\cdot (\bx-\bdelta)}, \\
 \mathbf{R}_{\frac{\pi}{3}}(e^{i\bk_{1}\cdot \bx},
 \dots,
 e^{i\bk_{6}\cdot \bx}) &=
   (e^{-i\bk_{3}\cdot \bx},e^{-i\bk_{1}\cdot \bx},e^{-i\bk_{2}\cdot \bx},e^{-i\bk_{6}\cdot \bx},e^{-i\bk_{4}\cdot \bx},e^{-i\bk_{5}\cdot \bx}), \\
 \mathbf{\tau }(e^{i\bk_{1}\cdot \bx},\dots,
 e^{i\bk_{6}\cdot\bx}) &=
   (e^{i\bk_{4}\cdot \bx},e^{i\bk_{6}\cdot\bx},e^{i\bk_{5}\cdot \bx},e^{i\bk_{1}\cdot\bx},e^{i\bk_{3}\cdot \bx},e^{i\bk_{2}\cdot\bx}).
 \end{align*}
This leads to a representation of the symmetries acting on the six
complex amplitudes~$z_j$ as
 \begin{align}
 \mathbf{T}_{\bdelta}:
 (z_{1},\dots,z_{6}) &\mapsto 
 \left(z_{1}e^{-i\bk_{1}\cdot \bdelta},
       z_{2}e^{-i\bk_{2}\cdot \bdelta},
       z_{3}e^{-i\bk_{3}\cdot \bdelta},
       z_{4}e^{-i\bk_{4}\cdot \bdelta},
       z_{5}e^{-i\bk_{5}\cdot \bdelta},
       z_{6}e^{-i\bk_{6}\cdot \bdelta}\right),  \nonumber \\
 \mathbf{R}_{\frac{\pi}{3}}:
 (z_{1},\dots,z_{6}) &\mapsto 
 \left({\bar{z}}_{2}, {\bar{z}}_{3}, {\bar{z}}_{1},
       {\bar{z}}_{5}, {\bar{z}}_{6}, {\bar{z}}_{4}\right), \label{eq:actsym} \\
 \mathbf{\tau}:
 (z_{1},\dots,z_{6}) &\mapsto 
 (z_{4},z_{6},z_{5},z_{1},z_{3},z_{2}).  \nonumber
 \end{align}
We will use these symmetries, as well as the ``hidden symmetries'' in $E(2)$~\cite{Crawford1994,Dawes2003,Dionne1997}, to restrict the form of the formal power series
for the amplitudes~$z_j$.


\section{Formal power series for solutions}\label{sec:formal}
In this section, we look for amplitude equations for solutions of~\cref{eq:SHeq}, expressed in the form of
a formal power series of the following type
 \begin{equation}
 u(\bx) = \sum_{n\geq 1}v_{n}(\bx),
 \quad 
 \mu =\sum_{n\geq 1}\mu_{n},
 \label{eq:formal}
 \end{equation}
where $v_{n}$ and $\mu_{n}$ are real. As in~\cite{Iooss2019}, the leading order
part~$v_{1}$ of a solution~$u$ satisfies
 \begin{equation*}
 \mathbf{L}_{0}v_{1}=0,
 \end{equation*}
where the linear operator $\mathbf{L}_{0}$ is defined by
 \begin{equation*}
 \mathbf{L}_{0}=(1+\Delta)^{2},
 \end{equation*}
so that $v_{1}$ lies in the kernel of~$\mathbf{L}_{0}$. Our twelve chosen
wavevectors~$\pm\bk_j$ all have length~1, so $\mathbf{L}_{0}e^{\pm
i\bk_j\cdot\bx}=0$, and we can write~$v_1$ as a linear combination of these
waves as in~\cref{eq:ui}.

Higher order terms are written concisely using multi-index notation: let 
$\bp=(p_1,\dots,p_6)$ and 
$\bpp=(p'_1,\dots,p'_6)$, where $p_j$ and $p'_j$ are non-negative integers,
and define
 \begin{equation*}
 \bz^\bp = z_1^{p_1} z_2^{p_2} z_3^{p_3} z_4^{p_4} z_5^{p_5} z_6^{p_6}
 \quad\text{and}\quad
 {\bar\bz}^\bpp = {\bar z}_1^{p'_1} {\bar z}_2^{p'_2} {\bar z}_3^{p'_3} {\bar z}_4^{p'_4} {\bar z}_5^{p'_5} {\bar z}_6^{p'_6}.
 \end{equation*}
We also take $|\bp|=p_1+\dots+p_6$ and $|\bpp|=p'_1+\dots+p'_6$.
\red{Each order~$n$ means a corresponding degree in monomials $\mathbf{z}^{p}\mathbf{\bar z}^{p'}$ with  $n=|p|+|p'|$}, so we look for $v_{n}$ and $\mu_{n}$ of the form 
 \begin{equation}
   v_{n}(\bx) = \sum_{|\bp|+|\bpp|=n} \bz^\bp {\bar\bz}^\bpp v_{\bp,\bpp}(\bx)
 \quad\text{and}\quad
 \mu_{n}      = \sum_{|\bp|+|\bpp|=n} \bz^\bp {\bar\bz}^\bpp \mu_{\bp,\bpp}.
 \label{eq:umuppp}
 \end{equation}
Here, $\mu_{\bp,\bpp}$ are constants and $v_{\bp,\bpp}(\bx)$ are functions
made up of sums of modes of order $n=|\bp|+|\bpp|$, such that 
 \begin{equation*}
 \left\langle v_{\bp,\bpp},e^{\pm{i}\bk_{j}\cdot \bx}\right\rangle_{0} = 0,
 \quad\text{for $n>1$ and $j=1,\dots,6$.}
 \end{equation*}

Writing~\cref{eq:SHeq} as
 \begin{equation}
 \mathbf{L}_{0}u=\mu u-\chi u^{2}-u^{3}  \label{eq:SH}
 \end{equation}
and replacing $u$ and $\mu$ by their expansions~\cref{eq:formal} and~\cref{eq:umuppp}, we project the PDE~\cref{eq:SHeq} onto the kernel and the range of~$\mathbf{L}_{0}$. \red{Solving~\cref{eq:SH} is equivalent to solving the projection of~\cref{eq:SH} onto the kernel 
together with the projection of~\cref{eq:SH} onto the orthogonal complement of the kernel.
 Notice that for the quasipattern case the range is not closed, so that the projection on the range is in fact a projection onto the orthogonal complement of the kernel.}
The operator~$\mathbf{L}_{0}$ is self adjoint, so the left hand side of~\cref{eq:SH} is
orthogonal to the kernel of~$\mathbf{L}_{0}$:
 $\langle\mathbf{L}_{0}u,e^{\pm{i}\bk_{j}\cdot\bx}\rangle_{0}=
  \langle{u},\mathbf{L}_{0}e^{\pm{i}\bk_{j}\cdot\bx}\rangle_{0}=0$ for any~$u$. 
\red{In fact, for any given degree $n>1$, the right hand side of~\cref{eq:SH}
is a finite Fourier series, and eliminating the part lying in the kernel gives
a remaining series with Fourier modes $e^{i\bk\cdot\bx}$, with $\bk\in\Gamma$
apart from $\{\pm \bk_{j},j=1,\dots,6\}$. For these modes we have $|\bk|\neq 1$ since
$\alpha\in\EO$. Then, the operator $\mathbf{L}_{0}$ has a formal pseudo-inverse
on its range that is orthogonal to the kernel of~$\mathbf{L}_{0}$. This
pseudo-inverse is a bounded operator in
any~$\mathcal{H}_{s}$ when $\alpha\in\EO\cap\Ep$, since in the periodic case, 
nonlinear modes are on a lattice~$\Gamma$ and are
bounded away from the unit circle. However, the pseudo-inverse is unbounded when
$\alpha\in\Eqp$ as a result of the presence of small divisors
(see~\cite{Iooss2019}). But, for a formal computation of the power
series~\cref{eq:umuppp}, we only need at each order to pseudo-invert
a \emph{finite} Fourier series, which is always possible
provided that~$\alpha\in\EO$. Solving the range equation allows us to get
$\mathbf{Q}_0 u$, which is the part of $u$ orthogonal to the kernel, as functions
of $(v_1,\mu)$, with $v_1$ given by~\cref{eq:ui}. Taking the series obtained by
solving the range equation (formally in the quasipattern case),  and replacing
them in the kernel equation (6 complex components), leads to}
 \begin{equation}
 0 = \mu z_{j} - P_{j}(\chi,\mu,z_{1},\dots,z_{6},{\bar{z}}_{1},\dots,{\bar{z}}_{6}),
 \label{eq:bifurcation}
 \end{equation}
where $j=1$, \dots, $6$ and
 \begin{equation*}
 P_{j}(\chi,\mu,z_{1},\dots,z_{6},{\bar{z}}_{1},\dots,{\bar{z}}_{6})
 = 
 \left\langle \chi u^{2} + u^{3}, e^{i\bk_{j}\cdot\bx}\right\rangle_{0},
 \end{equation*}
where~$u$ here is thought of as a function of $\bz$ and $\bar\bz$ through the 
formal power series~\cref{eq:formal} and the expansion~\cref{eq:umuppp}. The dependency in $\mu$ of $P_{j}$ occurs at orders at least $\mu |z_{j}|^3$.

Expanding $P_{j}$ in powers of
$(\mu,z_{1},\dots,z_{6},{\bar{z}}_{1},\dots,{\bar{z}}_{6})$ results in a convergent
power series in the periodic case (the $P_j$ functions are analytic in some
ball around the origin), but in general these power series are not convergent
in the quasiperiodic case. Nonetheless, the formal power series are useful in
the proof of existence of the corresponding quasipatterns.

We can now use the symmetries of the problem to investigate the structure of
the bifurcation equation~\cref{eq:bifurcation}. The equivariance
of~\cref{eq:SH} under the translations~$\mathbf{T}_{\bdelta}$ \red{and its propagation onto the bifurcation equation, using~\cref{eq:actsym},} leads to
 \begin{equation}
 e^{i\bk_{1}\cdot\bdelta}
 P_{1}(\chi,\mu,z_{1}e^{-i\bk_{1}\cdot\bdelta},\dots,
     {\bar{z}}_{6}e^{i\bk_{6}\cdot\bdelta})
 = P_{1}(\chi,\mu,z_{1},\dots,{\bar{z}}_{6}).
 \label{eq:equivariance_translation}
 \end{equation}
A typical monomial in~$P_{1}$ has the form~$\bz^\bp{\bar\bz}^\bpp$, so 
let us define 
 \begin{align*}
  n_{1} &= p_{1}-p_{1}^{\prime} - 1, \quad
 &n_{2} &= p_{2}-p_{2}^{\prime}, \quad
 &n_{3} &= p_{3}-p_{3}^{\prime}, \\
  n_{4} &= p_{4}-p_{4}^{\prime}, \quad
 &n_{5} &= p_{5}-p_{5}^{\prime}, \quad
 &n_{6} &= p_{6}-p_{6}^{\prime}.
 \end{align*}
Then, a monomial appearing in~$P_{1}$ should 
satisfy~\cref{eq:equivariance_translation}, which leads to
 \begin{equation*}
 n_{1}\bk_{1} + n_{2}\bk_{2} + n_{3}\bk_{3} + 
 n_{4}\bk_{4} + n_{5}\bk_{5} + n_{6}\bk_{6} = 0,
 \end{equation*}
and, since $\bk_{1}+\bk_{2}+\bk_{3}=0$ and 
           $\bk_{4}+\bk_{5}+\bk_{6}=0$, we obtain
 \begin{equation}
 (n_{1}-n_{3})\bk_{1} + 
 (n_{2}-n_{3})\bk_{2} + 
 (n_{4}-n_{6})\bk_{4} + 
 (n_{5}-n_{6})\bk_{5} = 0,  \label{eq:ident_kj}
 \end{equation}
which is valid in all cases (periodic or not).

In the quasilattice case, the wave vectors $\bk_{1}$, $\bk_{2}$, $\bk_{4}$ and
$\bk_{5}$ are rationally independent, so \cref{eq:ident_kj} implies $n_{1}=n_{2}=n_{3}$ and
$n_{4}=n_{5}=n_{6}$, which leads to monomials of the form
 \begin{align*}
 & z_{1}
 u_{1}^{p_{1}^{\prime }}u_{2}^{p_{2}^{\prime }}u_{3}^{p_{3}^{\prime}}
 u_{4}^{p_{4}^{\prime }}u_{5}^{p_{5}^{\prime }}u_{6}^{p_{6}^{\prime}}
 {{q}}_{1}^{n_{1}} {{q}}_{4}^{n_{4}} 
 &\text{for $n_{1}\geq0$ and $n_{4}\geq0$},
 \\
 & z_{1}
 u_{1}^{p_{1}^{\prime }}u_{2}^{p_{2}^{\prime }}u_{3}^{p_{3}^{\prime}}
 u_{4}^{p_{4}}u_{5}^{p_{5}}u_{6}^{p_{6}}
 {{q}}_{1}^{n_{1}} {\bar{q}}_{4}^{|n_{4}|} 
 &\text{for $n_{1}\geq0$ and $n_{4}<0$},
 \\
 & {\bar{z}}_{2}{\bar{z}}_{3}
 u_{1}^{p_{1}}u_{2}^{p_{2}}u_{3}^{p_{3}}
 u_{4}^{p_{4}^{\prime}}u_{5}^{p_{5}^{\prime}}u_{6}^{p_{6}^{\prime}}
 {\bar{q}}_{1}^{|n_{1}|-1} {{q}}_{4}^{n_{4}} 
 &\text{for $n_{1}<0$ and $n_{4}\geq0$},
 \\
 & {\bar{z}}_{2}{\bar{z}}_{3}
 u_{1}^{p_{1}}u_{2}^{p_{2}}u_{3}^{p_{3}}
 u_{4}^{p_{4}}u_{5}^{p_{5}}u_{6}^{p_{6}}
 {\bar{q}}_{1}^{|n_{1}|-1} {\bar{q}}_{4}^{|n_{4}|}
 &\text{for $n_{1}<0$ and $n_{4}<0$},
 \end{align*}
where we define
 \begin{equation*}
 u_{j}=z_{j}{\bar{z}}_{j},
 \quad
 q_{1}=z_{1}z_{2}z_{3}
 \quad\text{and}\quad
 q_{4}=z_{4}z_{5}z_{6}.
 \end{equation*}
Then, the quasilattice case gives the following structure for $P_{1}$:
 \begin{equation}
 P_{1}(\chi,\mu,z_{1},\dots,{\bar{z}}_{6})
 = 
 z_{1}f_{1}(\chi,\mu,u_{1},\dots,u_{6},q_{1},q_{4},{\bar{q}}_{4})
 + {\bar{z}}_{2}{\bar{z}}_{3}
    f_{2}(\chi,\mu, u_{1},\dots,u_{6},{\bar{q}}_{1},q_{4},{\bar{q}}_{4}),
 \label{eq:QP_case}
 \end{equation}
where $f_{1}$ and $f_{2}$ are power series in their arguments. We deduce the
five other components of the bifurcation equation by using the equivariance
under symmetries $\mathbf{R}_{\frac{\pi}{3}}$, $\mathbf{\tau }$, and $\mathbf{S}$
(changing $\chi$ to~$-\chi$), observing that
 \begin{align*}
 \mathbf{R}_{\frac{\pi}{3}}:
 (u_{1},u_{2},u_{3},u_{4},u_{5},u_{6},q_{1},q_{4})
 &\mapsto
 (u_{2},u_{3},u_{1},u_{5},u_{6},u_{4},{\bar{q}}_{1},{\bar{q}}_{4}),\\
 \mathbf{\tau}:
 (u_{1},u_{2},u_{3},u_{4},u_{5},u_{6},q_{1},q_{4})
 &\mapsto
 (u_{4},u_{6},u_{5},u_{1},u_{3},u_{2},q_{4},q_{1}),\\
 \mathbf{S}:
 ( \chi,u_{1},u_{2},u_{3},u_{4},u_{5},u_{6},q_{1},q_{4})
 &\mapsto
 (-\chi,u_{1},u_{2},u_{3},u_{4},u_{5},u_{6},-q_{1},-q_{4}).
\end{align*}
Equivariance under symmetry~$\mathbf{R}_{\pi}$, which changes $z_{j}$ into $\bar{z_{j}}$, gives the following property of functions $f_{j}$ in~\cref{eq:QP_case}
\begin{eqnarray*}
    f_{1}(\chi,\mu,u_{1},\dots,u_{6},\bar{q_{1}},\bar{q_{4}},q_{4})
 &=& \bar{f_{1}}(\chi,\mu, u_{1},\dots,u_{6},q_{1},q_{4},{\bar{q}}_{4}),\\
 f_{2}(\chi,\mu,u_{1},\dots,u_{6},q_{1},\bar{q_{4}},q_{4})
 &=& \bar{f_{2}}(\chi,\mu, u_{1},\dots,u_{6},\bar{q_{1}},q_{4},{\bar{q}}_{4}).
\end{eqnarray*}
It follows that the coefficients in $f_1$ and in $f_2$ are \emph{real}.
Equivariance under symmetry $\mathbf{S}$ leads to the property that in~\cref{eq:QP_case} $f_1$ and $f_2$ are respectively even and odd in $(\chi,q_1,q_4)$.

In the periodic case, when $\alpha\in\Ep$, we deduce from \cref{app:proof_of_periodic_case}
that $P_{1}(\chi,z_{1},\dots,{\bar{z}}_{6})$ may be
written as
 \begin{align}
 & z_{1}f_{3}(\chi,\mu,u_{1},\dots,u_{6},
              q_{1},q_{4},{\bar{q}}_{4},
              q_{l,k},{\bar q}_{l,k}) + {} 
 \nonumber \\
 & \quad {} + {\bar{z}}_{2}{\bar{z}}_{3}
        f_{4}(\chi,\mu, u_{1},\dots,u_{6},
              {\bar{q}}_{1},q_{4},{\bar{q}}_{4},
              q_{l,k},{\bar q}_{l,k}) +{}
 \label{eq:struct_P1_periodic_case} \\
 & \quad {} + \sum_{s,t} q_{s,t}^{\prime}
                         f_{s,t}(\chi,\mu,u_{1},\dots,u_{6},
                                 q_{1},{\bar{q}}_{1},q_{4},{\bar{q}}_{4},
                                 q_{l,k},{\bar q}_{l,k}),
 \nonumber
 \end{align}
where the monomials $q_{l,k}$, $l=I,II,III,IV,V,VI,VII,VIII,IX$, and 
$k=1,2,3$, are defined in \cref{app:proof_of_periodic_case}, the functions $f_{j}$ depend
on all arguments $q_{l,k}$ and ${\bar q}_{l,k}$, and the monomials 
$q_{s,t}^{\prime }$, $s=IV,V,VI,VII,VIII,IX$, $t=1,2,3$, are defined by
 \begin{equation*}
 q_{s,t}^{\prime }=\frac{{\bar q}_{s,t}}{{\bar{z}}_{1}}.
 \end{equation*}
We observe that the ``exotic'' terms with lowest degree 
in~\cref{eq:struct_P1_periodic_case} 
have degree $2a-1$, which is at least of 5th order, since $a\geq3$. 
Moreover, the symmetries act as indicated in \cref{app:proof_of_periodic_case}.

\section{Solutions of the bifurcation equations}\label{sec:solutions}
The strategy for proving existence of solutions of the PDE~\cref{eq:SHeq} is 
first to find solutions of the amplitude equations $P_j(\chi,z_1,\dots,{\bar
z}_6)=\mu z_{j}$ truncated at some order, and then to use an appropriate implicit
function theorem to show that there is a corresponding solution to the PDE,
using the results of~\cite{Iooss2019} in the quasiperiodic case.
\red{We refer the reader to \cref{tab:RoadMap} for a summary of the solutions
we find. The main ones are periodic and quasiperiodic versions of equal
amplitude superpositions of hexagons (\emph{super-hexagons}, for any~$\chi$),
unequal amplitude superpositions of hexagons (unequal super-hexagons, $|\chi|\ll1$ only), and
superpositions of hexagons and rolls (\emph{hexa-rolls}, $\chi$~not too large).}

\subsection{Truncation to cubic order}\label{sec:truncationCubic}

Let us first consider the terms up to cubic order for $P_{1}$. In the 
periodic case, where we notice that $a\geq 3$, and in the quasiperiodic case, 
we find the same equation:
 \begin{equation*}
 P_{1}^{(3)}=\alpha _{0}{\bar{z}}_{2}{\bar{z}}_{3}+z_{1}
 \sum_{j=1}^{6} \alpha _{j}u_{j}.
 \end{equation*}
We compute coefficients $\alpha_{j}$, $j=0,\dots,6$ from (see \cref{app:form_of_the_cubic_part})
 \begin{equation*}
 \mu z_{1} = P_{1}^{(3)} = 
   \chi \langle v_{1}^{2},e^{i\bk_{1}\cdot \bx}\rangle + 
        \langle v_{1}^{3},e^{i\bk_{1}\cdot \bx}\rangle -
   2\chi^{2}\langle v_{1}\widetilde{\mathbf{L}_{0}}^{-1}\mathbf{Q}_{0}v_{1}^{2},
                    e^{i\bk_{1}\cdot\bx}\rangle 
\end{equation*}
where $u=v_1$~\cref{eq:ui} at leading order, the scalar product is the one of $\mathcal{H}_0$, $\mathbf{Q}_{0}$ is the orthogonal projection on the range of $\mathbf{L}_{0}$, $\widetilde{\mathbf{L}_{0}}$ being the restriction of $\mathbf{L}_{0}$
on its range, the inverse of which is the pseudo-inverse of $\mathbf{L}_{0}$, \red{as explained above in \cref{sec:formal}, as $\mathbf{Q}_{0}v_{1}^{2}$ has a finite Fourier series.} The higher orders \red{(at increasing orders)} are uniquely
determined from the infinite dimensional part of the problem, provided that
$\alpha\in\EO$, they start from order at least $|v_1|^{4}$.

It is straightforward to check that
\begin{align*} 
\alpha_{0}&=2\chi,\\
\alpha_{1}&=3-\chi^2 c_1,\\
\alpha_{2}&=\alpha_{3}=6-\chi^2 c_2,\\
\alpha_{4}&=6-\chi^2 c_{\alpha},\\
\alpha_{5}&=6-\chi^2 c_{\alpha+},\\
\alpha_{6}&=6-\chi^2 c_{\alpha-},
\end{align*}
where $c_1$, $c_2$ are constants and $c_{\alpha}$, $c_{\alpha+}$ and $c_{\alpha-}$ are \emph{real functions} of~$\alpha$ (\red{real because of the equivariance under $\mathbf{R}_{\pi}$}, see the detailed computation in \cref{app:form_of_the_cubic_part}).
Hence we have the bifurcation system, written up to cubic order in~$z_{j}$
\begin{align}
2\chi \overline{z_{2}}\overline{z_{3}} &=z_{1}[\mu
-\alpha_{1}u_{1}-\alpha_{2}(u_{2}+u_{3})-\alpha_{4 }u_{4}-\alpha_{5}u_{5}-\alpha_{6
}u_{6}]  \nonumber \\
2\chi \overline{z_{1}}\overline{z_{3}} &=z_{2}[\mu
-\alpha_{1}u_{2}-\alpha_{2}(u_{1}+u_{3})-\alpha_{4 }u_{5}-\alpha_{5}u_{6}-\alpha_{6
}u_{4}]  \nonumber \\
2\chi \overline{z_{1}}\overline{z_{2}} &=z_{3}[\mu
-\alpha_{1}u_{3}-\alpha_{2}(u_{1}+u_{2})-\alpha_{4 }u_{6}-\alpha_{5}u_{4}-\alpha_{6
}u_{5}]  \label{eq:bifurcequcub} \\
2\chi \overline{z_{5}}\overline{z_{6}} &=z_{4}[\mu
-\alpha_{1}u_{4}-\alpha_{2}(u_{5}+u_{6})-\alpha_{4 }u_{1}-\alpha_{5}u_{3}-\alpha_{6
}u_{2}]  \nonumber \\
2\chi \overline{z_{4}}\overline{z_{6}} &=z_{5}[\mu
-\alpha_{1}u_{5}-\alpha_{2}(u_{4}+u_{6})-\alpha_{4 }u_{2}-\alpha_{5}u_{1}-\alpha_{6
}u_{3}]  \nonumber \\
2\chi \overline{z_{4}}\overline{z_{5}} &=z_{6}[\mu
-\alpha_{1}u_{6}-\alpha_{2}(u_{4}+u_{5})-\alpha_{4 }u_{3}-\alpha_{5}u_{2}-\alpha_{6
}u_{1}].  \nonumber
\end{align}
It remains to find all small solutions of these six equations and check whether 
they are affected by including further higher order terms.

Before proceeding, we note that in the periodic case ($\alpha\in\Ep\cap\EO$), the equivariant branching lemma can be used to find some bifurcating branches of patterns~\cite{Dionne1997}. 
In the case $\chi\neq0$, where there is no $\mathbf{S}$ symmetry, these 
branches are called:
 \begin{align*}
 \text{Super-hexagons:}\quad & 
          z_1 = z_2 = z_3 = z_4 = z_5 = z_6 \in \mathbb{R},\\
 \text{Simple hexagons:}\quad &
          z_1 = z_2 = z_3\in\mathbb{R}, \quad z_4 = z_5 = z_6 = 0,\\
 \text{Rolls (stripes):}\quad & 
          z_1\in\mathbb{R}, \quad z_2 = z_3 = z_4 = z_5 = z_6 = 0,\\
 \text{Rhombs$_{1,4}$:}\quad &
          z_1 = z_4\in\mathbb{R}, \quad z_2 = z_3 = z_5 = z_6 = 0,\\
 \text{Rhombs$_{1,5}$:}\quad &
          z_1 = z_5\in\mathbb{R}, \quad z_2 = z_3 = z_4 = z_6 = 0,\\
 \text{Rhombs$_{1,6}$:}\quad &
          z_1 = z_6\in\mathbb{R}, \quad z_2 = z_3 = z_4 = z_5 = 0,
 \end{align*}
where the conditions on the $z_j$'s give examples of each type of solution.
When $\chi=0$ and there is $\mathbf{S}$~symmetry, there are additional 
branches:
 \begin{align}
 \text{Anti-hexagons:}\quad & 
          z_1 = z_2 = z_3 = -z_4 = -z_5 = -z_6 \in \mathbb{R},\nonumber\\
 \text{Super-triangles:}\quad & 
          z_1 = z_2 = z_3 = z_4 = z_5 = z_6 \in \mathbb{R}i,\nonumber\\
 \text{Anti-triangles:}\quad & 
          z_1 = z_2 = z_3 = -z_4 = -z_5 = -z_6 \in \mathbb{R}i,\label{eq:antihexagonsetc}\\
 \text{Simple triangles:}\quad &
          z_1 = z_2 = z_3\in\mathbb{R}i, \quad z_4 = z_5 = z_6 = 0,\nonumber\\
 \text{Rhombs$_{1,2}$:}\quad & 
          z_1 = z_2\in\mathbb{R}, \quad z_3 = z_4 = z_5 = z_6 = 0.\nonumber
 \end{align}
For $(a,b)=(3,2)$, it is known that there are additional branches of the form
$|z_1|=\dots=|z_6|$, with $\arg(z_1)=\dots=\arg(z_6)\approx\pm\frac{\pi}{3}$
and $\arg(z_1)=\dots=\arg(z_6)\approx\pm\frac{2\pi}{3}$, where the amplitude
and phases of the modes are determined at fifth order~\cite{Silber1998}.
We recover all these solutions below for all $\alpha\in\Ep\cap\EO$, with the addition of a new branch, 
consisting of a superposition of hexagons and rolls, for example with 
$z_1,z_2,z_3,z_4\neq0$ and $z_5=z_6=0$. This new kind of 
solution exists in both the periodic and quasiperiodic cases, but only exists if $\alpha_1$, $\alpha_2$, $\alpha_4$, $\alpha_5$, and~$\alpha_6$ satisfy certain inequalities (true if $\chi$ is not too large). This new solution 
cannot be found using the equivariant branching 
lemma since it does not live in a one-dimensional space fixed by a symmetry subgroup (though see also~\cite{Matthews2003a}). 

We will focus below primarily on the new types of solutions: superposition of two hexagon patterns and superposition of hexagons and rolls, but even in the quasiperiodic case, there are branches of periodic patterns. These include rolls, simple hexagons, rhombs \hbox{etc.}, and can be found even with $\alpha\in \mathcal{E}_{qp}$. But, since they involve only a reduced set of wavevectors that can be accommodated in periodic domains, there is no need for the quasiperiodic techniques of~\cite{Iooss2019} in these cases.

\subsection{Super-hexagons: superposition of two hexagonal patterns} \label{sec:superp_hexa}

In the case $q_{1}q_{4}\neq 0$ (all six amplitudes are non-zero), we multiply
each equation in~\cref{eq:bifurcequcub} by the appropriate ${\bar z}_{j}$ to
obtain at cubic order
\begin{align}
2\chi \overline{q_1} &=u_{1}[\mu
-\alpha_{1}u_{1}-\alpha_{2}(u_{2}+u_{3})-\alpha_{4 }u_{4}-\alpha_{5}u_{5}-\alpha_{6
}u_{6}]  \nonumber \\
2\chi \overline{q_1} &=u_{2}[\mu
-\alpha_{1}u_{2}-\alpha_{2}(u_{1}+u_{3})-\alpha_{4 }u_{5}-\alpha_{5}u_{6}-\alpha_{6
}u_{4}]  \nonumber \\
2\chi \overline{q_1} &=u_{3}[\mu
-\alpha_{1}u_{3}-\alpha_{2}(u_{1}+u_{2})-\alpha_{4 }u_{6}-\alpha_{5}u_{4}-\alpha_{6
}u_{5}]  \label{eq:cubictwohex} \\
2\chi \overline{q_4} &=u_{4}[\mu
-\alpha_{1}u_{4}-\alpha_{2}(u_{5}+u_{6})-\alpha_{4 }u_{1}-\alpha_{5}u_{3}-\alpha_{6
}u_{2}]  \nonumber \\
2\chi \overline{q_4} &=u_{5}[\mu
-\alpha_{1}u_{5}-\alpha_{2}(u_{4}+u_{6})-\alpha_{4 }u_{2}-\alpha_{5}u_{1}-\alpha_{6
}u_{3}]  \nonumber \\
2\chi \overline{q_4} &=u_{6}[\mu
-\alpha_{1}u_{6}-\alpha_{2}(u_{4}+u_{5})-\alpha_{4 }u_{3}-\alpha_{5}u_{2}-\alpha_{6
}u_{1}].  \nonumber
\end{align}
This implies that $q_{1}$ and $q_{4}$ are real \red{since the $u_j$'s and the coefficients are real}, and shows that
 \begin{equation*}
 u_{1} = u_{2} = u_{3}
 \quad\text{and}\quad
 u_{4} = u_{5} = u_{6}
 \end{equation*}
is always a possible solution.

There are other possible solutions, particularly when $\chi$ is close to zero. Such solutions are difficult to find in general as they involve solving six coupled cubic equations. Furthermore, other solutions at cubic order might not give solutions when we consider higher order terms in the bifurcation system~\cref{eq:bifurcation}. Considering these further is beyond the scope of this paper.

To solve~\cref{eq:cubictwohex} with $u_{1}=u_{2}=u_{3}$ and
$u_{4}=u_{5}=u_{6}$, and with $q_1$ and $q_4$ real, let us set
 \begin{align}
 z_{j} &=\varepsilon e^{i\theta_{j}}\text{ for }j=1,2,3,\quad
 \varepsilon>0, \quad
 \Theta_{1}=\theta_{1}+\theta_{2}+\theta_{3}=k\pi, \nonumber \\
 z_{j} &=\delta e^{i\theta_{j}}\text{ for }j=4,5,6,\quad
 \delta>0, \quad
 \Theta _{4}=\theta_{4}+\theta_{5}+\theta_{6}=k'\pi,
 \label{eq:hexa-hexa3}
 \end{align}
where $k$ and $k'$ are integers,
so $u_1=u_2=u_3=\varepsilon^2$, $u_4=u_5=u_6=\delta^2$, $\overline{q_1}=\varepsilon^3e^{-i\Theta_1}=\varepsilon^3(-1)^k$ and $\overline{q_4}=\delta^3e^{-i\Theta_4}=\delta^3(-1)^{k'}$.
Then, for $\varepsilon\delta>0$ we have only 2 equations
 \begin{eqnarray*}
 2\chi \varepsilon (-1)^k &=&
 \mu -(\alpha _{1}+2\alpha_{2})\varepsilon ^{2}-(\alpha _{4}+\alpha _{5}+\alpha _{6})\delta ^{2}, \\
 2\chi \delta (-1)^{k'} &=&\mu -(\alpha _{1}+2\alpha _{2})\delta
^{2}-(\alpha _{4}+\alpha _{5}+\alpha _{6})\varepsilon ^{2}.
 \end{eqnarray*}
It follows that 
 \begin{equation}
 2\chi \left(\varepsilon (-1)^k - \delta (-1)^{k'}\right)
 =[(\alpha_{4}+\alpha_{5}+\alpha_{6}) - (\alpha_{1}+2\alpha_{2})]
 (\varepsilon^{2}-\delta ^{2}).
 \label{eq:id1}
 \end{equation}
Hence $\left(\varepsilon (-1)^k-\delta(-1)^{k'}\right)$ is a factor in~\cref{eq:id1}, and there are two types of solutions, depending on whether this factor is zero or not.

\paragraph{{Equal amplitude super-hexagons}}
We first consider the case where the factor is zero; it follows that
\begin{equation*}
\delta =\varepsilon>0 \quad\text{and}\quad
k=k'=0\text{ or }1\text{,}
\end{equation*}
and
\begin{equation}
\mu =2\chi \varepsilon (-1)^{k}+(\alpha _{1}+2\alpha _{2}+\alpha _{4}+\alpha
_{5}+\alpha _{6})\varepsilon ^{2}, \label{eq:solu1}
\end{equation}
or equivalently,
\begin{equation*}
\mu =2\chi \varepsilon (-1)^{k}+(33-\chi^2(c _{0}+2c _{1}+c _{\alpha}+c
_{\alpha+}+c _{\alpha-}))\varepsilon ^{2}.  
\end{equation*}
\red{We call these solutions super-hexagons in the periodic case, as in~\cite{Dionne1997},
and \emph{QP-super-hexagons} in the quasiperiodic case.}
Notice that when $|\chi|$ is not too \red{large}, the coefficient of $\varepsilon^2$ is positive, and $k$ is set by the relative signs of $\mu$ and~$\chi$. For  $|\chi|\ll\varepsilon$, the bifurcation is supercritical ($\mu>0$).

\paragraph{{Unequal amplitude super-hexagons}}
If the factor is non-zero, this implies 
 \begin{equation*}
 \varepsilon(-1)^k\neq \delta(-1)^{k'}, 
 \text{  i.e.,  } 
 \delta \neq \varepsilon,\text{  or  } (-1)^k \neq (-1)^{k'}.
 \end{equation*}
Dividing~\cref{eq:id1} by the non-zero factor leads to
 \begin{equation*}
 2\chi =C\left(\varepsilon(-1)^k+\delta(-1)^{k'}\right),
 \end{equation*}
with
 \begin{equation*}
 C\overset{def}{=}(\alpha _{4}+\alpha _{5}+\alpha _{6})-(\alpha _{1}+2\alpha_{2}).
 \end{equation*}
This leads to the non-degeneracy condition $C\neq 0$, and to the fact that this 
\red{unequal amplitude} solution is valid only for $|\chi|$ close to~0. 
The assumption on $C$ is satisfied for most values of~$\chi$ since 
 \begin{equation*}
 C=3-\chi ^{2}(c_{\alpha }+c_{\alpha +}+c_{\alpha-}-c_{1}-2c_{2}).
 \end{equation*}
Hence, for $|\chi|$ close enough to 0, we find new solutions parameterized by~$\varepsilon>0$ and~$k$:
 \begin{equation}
 \delta =\left[\frac{2\chi }{3}-\varepsilon (-1)^{k}\right](-1)^{k^{\prime}} + \mathcal{O}(\chi ^{3}).\label{eq:solu2a}
 \end{equation}
Here $k$ may be 0 or 1 and $k'$ is chosen so that $\delta>0$.
At leading order in $(\varepsilon,\chi)$, we have
 \begin{equation}
 \mu =33\varepsilon ^{2}-22\chi \varepsilon (-1)^{k}+8\chi ^{2}.
 \label{eq:solu2}
 \end{equation}
\red{The solutions are unequal ($\delta\neq\varepsilon$, with $\chi\neq0$) superpositions of hexagons,
so we call them unequal super-hexagons and unequal QP-super-hexagons in the periodic
and quasiperiodic cases.}

The next step is to show that these solutions to the cubic amplitude equations persist as solutions of the bifurcation equations~\cref{eq:bifurcation} once higher order terms are considered. This is simpler in the quasiperiodic case as there are no resonant higher order terms to consider.

\subsubsection{Quasipattern cases: Higher orders} \label{sec:quasipat_case_high_orders}

In this case wave vectors $\bk_{1}$, $\bk_{2}$, $\bk_{4}$ and $\bk_{5}$ are
rationally independent. Using the symmetries, the general form of the
six-dimensional bifurcation equation is deduced from~\cref{eq:QP_case}
and~\cref{eq:hexa-hexa3}, which gives two real bifurcation equations,
\red{where functions $f_{j}$ are formal power series in their arguments}:
 \begin{align}
 \mu  &= f_{1}(\chi,\mu, \varepsilon^{2}, \varepsilon ^{2}, \varepsilon^{2}, \delta^{2}, \delta^{2}, \delta^{2}, \varepsilon^{3}(-1)^{k}, \delta^{3}(-1)^{k'}) + {} \nonumber \\
 &\quad{}+\varepsilon(-1)^{k} f_{2}(\chi,\mu, \varepsilon^{2}, \varepsilon^{2}, \varepsilon^{2}, \delta^{2}, \delta^{2}, \delta^{2}, \varepsilon^{3}(-1)^{k}, \delta^{3}(-1)^{k'}),  \label{eq:syst_hexa-hexa} \\
 \mu  &= f_{1}(\chi,\mu, \delta^{2}, \delta^{2}, \delta^{2}, \varepsilon^{2}, \varepsilon^{2}, \varepsilon^{2}, \delta^{3}(-1)^{k'}, \varepsilon^{3}(-1)^{k}) + {} \nonumber \\
 &\quad{}+\delta(-1)^{k'} f_{2}(\chi,\mu, \delta^{2}, \delta^{2}, \delta^{2}, \varepsilon^{2}, \varepsilon^{2}, \varepsilon^{2}, \delta^{3}(-1)^{k'}, \varepsilon^{3}(-1)^{k}).  \nonumber
\end{align}

\paragraph{{Equal amplitude QP-super-hexagons}}

It is clear that we still have solutions with
 \begin{equation*}
 \varepsilon (-1)^{k}=\delta (-1)^{k'},\text{ i.e., } \varepsilon=\delta>0, k=k',
\end{equation*}
which leads to a single equation
 \begin{align}
 \mu &= f_{1}(\chi,\mu, \varepsilon^{2}, \varepsilon^{2}, \varepsilon^{2}, \varepsilon^{2}, \varepsilon^{2}, \varepsilon^{2}, \varepsilon^{3}(-1)^{k}, \varepsilon^{3}(-1)^{k}) + {} \nonumber \\
 &\quad{} + \varepsilon(-1)^{k} f_{2}(\chi,\mu, \varepsilon^{2}, \varepsilon^{2}, \varepsilon^{2}, \varepsilon^{2}, \varepsilon^{2}, \varepsilon^{2}, \varepsilon^{3}(-1)^{k}, \varepsilon ^{3}(-1)^{k}), 
 \label{eq:hexa-hexa_a}
 \end{align}
which may be solved with respect to~$\mu$ by the implicit function theorem \red{adapted for use with formal power series: we use the implicit function theorem for analytic functions, suppressing the proof of convergence for the series}.
This gives a formal power series in~$\varepsilon$, the leading order terms being~\cref{eq:solu1}.

\red{Following the process used in section~3 of~\cite{Iooss2019} for solving the range equation (the projection of~\cref{eq:SHeq} on the orthogonal complement of  $\ker \mathbf{L}_0$, with $z_{j}=\varepsilon e^{i\theta_{j}}$, $\Theta_{1}=\Theta_{4}=0$), we need typically to take $(\varepsilon,\mu,\chi)$  in a ``good set'' of parameters, where the Diophantine conditions of \cref{app:definitions_of_all_sets} are useful. Then the bifurcation equation~\cref{eq:hexa-hexa_a} may be solved by the usual implicit function theorem.  Checking that at the end the parameters lie in the ``good set'' needs a ``transversality condition,'' which is the same as in~\cite{Iooss2019}. The solution finally is proved to exist in a union of disjoint intervals for~$\varepsilon$, going to full measure as~$\varepsilon$ goes to~0.}
\begin{remark} \red{In the case of a quasiperiodic lattice, for all formal solutions found below in the form of a power series of some amplitudes, the proof of existence of a true solution follows the same lines as above. So we shall not repeat the argument.}
\end{remark}

\paragraph{{Unequal amplitude QP-super-hexagons}}

Now, assuming that $\varepsilon (-1)^{k}\neq\delta(-1)^{k'}$, and
taking the difference between the two equations in~\cref{eq:syst_hexa-hexa}, we find (simplifying the notation):
\begin{eqnarray*}
0 &=& f_{1}(\chi,\mu, \varepsilon^{2}, \delta^{2}, \varepsilon^{3}(-1)^{k}, \delta ^{3}(-1)^{k'})-f_{1}(\chi,\mu,\delta
^{2},\varepsilon ^{2},\delta ^{3}(-1)^{k'},\varepsilon
^{3}(-1)^{k})+{} \\
&&\quad{}+\varepsilon (-1)^{k}f_{2}(\chi,\mu,\varepsilon ^{2},\delta ^{2},\varepsilon
^{3}(-1)^{k},\delta ^{3}(-1)^{k'})-\delta (-1)^{k^{\prime
}}f_{2}(\chi,\mu,\delta ^{2},\varepsilon ^{2},\delta ^{3}(-1)^{k^{\prime
}},\varepsilon ^{3}(-1)^{k})
\end{eqnarray*}
where we can simplify by the factor $\varepsilon (-1)^{k}-\delta
(-1)^{k'}$. The leading terms are 
\begin{equation*}
0=2\chi -C(\varepsilon (-1)^{k}+\delta (-1)^{k'}),
\end{equation*}
as in the cubic truncation,
showing again that these solutions are only valid for $\chi$ close to~0. It
is then clear that provided that $C\neq 0$, which holds for $\chi $ close to
zero, the system formed by this last equation, with the first one of~\cref{eq:syst_hexa-hexa}, may be solved with respect to $\delta$ and~$\mu$ using the \red{formal} implicit function theorem \red{(as above, since the solution given by the principal part is not degenerate)} to obtain a formal
power series in $(\varepsilon,\chi)$,  their
leading order terms being given in~\cref{eq:solu2a}, \cref{eq:solu2}.
We notice that there are four degrees of freedom, with the
values of $\theta_{1}$, $\theta_{2}$, $\theta_{4}$ and $\theta_{5}$ being arbitrary. We also notice
that we have two possible amplitudes depending on the parity of~$k$. All
these bifurcating solutions correspond to the superposition of hexagonal
patterns of unequal amplitude, where the change in $\theta_{j}$, $j=1,2,4,5$ correspond to a shift of each
pattern in the plane.

For both types of solution, we have thus proved that there are formal power series solutions of~\cref{eq:SH},
unique up to the allowed indeterminacy on the~$\theta_{j}$, of the form~\cref{eq:hexa-hexa3}. This does not prove that all solutions take the form~\cref{eq:hexa-hexa3}.
We can state

\begin{figure}
\begin{center}
\includegraphics[width=0.6\hsize]{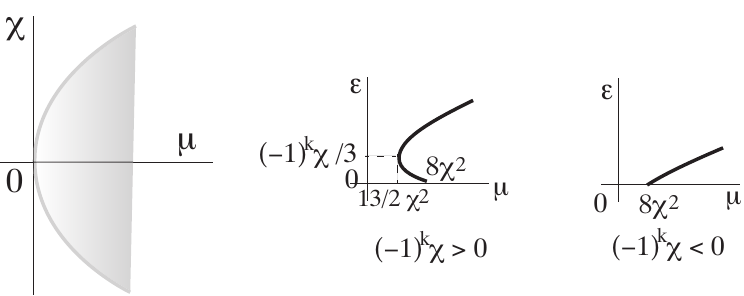}
\end{center}
\caption{Domain of existence (shaded) of bifurcating unequal amplitude QP-super-hexagons, for small~$|\chi|$. 
 These solutions only bifurcate from $\mu=0$ when $\chi=0$.}
\label{fig:bifdiag2}
\end{figure}

\begin{figure}
\begin{center}
\includegraphics[width=0.3\hsize]{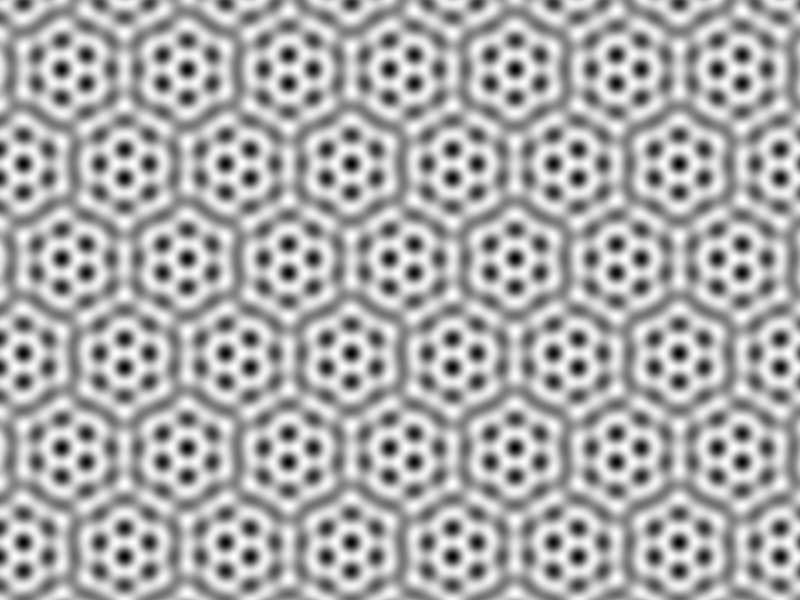}
\includegraphics[width=0.3\hsize]{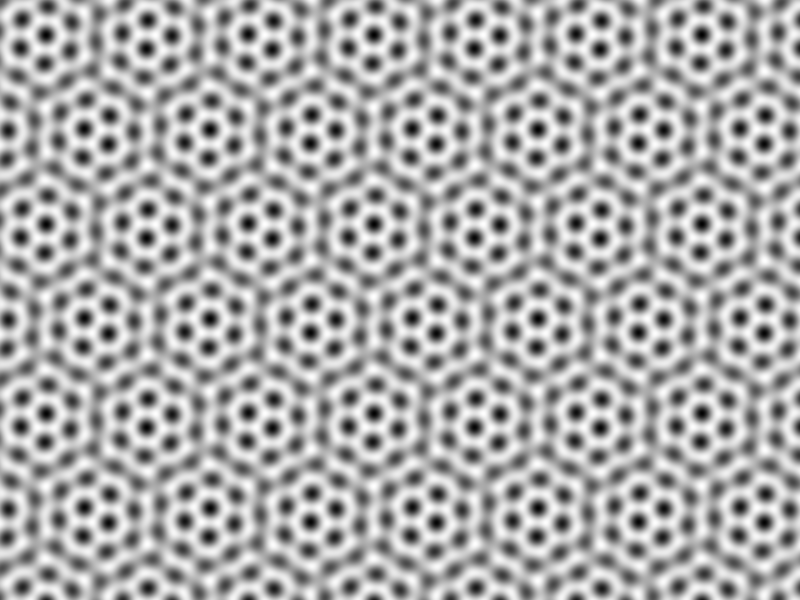}
\includegraphics[width=0.3\hsize]{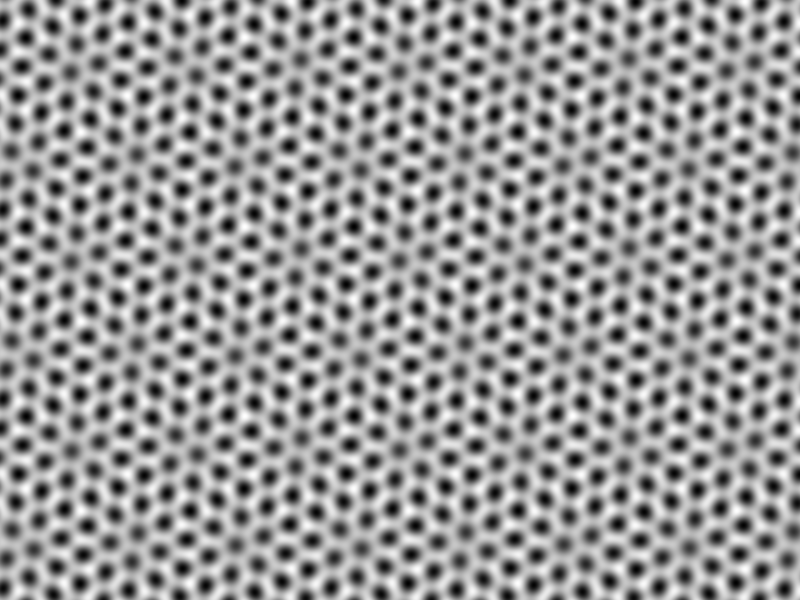}\\[10pt]
\includegraphics[width=0.3\hsize]{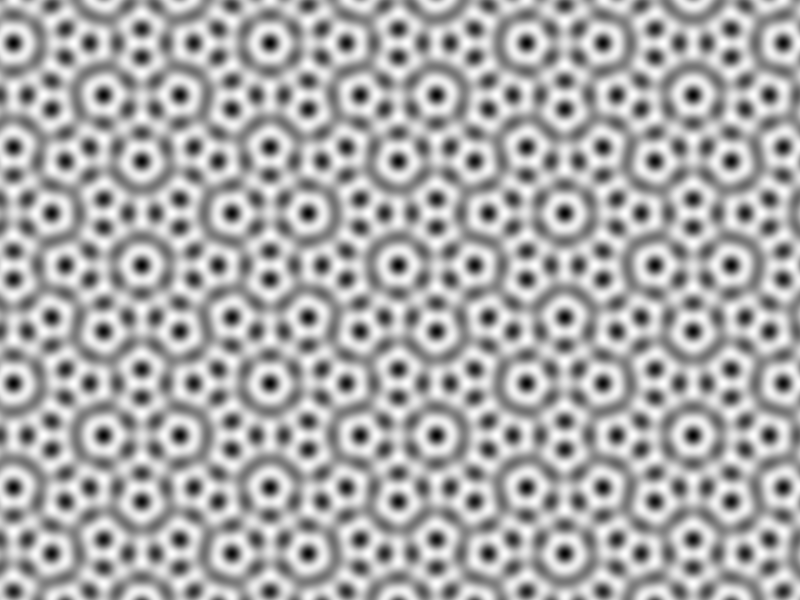}
\includegraphics[width=0.3\hsize]{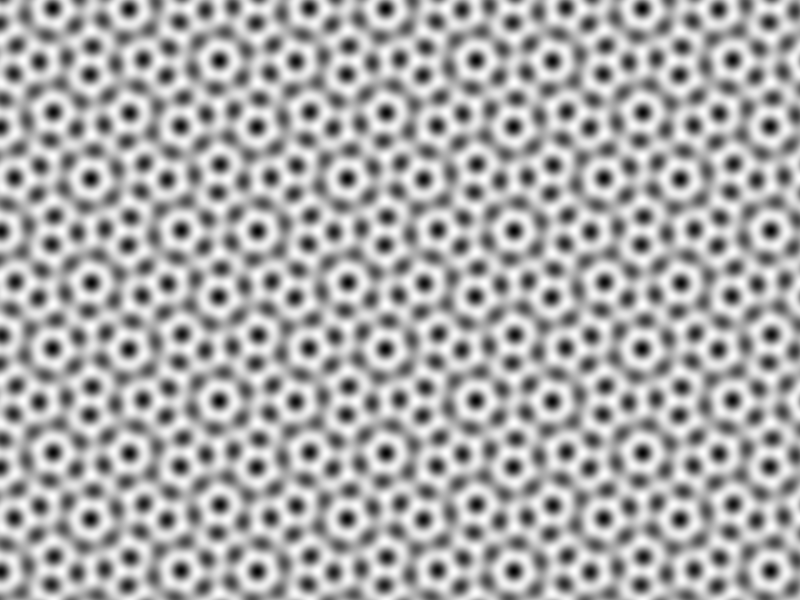}
\includegraphics[width=0.3\hsize]{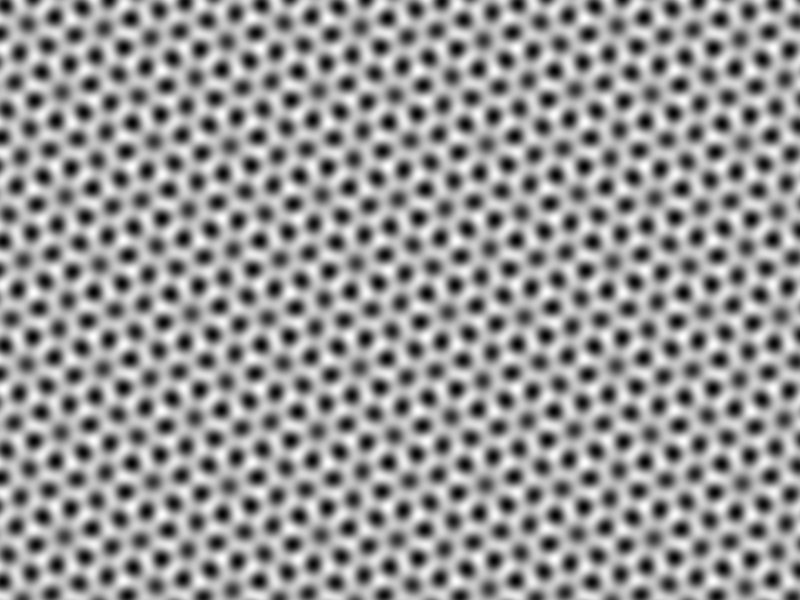}
\end{center}
 \caption{Examples of quasipatterns: superposition of hexagons.
 Top row: $\alpha=\frac{\pi}{12}=15^{\circ}$;
 bottom row: $\alpha=25.66^{\circ}$ ($\cos\alpha=\frac{1}{4}\sqrt{13}$).
 Left: \red{equal amplitude QP-super-hexagons};
 center and right: \red{unequal amplitude QP-super-hexagons},
 with $k=k'$ (center) and $k=k'+1$ (right).}
\label{fig:example_twohex_patterns}
\end{figure}

\begin{theorem}[Quasiperiodic superposed hexagons]
\label{thm:superposed_hexagons} Assume $\alpha\in\EO\cap\Eqp$, then for $\varepsilon,\chi $ fixed, we can build a four-parameter formal power series solution of~\cref{eq:SH} of the form
 \begin{align}
 u(\varepsilon,\chi,k,\Theta) &= \varepsilon u_{1} + \sum_{n\geq2} \varepsilon^{n} u_{n}(\chi, k, \Theta),\quad
 \varepsilon>0, \quad
 u_{n}\bot e^{i\bk_{j}\cdot\bx},\quad j=1,\dots,6,  \label{eq:asympexp_1} \\
 \mu (\varepsilon ,\chi ,k) &= (-1)^{k} 2\chi \varepsilon + \mu_2(\chi)\varepsilon^{2} + \sum_{n\geq 3}\varepsilon ^{n}\mu_{n}(\chi ,k),\text{ }k=0,1,  \nonumber\\
 \text{with}\quad
 u_{1} &= \sum_{j=1,\dots,6} e^{i(\bk_{j}\cdot\bx+\theta_{j})}+c.c.,\quad
 \Theta =(\theta _{1},\dots,\theta _{6}),   \nonumber \\
 \mu_2(\chi) &= 33 - \chi^2 (c_{1} + 2c_{2} + c_{\alpha}+c_{\alpha+}+c _{\alpha-})  \nonumber \\
 \theta_{1}+\theta_{2}+\theta_{3} &= k\pi, \quad
 \theta_{4}+\theta_{5}+\theta_{6} = k'\pi, \quad 
 k=k'=0,1  \nonumber \\
 u_{n}(-\chi,k,\Theta) &= (-1)^{n+1} u_{n}(\chi,k,\Theta),
 \quad
 \mu_{n}(-\chi,k) = (-1)^{n} \mu_{n}(\chi ,k). \nonumber
 \end{align}
\red{These are the equal amplitude QP-super-hexagons.}
Moreover, for a range of $(\mu,\chi)$ close to 0 (see \cref{fig:bifdiag2}),
\red{there are in addition} two \red{unequal amplitude QP-super-hexagon} solutions (for $k=0,1$),
given by
 \begin{align}
 u(\varepsilon ,\chi ,k,\Theta ) &=\varepsilon u_{10} + \delta u_{11} + 
 \sum_{m+p\geq 2} \varepsilon^{m}\chi^{p}u_{mp}(k,\Theta),
 \quad
 \varepsilon>0,\delta>0, \nonumber\\
 & \quad 
 u_{mp}\bot e^{i\bk_{j}\cdot\bx},\quad j=1,\dots,6,
 \quad
 u\text{ odd in }(\varepsilon ,\chi ),
 \nonumber\\[2pt]
 u_{10} &= \sum_{j=1,2,3}e^{i(\bk_j\cdot\bx+\theta_j)}+c.c.,
 \quad
 u_{11} = \sum_{j=4,5,6}e^{i(\bk_j\cdot\bx+\theta_{j})}+c.c.,
 \label{eq:hexa-hexa_exp_2}\\
 \theta_{1}+\theta_{2}+\theta_{3} &= k\pi, \quad
 k=0,1,\quad 
 \theta_{4}+\theta_{5}+\theta_{6} = k'\pi, \quad 
 k'=0,1\text{\ determined below},  \nonumber \\
 \delta (\varepsilon ,\chi ,k) &=
 (-1)^{k'}\left\{\frac{2\chi}{3} - (-1)^{k}\varepsilon +
 \sum_{m+p\geq 2}\varepsilon^{m}\chi^{p}\delta_{mp}(k)
 \right\},
 \quad
 (-1)^{k'}\delta \text{ odd in }((-1)^k\varepsilon ,\chi ), \nonumber\\
 \mu (\varepsilon ,\chi ,k) &=  33\varepsilon^{2}-22(-1)^{k}\varepsilon \chi + 8\chi ^{2} + 
 \sum_{m+p\geq 3}\varepsilon^{m}\chi ^{p}\mu_{mp}(k),
 \quad
 \mu \text{ even in }((-1)^k\varepsilon ,\chi ).\nonumber
 \end{align}
In the expression for~$\delta$, $k'$ is chosen so that $\delta>0$.
For either type of solution, changing $\theta_1,\theta_2,\theta_{4},\theta_{5}$
corresponds to translating each hexagonal pattern arbitrarily. \Cref{fig:example_twohex_patterns} shows examples of~$u_1$ for the two types of superposed hexagon quasipatterns, for two values of~$\alpha$.

Then, for $\alpha\in\Eii$, which is included in $\EO\cap\Eqp$, and using the same proof as in~\cite{Iooss2019}, both types of bifurcating quasipattern solutions of~\cref{eq:SHeq} are proved to
exist. The \red{equal amplitude QP-super-hexagons} have 
asymptotic expansion~\cref{eq:asympexp_1}, provided that
$\varepsilon$ is small enough, and the \red{unequal amplitude QP-super-hexagons} have
asymptotic expansion~\cref{eq:hexa-hexa_exp_2}, provided that
$\varepsilon,\chi$ are small enough.
\end{theorem}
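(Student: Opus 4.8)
The plan is to work entirely within the reduced two-equation system already derived in~\cref{eq:syst_hexa-hexa}, to extract both families as formal power series by the (formal) analytic implicit function theorem, to track the parities forced by the symmetry~$\mathbf{S}$, and finally to invoke the Nash--Moser scheme of~\cite{Iooss2019} to upgrade the formal series to genuine solutions of~\cref{eq:SHeq} when $\alpha\in\Eii$. First I would insert the ansatz~\cref{eq:hexa-hexa3} into the full six-component bifurcation equation~\cref{eq:bifurcation}. Because of the structure~\cref{eq:QP_case} of~$P_1$ (and the analogous expressions for $P_2,\dots,P_6$ obtained from the equivariances under $\mathbf{R}_{\frac{\pi}{3}}$, $\tau$ and $\mathbf{S}$), the phase factors collapse: each $P_j$ reduces to $z_j f_1 + \bar z_{j'}\bar z_{j''}f_2$ evaluated at the $u_\ell$-values $\varepsilon^2$ or $\delta^2$ and at $q_1=\varepsilon^3(-1)^k$, $q_4=\delta^3(-1)^{k'}$, both real. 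The constraints $\Theta_1=k\pi$ and $\Theta_4=k'\pi$ are precisely what makes $q_1,q_4$ real, so the twelve equations collapse to the two real equations~\cref{eq:syst_hexa-hexa}. The six phases $\theta_j$ then enter only through these two sums, leaving four of them free; this is the source of the four-parameter family and of the translation (phason) freedom.

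For the equal-amplitude family I would set $\varepsilon=\delta$, $k=k'$, which reduces~\cref{eq:syst_hexa-hexa} to the single equation~\cref{eq:hexa-hexa_a}. Since there is no constant term in $\mu$ and the coefficient of $\varepsilon^2$ is $\mu_2(\chi)=33-\chi^2(c_1+2c_2+c_\alpha+c_{\alpha+}+c_{\alpha-})$, which is nonzero for $|\chi|$ not too large, the right-hand side equals $\mu+\mathcal{O}(\varepsilon)$, and the formal implicit function theorem solves for $\mu$ as a power series in~$\varepsilon$ with leading terms~\cref{eq:solu1}. The parity statements $u_n(-\chi,k,\Theta)=(-1)^{n+1}u_n(\chi,k,\Theta)$ and $\mu_n(-\chi,k)=(-1)^n\mu_n(\chi,k)$ then follow from equivariance under~$\mathbf{S}$: the substitution $(\chi,q_1,q_4)\mapsto(-\chi,-q_1,-q_4)$ together with $u\mapsto-u$ forces $f_1$ even and $f_2$ odd in $(\chi,q_1,q_4)$, and this propagates to the claimed parities of the series in $(\varepsilon,\chi)$.

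For the unequal-amplitude family I would keep $\varepsilon(-1)^k\neq\delta(-1)^{k'}$ and subtract the two equations in~\cref{eq:syst_hexa-hexa}. The difference is divisible by the factor $\varepsilon(-1)^k-\delta(-1)^{k'}$, and after dividing, the leading balance is $0=2\chi-C\bigl(\varepsilon(-1)^k+\delta(-1)^{k'}\bigr)$ with $C=3-\mathcal{O}(\chi^2)\neq0$. Combined with the first equation of~\cref{eq:syst_hexa-hexa}, this is a system whose Jacobian in $(\delta,\mu)$ at the leading solution is invertible precisely because $C\neq0$, so the formal implicit function theorem yields $\delta$ and $\mu$ as power series in $(\varepsilon,\chi)$ with leading terms~\cref{eq:solu2a} and~\cref{eq:solu2}. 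The appearance of~$\chi$ at first order confines this branch to a neighborhood of $\chi=0$, and $k'$ is fixed by requiring $\delta>0$; the same $\mathbf{S}$-equivariance gives the asserted oddness of $(-1)^{k'}\delta$ and evenness of $\mu$ in $\bigl((-1)^k\varepsilon,\chi\bigr)$.

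The genuine existence claim for $\alpha\in\Eii$ is where the real difficulty lies, and it is imported from~\cite{Iooss2019}. The truncated formal series supplies the first approximation~$v_1+\cdots+v_N$, and one then solves the range equation (the projection of~\cref{eq:SH} onto the orthogonal complement of $\ker\mathbf{L}_0$) by a Newton/Nash--Moser iteration. The pseudo-inverse of~$\mathbf{L}_0$ is \emph{unbounded} on $\Gamma\setminus\{\pm\bk_j\}$ because $|\bk|$ can be arbitrarily close to~$1$ (small divisors); the Diophantine conditions built into~$\Eii$ (see~\cref{app:definitions_of_all_sets}) control these divisors well enough for the scheme to converge on a ``good set'' of parameters $(\varepsilon,\mu,\chi)$, and the transversality condition of~\cite{Iooss2019} ensures the parameters still lie in the good set at the end, giving a solution on a union of disjoint $\varepsilon$-intervals of measure tending to full as $\varepsilon\to0$. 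I expect this small-divisor step to be the only genuine obstacle: the formal construction is routine once the symmetry reduction to~\cref{eq:syst_hexa-hexa} is in place, whereas convergence of the Newton iteration in the quasiperiodic setting is exactly the difficulty that~\cite{Iooss2019} was designed to overcome, so here it is reused essentially verbatim.
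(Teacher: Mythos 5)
Your proposal is correct and follows essentially the same route as the paper: reduction of the six complex amplitude equations to the two real equations~\cref{eq:syst_hexa-hexa} via the ansatz~\cref{eq:hexa-hexa3}, the equal-amplitude branch solved for~$\mu$ by the formal implicit function theorem from~\cref{eq:hexa-hexa_a}, the unequal-amplitude branch obtained by subtracting the two equations, dividing by the factor $\varepsilon(-1)^k-\delta(-1)^{k'}$ and invoking $C\neq0$, with the parities deduced from the $\mathbf{S}$-equivariance and the existence for $\alpha\in\Eii$ imported from the Nash--Moser argument of~\cite{Iooss2019}. No substantive differences from the paper's own proof.
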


 \begin{remark}
Symmetries of quasipatterns are hard to write down precisely~\cite{Baake2012} since the arbitrary relative position of the two hexagonal patterns may mean that there is no point of rotation symmetry or line of reflection symmetry. Nonetheless, with $\varepsilon=\delta$, the first type of solution is symmetric `on average' under rotations by $\frac{\pi}{3}$ and reflections conjugate to~$\mathbf{\tau}$. In fact the 4 parameter family of solutions is globally invariant under symmetries $\mathbf{R}_{\pi/3}$ and $\mathbf{\tau}$. 
Notice that, for the \red{unequal amplitude QP-super-hexagon solutions}, the reflection symmetry
$\mathbf{\tau}$ exchanges $(k,\varepsilon)$ with $(k',\delta)$.
 \end{remark}

\begin{figure}
\begin{center}
\includegraphics[width=0.3\hsize]{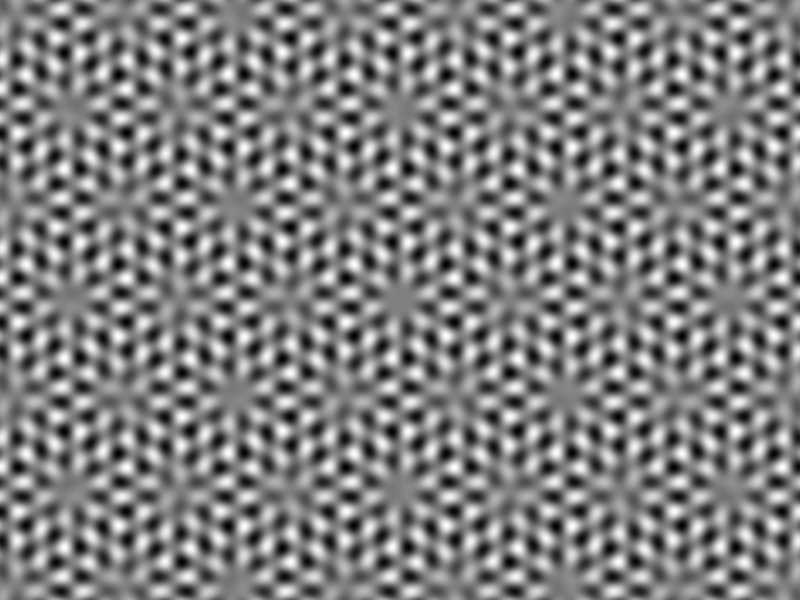}
\includegraphics[width=0.3\hsize]{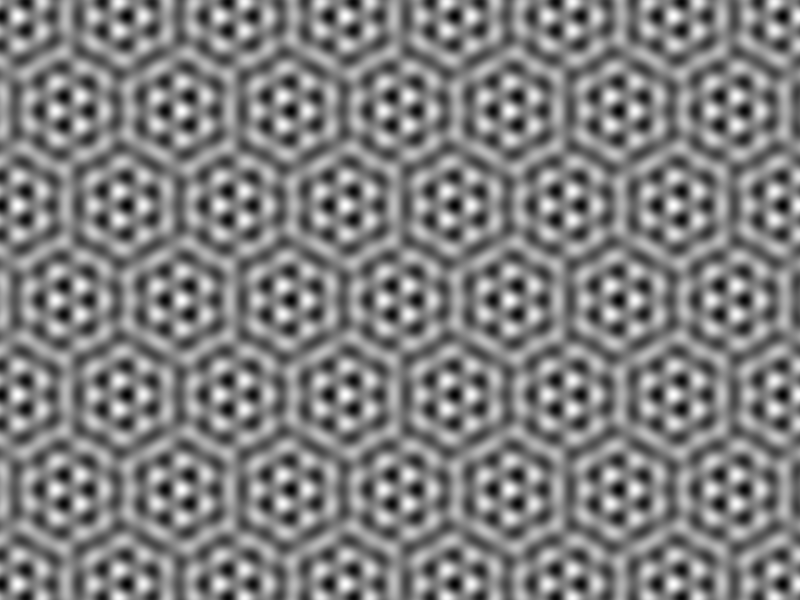}
\includegraphics[width=0.3\hsize]{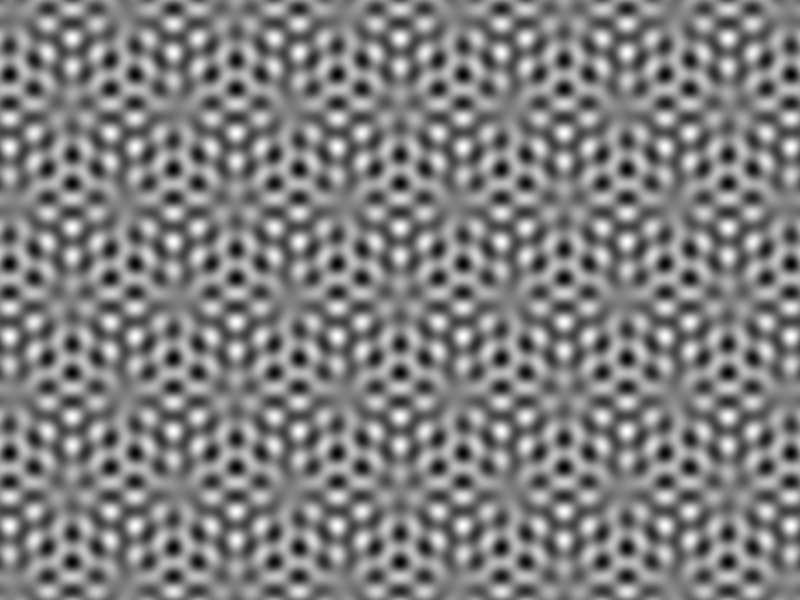}\\[10pt]
\includegraphics[width=0.3\hsize]{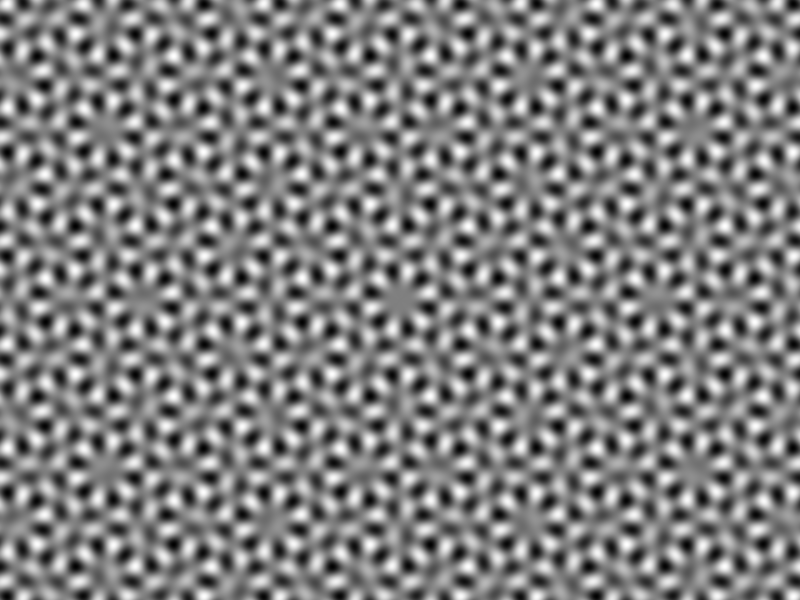}
\includegraphics[width=0.3\hsize]{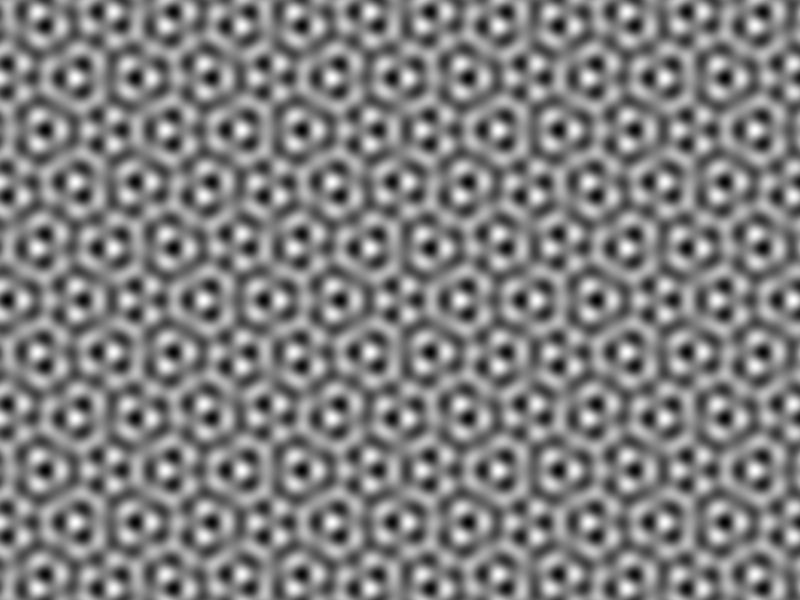}
\includegraphics[width=0.3\hsize]{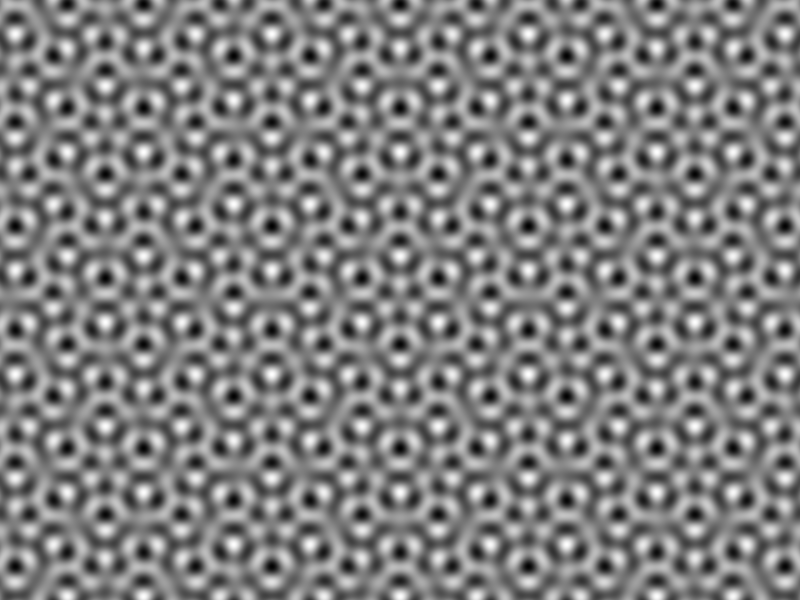}
\end{center}
\caption{Examples of quasipatterns: superposition of hexagons with $\chi=0$. 
 Top row: $\alpha=\frac{\pi}{12}=15^{\circ}$;
 bottom row: $\alpha=25.66^{\circ}$ ($\cos\alpha=\frac{1}{4}\sqrt{13}$). 
 Left: \red{QP-}anti-hexagons; center: \red{QP-}super-triangles; right: \red{QP-}anti-triangles.}
\label{fig:example_chizero_patterns}
\end{figure}

\begin{remark}
Let us observe that \red{equal amplitude QP-super-hexagons} for $\theta_{j}=0$, $j=1,\dots,6$
were already obtained for $\chi=0$ in~\cite{Iooss2019}.

In the case $\chi=0$, the \red{unequal amplitude} solutions do not exist.
The original system~\cref{eq:SHeq} is equivariant under the
symmetry $\mathbf{S}$, which
implies that in~\cref{eq:QP_case},  $f_{1}$ and $f_{2}$ are
respectively even and odd in $(q_{1},q_{4})$. For $\varepsilon =\delta $ the
bifurcation system reduces to two equations of the form
 \begin{eqnarray*}
 \mu  &=&
   f_{1}(\mu,\varepsilon ^{2},q_{1},q_{4}) + \varepsilon e^{-i\Theta_{1}}f_{2}(\mu,\varepsilon ^{2},q_{1},q_{4}) \\
 \mu  &=& 
   f_{1}(\mu,\varepsilon ^{2},q_{4},q_{1})+\varepsilon e^{-i\Theta_{4}}f_{2}(\mu,\varepsilon ^{2},q_{4},q_{1}),
\end{eqnarray*}
and we may observe new quasipattern solutions, illustrated in \cref{fig:example_chizero_patterns}.
\red{The names here are analagous to the related periodic patterns~\cite{Dionne1997}.}

\emph{QP-anti-hexagons} are obtained for (also obtained in \cite{Iooss2019}) 
\begin{eqnarray*}
\theta _{j} &=&0,\qquad j=1,2,3, \\
\theta _{j} &=&\pi ,\qquad j=4,5,6,
\end{eqnarray*}
which leads to
\begin{eqnarray*}
e^{-i\Theta _{1}} &=&1,\text{ }e^{-i\Theta _{4}}=-1, \\
q_{1} &=&\varepsilon ^{3}=-q_{4},
\end{eqnarray*}
and the parity properties of $f_{j}$ give only one bifurcation equation
\begin{equation*}
\mu =f_{1}(\mu,\varepsilon ^{2},\varepsilon ^{3},-\varepsilon ^{3})+\varepsilon
f_{2}(\mu,\varepsilon ^{2},\varepsilon ^{3},-\varepsilon ^{3}).
\end{equation*}

\emph{QP-super-triangles} are obtained for
\begin{equation*}
\theta _{j}=\pi /2,\qquad j=1,\dots,6,
\end{equation*}
which leads to
\begin{eqnarray*}
e^{-i\Theta _{1}} &=&e^{-i\Theta _{4}}=i, \\
q_{1} &=&-i\varepsilon ^{3}=q_{4},
\end{eqnarray*}
and it is clear that we have only one real bifurcation equation, with evenness (resp. oddness) with respect to the two last arguments of $f_1$ (resp. $f_2$) leading to
\begin{equation*}
\mu =f_{1}(\mu,\varepsilon ^{2},-i\varepsilon ^{3},-i\varepsilon
^{3})+i\varepsilon f_{2}(\mu,\varepsilon ^{2},-i\varepsilon ^{3},-i\varepsilon
^{3}). 
\end{equation*}

\emph{QP-anti-triangles} are obtained for
\begin{eqnarray*}
\theta _{j} &=&\pi /2\qquad j=1,2,3, \\
\theta _{j} &=&-\pi /2,\qquad j=4,5,6,
\end{eqnarray*}
which leads to
\begin{eqnarray*}
e^{-i\Theta _{1}} &=&i,\text{ }e^{-i\Theta _{4}}=-i, \\
q_{1} &=&-i\varepsilon ^{3}=-q_{4},
\end{eqnarray*}
and the parity properties of $f_{j}$ give only one real bifurcation equation
\begin{equation*}
\mu =f_{1}(\mu,\varepsilon ^{2},-i\varepsilon ^{3},i\varepsilon
^{3})+i\varepsilon f_{2}(\mu,\varepsilon ^{2},-i\varepsilon ^{3},i\varepsilon
^{3}). 
\end{equation*}
All these cases lead to series for $u$ and $\mu$, respectively odd and even in $\varepsilon$, and hence quasiperiodic anti-hexagons, super-triangles and anti-triangles in~\cref{eq:SHeq} for~$\alpha\in\Eii$ and for $\chi=0$. \red{Using the same arguments as above, we can say that these QP-anti-hexagons etc.\ are solutions of the \hbox{PDE} with $\chi=0$.}
\end{remark}

\subsubsection{Periodic case: Higher orders}
\label{sec:periodic_case_high_orders}

In this case we have more resonant terms in the bifurcation equation, as
seen in~\cref{eq:struct_P1_periodic_case}. \red{These resonant terms introduce relations
between the phases of the complex amplitudes, so the periodic superposed hexagon solutions
come in two-parameter, rather than four-parameter, families.} 
We consider here only the
\red{equal amplitude solutions}, with $\varepsilon=\delta$, but
even in this case there are two sub-types of solutions: \red{super-hexagon solutions, 
and \emph{triangular superlattice} solutions,
where the phase relationships depend on amplitude. The triangular superlattice solutions we
find are generalizations of those found by~\cite{Silber1998}; the name
comes from the triagular appearance of the $(a,b)=(3,2)$ version of this
periodic pattern (see \cref{fig:fw_examples}a and~\cite{Kudrolli1998}).}

\paragraph{Super-hexagons}
We notice that, in setting
 \begin{equation*}
 z_{j} =\varepsilon e^{i\theta _{j}},\quad
 \varepsilon>0, \quad j=1,\dots,6
 \end{equation*}
and taking
 \begin{equation}
 \theta _{1}=\theta _{2}=\theta _{3}=-\theta _{4}=-\theta _{5}=-\theta
 _{6}=k\frac{\pi}{3}  \label{eq:arguments_periodic}
 \end{equation}
we have \ $q_{1}=q_{4}=(-1)^{k}\varepsilon ^{3}$ and we can check  that the nine sets $G_{j}$ of invariant monomials satisfy (see \cref{app:proof_of_periodic_case})
\begin{align*}
G_1&=\varepsilon^{2a}, &
G_2&=G'_2=\varepsilon^{3a-b}e^{i(a+b)k\pi }, &
G_3&=G'_3=\varepsilon^{2a+b}e^{ibk\pi }, \\
G_4&=\varepsilon ^{4a-2b}, &
G_5&=G'_5=\varepsilon^{3a}e^{iak\pi }, &
G_6&=\varepsilon^{2a+2b},
\end{align*}
all these monomials being real.
In \cref{app:proof_of_periodic_case} we show that each group on the same line above is invariant under the actions of $\mathbf{R}_{\pi/3}$ and $\mathbf{\tau}$. It then follows that the system of bifurcation equations reduces to only one equation with real coefficients, as in the quasiperiodic case for the first solutions.
We have now a solution of the form
\begin{eqnarray*}
z_{1} &=&z_{2}=z_{3}=\varepsilon e^{i\theta }, \\
z_{4} &=&z_{5}=z_{6}=\varepsilon e^{-i\theta },\text{ }\theta =k\frac{\pi}{3},\text{
\ }k=0,\dots,5.
\end{eqnarray*}
The conclusion is that the power series starting as in~\cref{eq:solu1}  for $\mu $ in terms of $\varepsilon $
is still valid for the periodic case (the modifications occuring at high
order), provided we restrict the choice of arguments $\theta_{j}$ 
as~\cref{eq:arguments_periodic}. We show in~\cref{app:translations} that 
solutions with $k=0,2,4$ or with $k=1,3,5$ may be obtained from one of them, in acting a suitable translation $\mathbf{T}_{\mathbf{\delta}}$. It follows that we only find two different bifurcating patterns, corresponding to opposite signs of~$\mu$. Moreover, we notice that the solution obtained for $k=0$ is changed into the solution obtained for $k=3$ by acting the symmetry $\mathbf{S}$ on it, and changing $\chi$ into $-\chi$.
Finally, notice that
 since the Lyapunov--Schmidt method applies in this case, the series
converges, for $\varepsilon $ small enough. The above solutions have arguments $\theta_{j}=0$ or $\pi$ that do not depend on parameters~$(\mu,\chi)$; these solutions correspond to super-hexagons. 

\paragraph{Triangular superlattice solutions}
Now, in \cite{Silber1998} other solutions were found for $(a,b)=(3,2)$, just taking into account of terms of order five in the bifurcation system. Let us show that these solutions exist indeed for any $(a,b)$ and taking into account of all resonant terms.

Let us consider the particular cases with
\begin{equation*}
z_{j}=\varepsilon e^{i\theta },
\end{equation*}
then the nine sets $G_{j}$ of monomials defined in \cref{app:proof_of_periodic_case} satisfy 
\begin{eqnarray*}
G_{1} &=&\varepsilon ^{2a}e^{i(4b-2a)\theta },\text{ }\mathbf{R}_{\pi
/3}G_{1}=\overline{G_{1}},\text{ }\tau G_{1}=G_{1}, \\
G_{2} &=&G_{2}^{\prime }=\varepsilon ^{3a-b}e^{i(a+b)\theta },\text{ }
\mathbf{R}_{\pi/3}G_{2}=\overline{G_{2}},\text{ }\tau G_{2}=G_{2}, \\
G_{3} &=&\overline{G_{3}^{\prime }}=\varepsilon ^{2a+b}e^{i(2a-b)\theta },
\text{ }\mathbf{R}_{\pi/3}G_{3}=\overline{G_{3}},\text{ }\tau G_{3}=G_{3},
\\
G_{4} &=&\varepsilon ^{4a-2b}e^{i(4a-2b)\theta },\text{ }\mathbf{R}_{\pi
/3}G_{4}=\overline{G_{4}},\text{ }\tau G_{4}=G_{4}, \\
G_{5} &=&\overline{G_{5}^{\prime }}=\varepsilon ^{3a}e^{i(2b-a)\theta },
\text{ \ }\mathbf{R}_{\pi/3}G_{5}=\overline{G_{5}},\text{ }\tau G_{5}=G_{5},
\\
G_{6} &=&\varepsilon ^{2a+2b}e^{i(2a+2b)\theta },\text{ \ }\mathbf{R}_{\pi
/3}G_{6}=\overline{G_{6}},\text{ }\tau G_{6}=G_{6}.
\end{eqnarray*}
Then the first bifurcation equation becomes
\begin{equation}
\mu =f_{3}+\varepsilon e^{-3i\theta }f_{4}+\frac{G_{1}}{\varepsilon ^{2}}
f_{G_{1}}+\frac{G_{2}}{\varepsilon ^{2}}f_{G_{2}}+\frac{\overline{G_{4}}}{
\varepsilon ^{2}}f_{G_{4}}+\frac{\overline{G_{5}}}{\varepsilon ^{2}}
f_{G_{5}}+\frac{\overline{G_{6}}}{\varepsilon ^{2}}f_{G_{6}},\label{eq:bifequper2}
\end{equation}
with all $f_{j}$ functions of $(\chi,\mu,\varepsilon ^{2},\varepsilon
^{3}e^{3i\theta },\varepsilon ^{3}e^{-3i\theta },G_{1},\overline{G_{1}}
,G_{2},\overline{G_{2}},G_{3},\overline{G_{3}},G_{4},\overline{G_{4}},G_{5},
\overline{G_{5}},G_{6},\overline{G_{6}})$.
They have real coefficients, and are invariant under
symmetry $\mathbf{\tau }$, while the arguments are changed into their
complex conjugate by symmetry $\mathbf{R}_{\pi/3}$. It follows that the
bifurcation system reduces to only one complex (because of the occurrence of $\theta$)  equation, where we can express
the unknowns $(\mu,\theta)$ as functions of~$\varepsilon$. Then truncated at cubic order in $(\mu,\varepsilon)$ this equation reads
\begin{equation*}
\mu =f_{3}^{(0)}(\chi,\varepsilon ^{2},\varepsilon ^{3}e^{3i\theta
},\varepsilon ^{3}e^{-3i\theta })+\varepsilon e^{-3i\theta }f_{4}^{(0)}(\chi
,\varepsilon ^{2}), 
\end{equation*}
which is a nice perturbation at order~$\varepsilon^3$ of
the known equation 
\begin{equation*}
    \mu=(\alpha_1+2\alpha_2+\alpha_4+\alpha_5+\alpha_6)\varepsilon^2+2\chi \varepsilon e^{-3i\theta}.
\end{equation*}
This leads to the two types of solutions:
\begin{eqnarray*}
e^{3i\theta } &=&\pm 1, \\
\mu  &=&f_{3}^{(0)}(\chi,\varepsilon ^{2},\pm \varepsilon ^{3},\pm
\varepsilon ^{3})\pm \varepsilon f_{4}^{(0)}(\chi,\varepsilon ^{2}).
\end{eqnarray*}
These solutions are not degenerate, so that, if we consider
the complex equation~\cref{eq:bifequper2}, the implicit function theorem
applies for solving with respect to $(\mu,\theta)$ in \red{convergent} powers series of $\varepsilon$. This gives solutions
of the form
\begin{eqnarray*}
\theta _{l}(\varepsilon ) &=&l\frac{\pi}{3}+\mathcal{O}(\varepsilon ),\text{ }
l=0,1,2,3,4,5 \\
\mu  &=&f_{3}^{(0)}(\chi,\varepsilon ^{2},(-1)^{l}\varepsilon
^{3},(-1)^{l}\varepsilon ^{3})+(-1)^{l}\varepsilon f_{4}^{(0)}(\chi
,\varepsilon ^{2})+\mathcal{O}(\varepsilon ^{4}).
\end{eqnarray*}
Now, we observe that the cases $l=0,3$ lead to a real bifurcation equation, which fixes the argument $\theta=0$ or $\pi$. This recovers the \red{super-hexagon} solutions, already found. The remaining cases are the solutions suggested by \cite{Silber1998} (for $(a,b)=(3,2)$, not including all resonant terms).
Let us sum up the results in the following

\begin{theorem}[Periodic equal amplitude superposed hexagons]
\label{thm:superposed_hexagons_periodic} Assume $\alpha \in \EO\cap
\Ep$, then for $\varepsilon$ small enough, and $
\chi $ fixed, we can build convergent\ power series solutions
of~\cref{eq:SH}, of the form
\begin{align}
u(\varepsilon ,\chi ,k) &=\varepsilon u_{1}+\sum_{n\geq 2}\varepsilon
^{n}u_{n}(\chi ,k),\text{ \ }u_{n}\bot e^{i\bk_{j}\cdot \bx},
\text{ \ }j=1,\dots,6,\text{ \ }n\geq 2  \nonumber \\
\mu (\varepsilon ,\chi ,k) &=(-1)^{k}2\chi \varepsilon +\mu_2(\chi)\varepsilon
^{2}+\sum_{n\geq 3}\varepsilon ^{n}\mu_{n}(\chi ,k),
\label{eq:solution_periodic_case} \\
u_{n}(-\chi ,k) &=(-1)^{n}u_{n}(\chi ,k),\text{ \ }\mu_{n}(-\chi
,k)=(-1)^{n}\mu_{n}(\chi ,k);  \nonumber
\end{align}
where $\mu$ is even in $((-1)^{k} \varepsilon ,\chi)$  and $\mu_2(\chi)$ is defined at \cref{thm:superposed_hexagons}
and such that, 
for \red{super-hexagon} solutions
\begin{equation*}
    u_{1} =\sum_{j=1,\dots,6}e^{i(\bk_{j}\cdot \bx+\theta _{j})}+c.c.,
\text{ }\theta _{1}=\theta _{2}=\theta _{3}=-\theta _{4}=-\theta
_{5}=-\theta _{6}=k\pi,\text{  }k=0,\text{ or }1.
\end{equation*}
For \red{triangular superlattice} solutions, we have
\begin{equation*}
    u_{1} =\sum_{j=1,\dots,6}e^{i(\bk_{j}\cdot \bx+\theta)}+c.c.,
\text{ }\theta(\varepsilon,\chi,k)=k\frac{\pi}{3}+\sum_{n\geq1}\varepsilon^{n}\theta_{n}(\chi,k),
,k=1,2,4,5.
\end{equation*}
\end{theorem}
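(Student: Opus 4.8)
The plan is to produce both families as solutions of the reduced six-component bifurcation equation~\cref{eq:bifurcation}, exploiting the fact that in the periodic case $\alpha\in\EO\cap\Ep$ the Lyapunov--Schmidt reduction yields \emph{convergent} power series. Indeed, since the wave vectors lie on the lattice $\Gamma$ generated by $\bs_1,\bs_2$ (\cref{lem:Ep_ab}), every mode produced by the nonlinearity sits a fixed distance from the unit circle, so the pseudo-inverse $\widetilde{\mathbf{L}_0}^{-1}\mathbf{Q}_0$ is bounded on each $\mathcal{H}_s$ and the functions $f_3,f_4,f_{s,t}$ in the structure~\cref{eq:struct_P1_periodic_case} of $P_1$ are analytic near the origin. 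No small-divisor machinery is needed, in contrast to the quasiperiodic \cref{thm:superposed_hexagons}: it suffices to exhibit the two branches as analytic solutions of the amplitude equations and invoke the analytic implicit function theorem. Throughout I would use the equivariances under $\mathbf{R}_{\pi/3}$, $\tau$ and (for $\chi=0$) $\mathbf{S}$ of~\cref{eq:actsym} to collapse the six equations onto a single scalar equation.

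For the \emph{super-hexagon} branch I would impose the phase constraint~\cref{eq:arguments_periodic}, $\theta_1=\theta_2=\theta_3=-\theta_4=-\theta_5=-\theta_6=k\pi/3$, giving $q_1=q_4=(-1)^k\varepsilon^3$. The crucial step is to check, from the explicit list of the invariant monomials $q_{l,k}$ grouped into the nine families $G_j$ in~\cref{app:proof_of_periodic_case}, that each $G_j$ evaluates to a \emph{real} number --- which holds because every resulting phase is an integer multiple of $\pi$ (the exponents $a$, $b$, $a+b$ being integers) --- and that each family is preserved by $\mathbf{R}_{\pi/3}$ and by $\tau$. Granting this, the equivariance forces the three equations for $z_1,z_2,z_3$ to coincide, likewise those for $z_4,z_5,z_6$, and $\tau$ identifies the two blocks; since all coefficients and monomials are real, the system reduces to one real equation whose cubic part is~\cref{eq:solu1}. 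Because $\mu$ enters the left-hand side linearly (as $\mu z_1$) and the right-hand side only at order $\mu\varepsilon^3$, the derivative in $\mu$ is nonzero at the origin and the analytic implicit function theorem yields $\mu(\varepsilon,\chi)$ as a convergent series; the parity relations in~\cref{eq:solution_periodic_case} then follow from the $\mathbf{S}$-action $(\chi,q_1,q_4)\mapsto(-\chi,-q_1,-q_4)$. Finally the translation argument of~\cref{app:translations} collapses the six values $k=0,\dots,5$ into two genuinely distinct patterns, represented by $\theta=0$ and $\theta=\pi$ (with $\mathbf{S}$ exchanging them and reversing $\chi$), matching the statement.

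For the \emph{triangular superlattice} branch I would instead take the single-phase ansatz $z_j=\varepsilon e^{i\theta}$, $j=1,\dots,6$, with $\theta$ a free real unknown, so $q_1=q_4=\varepsilon^3e^{3i\theta}$. Computing the $G_j$ as functions of $\theta$ shows that $\mathbf{R}_{\pi/3}$ conjugates every argument while $\tau$ fixes them, so the six equations collapse to the \emph{single complex} equation~\cref{eq:bifequper2}, i.e.\ two real equations for the two real unknowns $(\mu,\theta)$. I would analyze the cubic truncation $\mu=(\alpha_1+2\alpha_2+\alpha_4+\alpha_5+\alpha_6)\varepsilon^2+2\chi\varepsilon e^{-3i\theta}$: its imaginary part forces $\sin 3\theta=0$ at leading order, hence $e^{3i\theta}=\pm1$ and $\theta\approx l\pi/3$. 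After dividing the imaginary equation by the common amplitude factor, its $\theta$-derivative is proportional to $\cos 3\theta=(-1)^l\neq0$ at these points, so the leading solution is nondegenerate; the analytic implicit function theorem then produces $\theta_l(\varepsilon)=l\frac{\pi}{3}+\mathcal{O}(\varepsilon)$ together with $\mu(\varepsilon)$ as convergent series. Discarding $l=0,3$, which reproduce the super-hexagons ($\theta=0,\pi$), leaves the four branches $l=1,2,4,5$ of~\cref{thm:superposed_hexagons_periodic}.

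The principal obstacle is the bookkeeping encapsulated in~\cref{app:proof_of_periodic_case}: one must enumerate \emph{all} resonant monomials admitted by the lattice identity~\cref{eq:ident_kj} on $\Gamma$ --- there are infinitely many, with the lowest-order exotic terms appearing at degree $2a-1\geq5$ and new ones at every higher order --- sort them correctly into the nine families $G_j$, and verify both the claimed $\mathbf{R}_{\pi/3}$- and $\tau$-transformation laws and the reality (for the super-hexagon ansatz) or the $\theta$-dependence (for the triangular ansatz). Only once this combinatorial structure is established does the symmetry reduction to a single scalar or single complex equation become rigorous; the subsequent implicit-function-theorem and nondegeneracy steps are then routine, and convergence is guaranteed by the boundedness of the pseudo-inverse in the periodic case.
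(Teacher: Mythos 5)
Your proposal follows essentially the same route as the paper: the same super-hexagon ansatz~\cref{eq:arguments_periodic} with the reality and $\mathbf{R}_{\pi/3}$-, $\tau$-invariance of the nine families $G_j$ reducing the system to one real equation, the same single-phase ansatz leading to the complex equation~\cref{eq:bifequper2} whose nondegenerate leading solutions $e^{3i\theta}=\pm1$ are continued by the implicit function theorem, the same translation argument of~\cref{app:translations}, and the same appeal to the bounded pseudo-inverse and Lyapunov--Schmidt for convergence. The argument is correct and matches the paper's proof in all essential steps.
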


\begin{figure}
\begin{center}
\includegraphics[width=0.3\hsize]{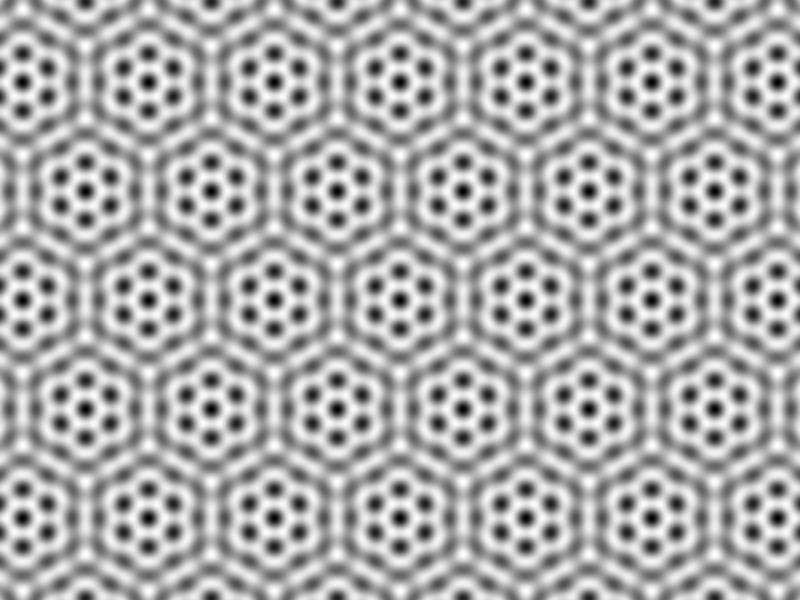}
\includegraphics[width=0.3\hsize]{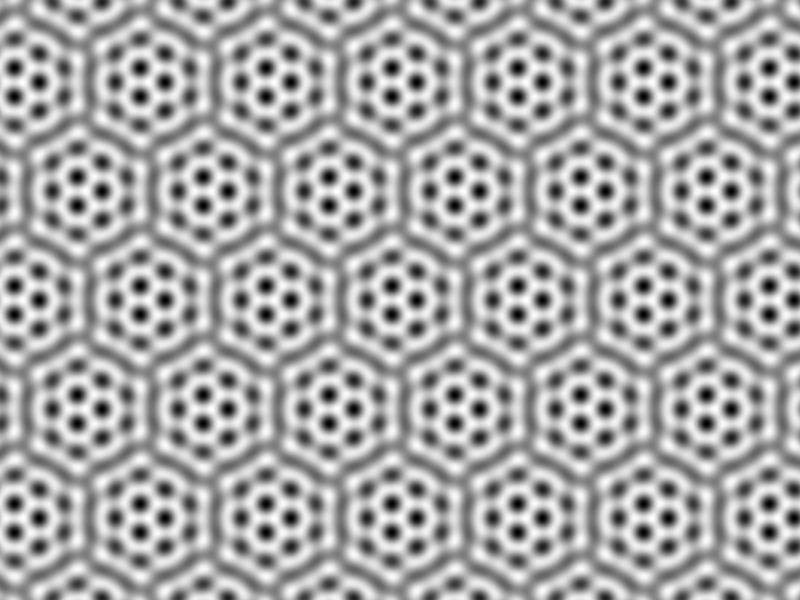}
\includegraphics[width=0.3\hsize]{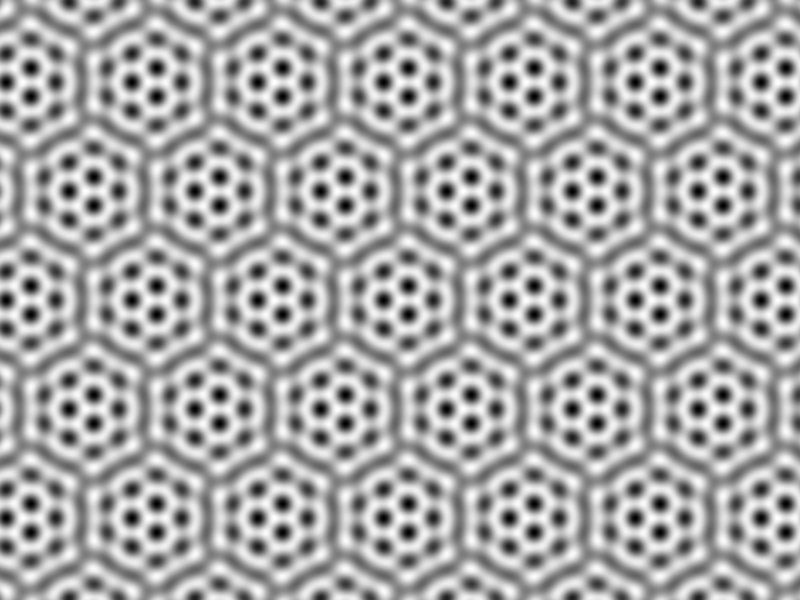}\\[10pt]
\includegraphics[width=0.3\hsize]{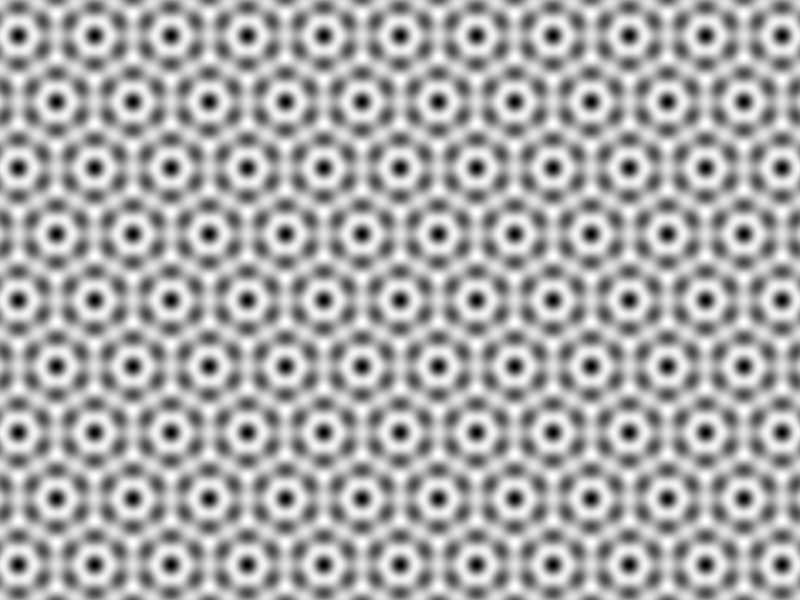}
\includegraphics[width=0.3\hsize]{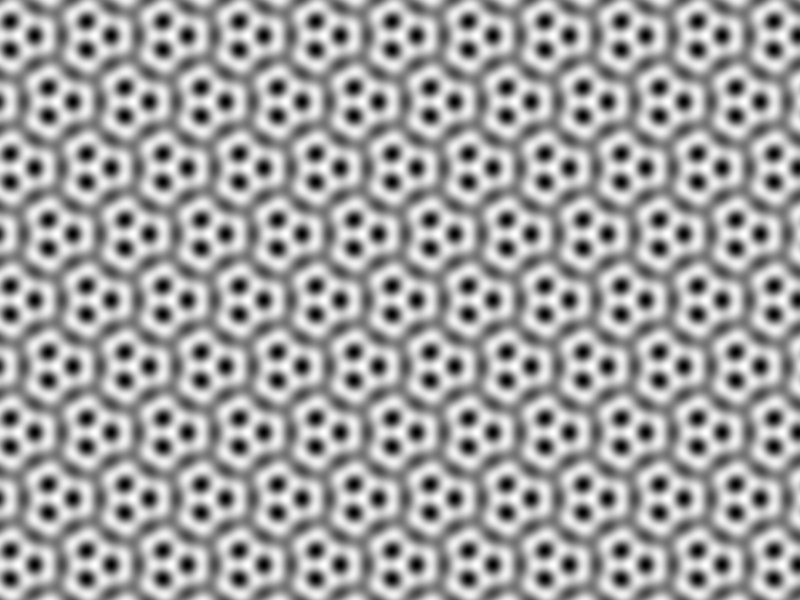}
\includegraphics[width=0.3\hsize]{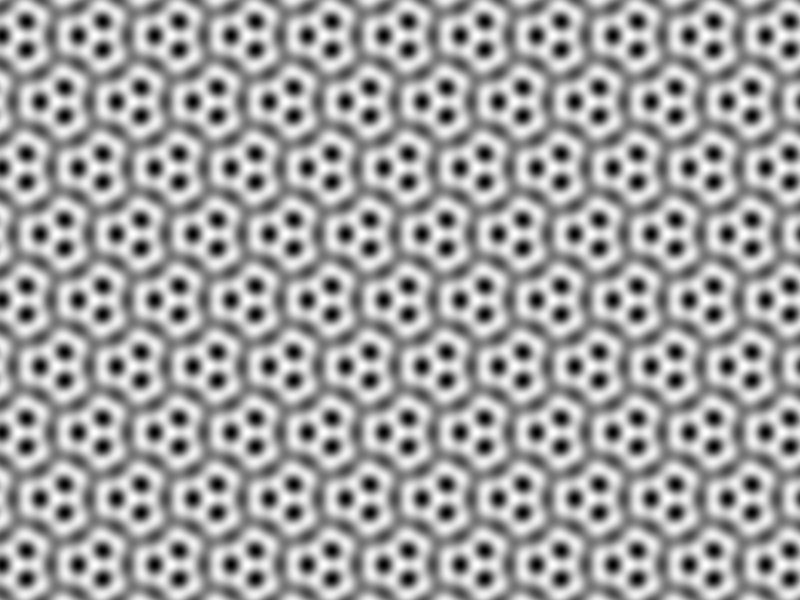}\\[10pt]
\includegraphics[width=0.9\hsize]{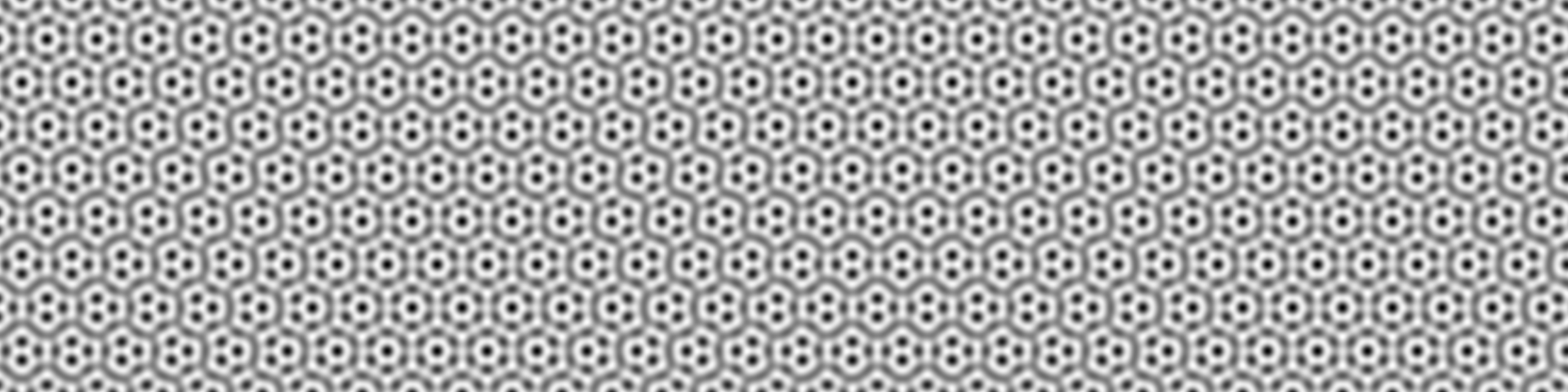}
\end{center}
\caption{Examples of periodic patterns: superposition of hexagons.
Top row: $\alpha=13.17^\circ$ ($\cos\alpha=\frac{37}{38}$, $(a,b)=(5,3)$);
middle row: $\alpha=21.79^\circ$ ($\cos\alpha=\frac{13}{14}$, $(a,b)=(3,2)$ -- 
see also \cref{fig:fw_examples}a). 
For these, the left column \red{(super-hexagons)} 
has $\theta_j=0$ for $j=1,\dots,6$. \red{The middle and right (superlattice triangles) have
$\theta_j=\frac{2\pi}{3}$ and $\theta_j=\frac{4\pi}{3}$ respectively}. The
bottom row shows a related quasiperiodic example with $\alpha=21.00^{\circ}$,
close to $21.79^\circ$, showing long-range modulation between the three
periodic patterns in the middle row.}
 \label{fig:example_periodic_patterns}
 \end{figure}

 \begin{remark}
For \red{triangular superlattice solutions}, the \red{phases of the amplitudes}
are not independent of the parameters, in contrast to the \red{super-hexagon}
solutions. These patterns are illustrated in
\cref{fig:example_periodic_patterns}. The figure includes (middle row) periodic
patterns with $\alpha=21.79^\circ$ and (bottom row) a quasiperiodic pattern with
$\alpha=21^\circ$, showing how, with a slightly different value of~$\alpha$, the
quasiperiodic pattern modulates between the three periodic solutions with
$l=0,2,4$.
 \end{remark}

 \begin{remark}
In the $\chi=0$ case, we can recover all the solutions found
by~\cite{Dionne1997} using these ideas.
 \end{remark}

\subsection{Hexa-rolls: superposition of hexagons and rolls}

\red{As in \S\labelcref{sec:superp_hexa}, we start with the cubic truncation of the 
quasiperiodic and periodic cases together, then consider the effect of higher order terms.}
Here we consider the case where $q_{1}\neq 0$ and $q_{4}=0$ 
in~\cref{eq:bifurcequcub}, so that we assume now
 \begin{equation*}
 q_{1}\neq 0,\text{ \ }z_{4}\neq 0,\text{ \ }z_{5}=z_{6}=0.
 \end{equation*}
Then the system~\cref{eq:bifurcequcub}
reduces to 4 equations
 \begin{align}
 2\chi \overline{q_{1}} &= u_{1}[\mu -\alpha_{1}u_{1}-\alpha_{2}(u_{2}+u_{3}) - \alpha_{4}u_{4}], \nonumber \\
 2\chi \overline{q_{1}} &= u_{2}[\mu -\alpha_{1}u_{2}-\alpha_{2}(u_{1}+u_{3})-\alpha_{6}u_{4}],  \nonumber \\
 2\chi \overline{q_{1}} &=u_{3}[\mu -\alpha_{1}u_{3}-\alpha_{2}(u_{1}+u_{2})-\alpha_{5}u_{4}],  \label{eq:hexa-rolls1} \\
 0 &=\mu - \alpha_{1}u_{4} - \alpha_{4}u_{1}-\alpha_{5}u_{3}-\alpha_{6}u_{2},  \nonumber
 \end{align}
where again this implies that $q_1$ is real.
Below, we study solutions of the bifurcation problem, built on a lattice
spanned by the four wave vectors $\bk_{1}$, $\bk_{2}$, $\bk_{3}$,
and~$\bk_{4}$, 
\red{and so we find solutions composed of a superposition
of hexagons and rolls. Unlike in the super-hexagon cases above, the three
amplitudes ($|z_1|$, $|z_2|$ and $|z_3|$) of the hexagonal part of the pattern
are of similar size but will not be exactly equal.} We find two different types
of solution \red{distinguished by the relative magnitudes of the hexaonal and roll parts of the
pattern}. \red{The first type occurs when}
$|\chi|$ is \red{neither too small nor too large} and is such
that rolls dominate the hexagons. \red{The second type occurs} only for
small~$|\chi|$ and is such that rolls and hexagons are more balanced.

\subsubsection{Hexa-rolls: rolls dominate hexagons}
\label{sec:hexarolls_rolls_dominant}

A consistent balance of terms in~\cref{eq:hexa-rolls1} is to have $u_1$, $u_2$ and $u_3$ be~$\mathcal{O}(\mu^2)$, so that $q_1$ is~$\mathcal{O}(\mu^3)$, while $u_4$ is~$\mathcal{O}(\mu)$. With this balance, at leading order we have the reduced system
\begin{align}
2\chi \overline{q_{1}} &=u_{1}[\mu -\alpha _{4}u_{4}],  \nonumber \\
2\chi \overline{q_{1}} &=u_{2}[\mu -\alpha _{6}u_{4}],
\label{eq:reduced_simplified} \\
2\chi \overline{q_{1}} &=u_{3}[\mu -\alpha _{5}u_{4}],  \nonumber \\
0 &=\mu -\alpha _{1}u_{4},  \nonumber
\end{align}
which leads to
\begin{eqnarray*}
z_{j} &=&\sqrt{u_{j}}e^{i\theta _{j}},
\quad j=1,2,3, \\
u_{j} &=&\mu ^{2}u_{j}^{(0)}, \quad
u_{4}=\frac{\mu }{\alpha_{1}}, \\
\Theta _{1} &=&\theta _{1}+\theta _{2}+\theta _{3}=k\pi ,
\end{eqnarray*}
with
\begin{align}
u_{1}^{(0)} &=\frac{(\alpha _{5}-\alpha _{1})(\alpha _{6}-\alpha _{1})}{
4\chi ^{2}\alpha_{1}^{2}},  \nonumber \\
u_{2}^{(0)} &=\frac{(\alpha _{5}-\alpha _{1})(\alpha _{4}-\alpha _{1})}{
4\chi ^{2}\alpha_{1}^{2}},  \label{eq:solution_I} \\
u_{3}^{(0)} &=\frac{(\alpha _{4}-\alpha _{1})(\alpha _{6}-\alpha _{1})}{
4\chi ^{2}\alpha_{1}^{2}},  \nonumber \\
(-1)^{k} &=\text{sign}[\chi (\alpha _{1}-\alpha _{4})].  \nonumber
\end{align}
The condition for the existence of this solution is that $(\alpha
_{4}-\alpha _{1})$, $(\alpha _{5}-\alpha _{1})$, $(\alpha _{6}-\alpha _{1})$ should be nonzero and
have the same sign. This condition is realized in~\cref{eq:SHeq} provided that
\begin{equation*}
3+\chi ^{2}(c_{1}-c_{\alpha }),\qquad
3+\chi ^{2}(c_{1}-c_{\alpha +}),\qquad
3+\chi ^{2}(c_{1}-c_{\alpha -}),
\end{equation*}
have the same sign, which holds at least for $|\chi|$ not too large. For
applying later the implicit function theorem, we typically need
$|\mu|\ll\min(1,|\chi|)$, \red{so $|\chi|$ should also be not too small}.
Here, for $|\chi|$ not too large, $\alpha_{1}>0$, so the bifurcation is
supercritical in this case.

Now let us consider the full bifurcation system. Setting
\begin{equation}\label{eq:uj_mu}
u_{j}=\mu ^{2}u_{j}^{(0)}(1+x_{j}),\quad
j=1,2,3,\quad 
u_{4}=\frac{\mu }{a_{1}}(1+x_{4}),
\end{equation}
we replace these expressions in~\cref{eq:hexa-rolls1} plus higher order terms
appearing in~\cref{eq:QP_case} or~\cref{eq:struct_P1_periodic_case}, and noticing that we obtain a real system of four equations \red{in all periodic and quasiperiodic cases except in the periodic case when $a-b=1$, as defined in \cref{lem:Ep_ab}}. 
 \begin{remark}
 \red{In the case $\alpha=\frac{\pi}{6}$, this
 combination of hexagons and rolls was reported 
 by~\cite{Malomed1989,Jiang2020a,Subramanian2021b}.}
 \end{remark}
 \begin{remark}
 \red{In the periodic case when $a-b=1$, a careful examination of high order
 resonant terms (as defined in \cref{app:proof_of_periodic_case}) shows that
 there remains six equations, instead of four. We might compute some new
 solution looking like the superposed hexagons and rolls (but with small
 $|z_5|$ and~$|z_6|$), however there are not
 strictly of the required form since $q_1 q_4 \neq 0$. We do not pursue these
 solutions further here.}
 \end{remark}
Then, dividing the first three
equations in~\cref{eq:reduced_simplified} (with~\cref{eq:uj_mu}) by~$\mu^{3}$, dividing the fourth one by~$\mu$, and computing the linear part in~$x_{j}$, we obtain
\begin{align}
a(x_{1}+x_{2}+x_{3})-u_{1}^{(0)}((1-\frac{\alpha _{4}}{\alpha _{1}})x_{1}-
\frac{\alpha _{4}}{\alpha _{1}}x_{4}) &=h_{1},  \nonumber \\
a(x_{1}+x_{2}+x_{3})-u_{2}^{(0)}((1-\frac{\alpha _{6}}{\alpha _{1}})x_{2}-
\frac{\alpha _{6}}{\alpha _{1}}x_{4}) &=h_{2},  \label{eq:matrix_Mprime} \\
a(x_{1}+x_{2}+x_{3})-u_{3}^{(0)}((1-\frac{\alpha _{5}}{\alpha _{1}})x_{3}-
\frac{\alpha _{5}}{\alpha _{1}}x_{4}) &=h_{3},  \nonumber \\
x_{4} &=h_{4},  \nonumber
\end{align}
with
\begin{equation*}
a=(-1)^{k}\chi \sqrt{u_{1}^{(0)}u_{2}^{(0)}u_{3}^{(0)}},
\end{equation*}
and all $h_{j}$ have $\mu $ in factor. The left hand side of the 
system~\cref{eq:matrix_Mprime} represents the differential at the origin with respect to $(x_{1},x_{2},x_{3},x_{4})$, defining a matrix~$M'$ that needs to be inverted in order to use the implicit function theorem. The determinant of matrix~$M'$ can be computed and it is
\begin{equation*}
\frac{[3(-1)^{k}\text{sign}(\chi )-2]}{128\chi ^{6}\alpha _{1}^{9}}[(\alpha
_{1}-\alpha _{4})(\alpha _{1}-\alpha _{5})(\alpha _{1}-\alpha _{6})]^{3},
\end{equation*}
which is not zero. Therefore the implicit function theorem applies,
so we can find series in powers of $\mu$ for 
$(x_{1},x_{2},x_{3},x_{4})$ solving the full bifurcation system in both the quasiperiodic case~\cref{eq:QP_case} and the periodic case~\cref{eq:struct_P1_periodic_case}. We can state the following 
\begin{theorem}[Hexa-rolls: superposed hexagons and rolls with rolls dominant]
\label{thm:hexa-rolls_(I)} Assume that $\alpha \in \EO$, \red{and in case of a periodic lattice assume $a-b>1$}. Then for fixed values of $\chi$
such that 
\begin{equation*}
(\alpha _{4}-\alpha _{1}),(\alpha _{5}-\alpha _{1}),(\alpha _{6}-\alpha _{1})
\end{equation*}
are nonzero and have the same sign, and for $\mu$ close enough to $0$, we can build a three-parameter formal 
power series in~$\varepsilon$ solution of~\cref{eq:SHeq} of the form
\begin{eqnarray*}
u(\varepsilon ,\Theta ,\chi ,j) &=&u_{1}(\varepsilon ,\Theta ,\chi
,j)+\sum\limits_{n\geq 3}\varepsilon ^{n}u_{n}(\chi ,\Theta ,j),\quad
u_{2p+1}\bot e^{i\bk_{j}\cdot \bx},
\quad j=4,\text{ or }5\text{ or }6, \\
u_{1}(\varepsilon ,\Theta ,\chi ,j) &=&\varepsilon e^{i(\bk_{j}\cdot 
\mathbf{x}+\theta _{j})}+\alpha _{1}\varepsilon ^{2}\sum_{m=1,2,3}\sqrt{
u_{m}^{(0)}}e^{i(\bk_{m}\cdot \mathbf{x}+\theta _{m})}+c.c. \\
\Theta  &=&(\theta _{1},\theta _{2},\theta _{3},\theta _{j}),\text{ \ }
\theta _{1}+\theta _{2}+\theta _{3}=k\pi ,\text{ }k=0\text{ or }1, \\
\mu (\varepsilon ,\chi ,j) &=&\alpha _{1}\varepsilon ^{2}+\sum_{n\geq 2}\mu
_{2n}(\chi ,j)\varepsilon ^{2n},\text{ even in }\varepsilon ,
\end{eqnarray*}
where $u_{m}^{(0)}$ and $k$ are determined in~\cref{eq:solution_I}. 
For $\alpha_1>0$ the bifurcation is supercritical with $\mu>0$. In the case $\alpha_1<0$, subcritical patterns can be found with $\mu<0$. In the
quasiperiodic case ($\alpha\in\Eii$), these solutions give quasipatterns using the techniques of~\cite{Iooss2019}. 
In the periodic case ($\alpha\in\Ep\cap\EO$), the classical
Lyapunov--Schmidt method give periodic pattern solutions of the PDE~\cref{eq:SHeq}. In both cases, 
the freedom left for~$\Theta$ corresponds to an arbitrary choice for translations $\mathbf{T}_{\delta}$ of the hexagons, and the arbitrary choice of $\theta_j$ ($j=4,5,6$) allows an arbitrary relative translation of the rolls. 
\Cref{fig:example_hexaroll_patterns} shows quasiperiodic examples of~$u_1$ \red{(QP-hexa-rolls)}.
\end{theorem}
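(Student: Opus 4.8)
The plan is to solve the four real equations~\cref{eq:hexa-rolls1}, augmented by the higher-order resonant terms from~\cref{eq:QP_case} (quasiperiodic) or~\cref{eq:struct_P1_periodic_case} (periodic), by the implicit function theorem after rescaling to the balance in which rolls dominate hexagons.

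First I would fix the consistent scaling. Because $z_5=z_6=0$ reduces the fourth equation to $0=\mu-\alpha_1u_4$ at leading order, the roll intensity is $u_4=\mathcal{O}(\mu)$, so $|z_4|=\mathcal{O}(\sqrt{\mu})$. In the first three equations the left-hand side $2\chi\overline{q_1}$ is cubic in the small hexagon amplitudes while the bracket on the right is $\mathcal{O}(\mu)$; balancing these forces $u_1=u_2=u_3=\mathcal{O}(\mu^2)$, hence $|z_j|=\mathcal{O}(\mu)$ for $j=1,2,3$, confirming that the rolls (amplitude $\sim\sqrt{\mu}$) dominate the hexagons (amplitude $\sim\mu$). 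With this balance the leading system is~\cref{eq:reduced_simplified}, which I solve in closed form: the fourth equation gives $u_4=\mu/\alpha_1$, and equating the first three equations pairwise (their left-hand sides coincide) eliminates $q_1$ and produces the expressions~\cref{eq:solution_I} for $u_j^{(0)}$ together with the sign condition $(-1)^k=\text{sign}[\chi(\alpha_1-\alpha_4)]$. Positivity of the three $u_j^{(0)}$ is exactly the hypothesis that $(\alpha_4-\alpha_1)$, $(\alpha_5-\alpha_1)$, $(\alpha_6-\alpha_1)$ be nonzero and share a common sign, which holds for $|\chi|$ not too large since each equals $3+\chi^2(c_1-c_\bullet)$.

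Next I would upgrade this to an exact solution of the full system. Substituting $u_j=\mu^2u_j^{(0)}(1+x_j)$ for $j=1,2,3$ and $u_4=(\mu/\alpha_1)(1+x_4)$ as in~\cref{eq:uj_mu} into the full bifurcation equations, I check that the system is real and closes on $(x_1,x_2,x_3,x_4)$; this is the step that breaks down in the periodic lattice with $a-b=1$, where extra resonant monomials leave six rather than four equations, and is why that case is excluded by hypothesis. Dividing the first three equations by $\mu^3$ and the fourth by $\mu$ puts the right-hand sides in the form $\mu\,h_j$ and isolates the $x$-linear part as the matrix $M'$ of~\cref{eq:matrix_Mprime}. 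The crux is non-degeneracy: a direct computation gives
\begin{equation*}
\det M'=\frac{3(-1)^k\text{sign}(\chi)-2}{128\chi^6\alpha_1^9}\,[(\alpha_1-\alpha_4)(\alpha_1-\alpha_5)(\alpha_1-\alpha_6)]^3,
\end{equation*}
in which the numerator factor equals $1$ or $-5$ and hence never vanishes, while the product of $\alpha$-differences is nonzero by hypothesis. Thus $M'$ is invertible and the implicit function theorem yields $(x_1,x_2,x_3,x_4)$ as power series in $\mu$ vanishing at $\mu=0$; rewriting everything in the roll amplitude $\varepsilon\sim\sqrt{\mu}$ gives the stated expansion, with $\mu$ even in $\varepsilon$ following from the $\mathbf{S}$-parity of $f_1,f_2$, and with three free phase parameters—two from an arbitrary translation of the hexagons (preserving $\theta_1+\theta_2+\theta_3=k\pi$ because $\bk_1+\bk_2+\bk_3=0$) and one from an independent translation of the rolls.

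Finally I would promote the formal series to a genuine solution of~\cref{eq:SHeq}. In the periodic case $\alpha\in\Ep\cap\EO$ the pseudo-inverse of $\mathbf{L}_0$ is bounded and the $P_j$ are analytic, so the Lyapunov--Schmidt reduction makes the series convergent and the construction already gives a true periodic pattern. In the quasiperiodic case $\alpha\in\Eii$ small divisors may render the series divergent; there I follow the Nash--Moser scheme of~\cite{Iooss2019}, using the truncated formal series as the initial approximation for the modified Newton iteration, exploiting the Diophantine properties defining~$\Eii$ to control the pseudo-inverse, and closing with the same transversality argument that keeps $(\varepsilon,\mu,\chi)$ in the ``good set'' of full measure as $\varepsilon\to0$. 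The main obstacle is the $\det M'$ computation: both the implicit function theorem and the subsequent Nash--Moser persistence rest on invertibility of the linearization, and it is precisely here that the same-sign hypothesis and the exclusion of the $a-b=1$ lattice are indispensable.
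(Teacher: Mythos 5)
Your proposal is correct and follows essentially the same route as the paper: the same $u_{1,2,3}=\mathcal{O}(\mu^2)$, $u_4=\mathcal{O}(\mu)$ balance, the same reduced system and closed-form leading solution~\cref{eq:solution_I}, the same perturbation ansatz~\cref{eq:uj_mu} leading to the matrix $M'$ of~\cref{eq:matrix_Mprime} with the same nonvanishing determinant, and the same final appeal to Lyapunov--Schmidt in the periodic case and the Nash--Moser machinery of~\cite{Iooss2019} in the quasiperiodic case. The observation that the numerator factor $3(-1)^k\mathrm{sign}(\chi)-2$ equals $1$ or $-5$ is a nice explicit justification of non-degeneracy that the paper leaves implicit.
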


\begin{figure}[tbp]
\begin{center}
\includegraphics[width=0.4\hsize]{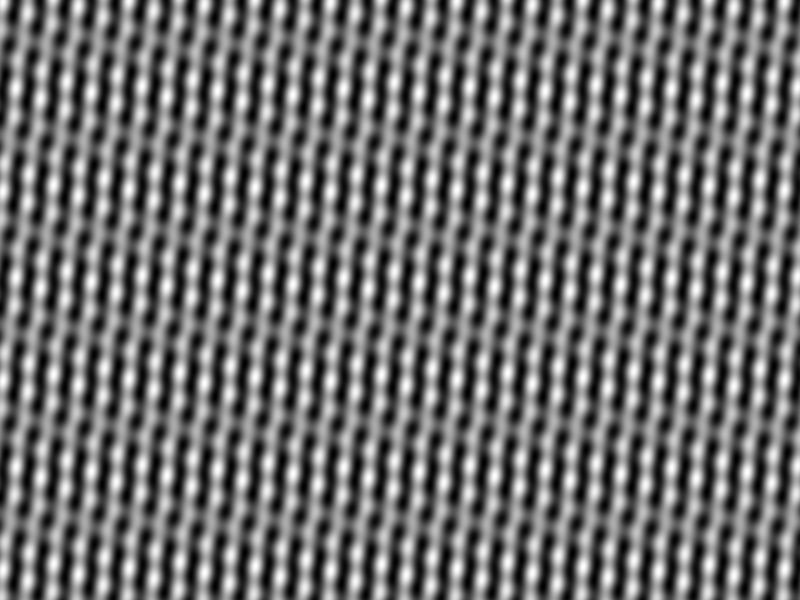}
\includegraphics[width=0.4\hsize]{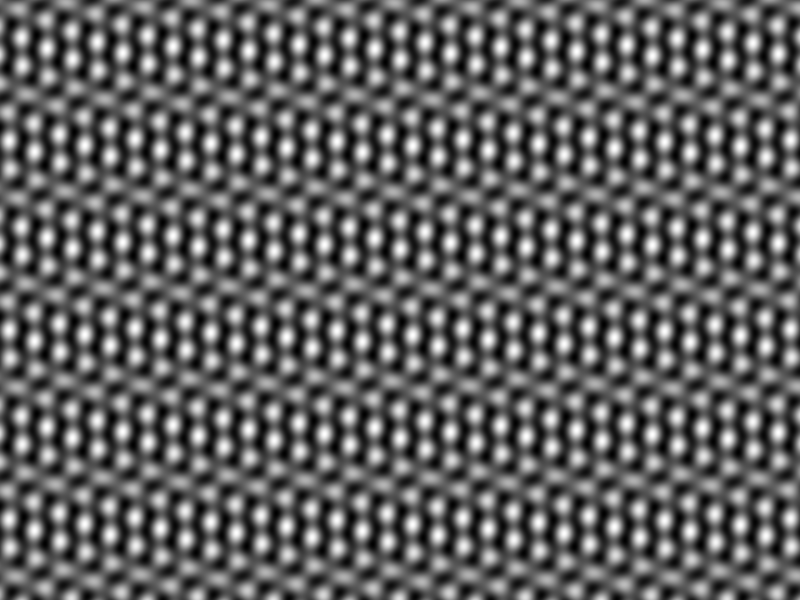}\\[10pt]
\includegraphics[width=0.4\hsize]{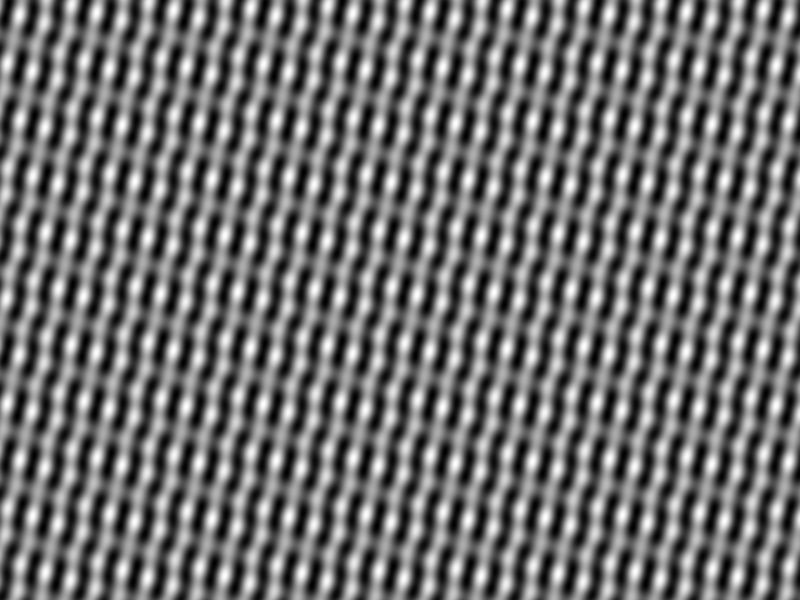}
\includegraphics[width=0.4\hsize]{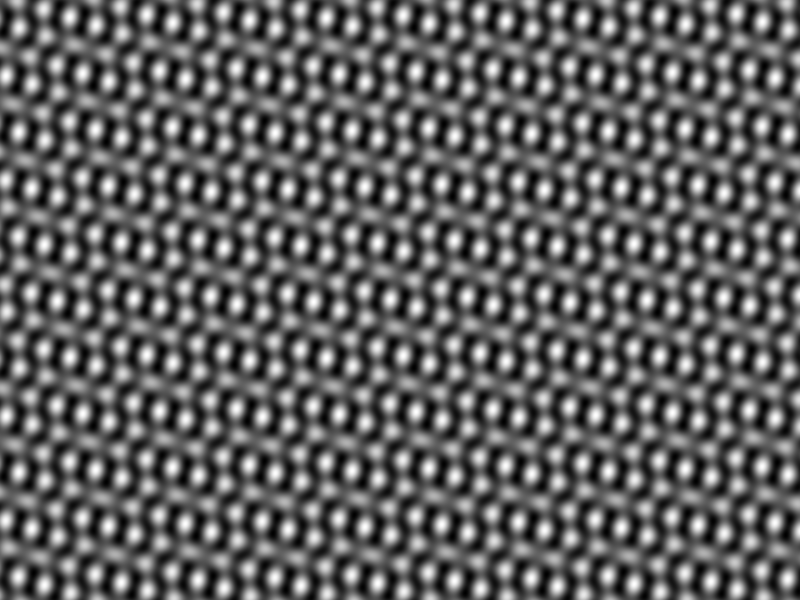}\\[0pt]
\end{center}
\caption{Examples of quasipatterns: superposition of hexagons and rolls.
 Top row: $\alpha=\frac{\pi}{12}=15^{\circ}$;
 bottom row: $\alpha=25.66^{\circ}$ ($\cos\alpha=\frac{1}{4}\sqrt{13}$).
 Left: \red{QP-hexa-rolls with rolls dominating hexagons};
 right: \red{QP-hexa-rolls with rolls and hexagons in balance}.}
\label{fig:example_hexaroll_patterns}
\end{figure}

\begin{remark}
These \red{hexa-roll} solutions are new, even in the case of a periodic lattice. They have the
\red{surprising} feature in the periodic case of allowing arbitrary relative
translations between the hexagons and rolls. Unlike the \red{super-hexagon}
solutions, these solutions require a condition on the cubic
coefficients to be satisfied in order to exist. They were not found
by~\cite{Dionne1997} since there the equivariant branching lemma was used,
which finds only solutions that are characterized by a single amplitude (these
solutions have two) and that exist for all non-degenerate values of the cubic
coefficients (here the cubic coefficients must satisfy an inequality).
\end{remark}

\subsubsection{Hexa-rolls: rolls and hexagons balance}
\label{sec:hexarolls_balanced}
\red{With small~$|\chi|$, solutions can be found where the rolls and hexagons are of similar size.}
Let us consider the system~\cref{eq:hexa-rolls1}, without the terms with $\chi ^{2}$ in
coefficients, and set
\begin{align*}
z_{1} &=\varepsilon e^{i\theta _{1}}, \qquad
z_{2}=\varepsilon e^{i\theta_{2}}, &
z_{3}&=\varepsilon \zeta _{3}e^{i\theta _{3}}, &
\theta_{1}&+\theta _{2}+\theta _{3}=k\pi , &
\varepsilon &>0, \\
u_{4} &=|z_{4}|^{2}=\varepsilon ^{2}u_{4}^{(0)}, &
z_{5}&=z_{6}=0, &
\mu &=\varepsilon ^{2}\mu ^{(0)},&
\chi &=\varepsilon \kappa ,
\end{align*}
then, after division by $\varepsilon^{4}$ the first equations, and by $\varepsilon^{2}$ the fourth one, this gives
\begin{eqnarray*}
2\kappa (-1)^{k}\zeta _{3} &=&\mu ^{(0)}-9-6\zeta _{3}^{2}-6u_{4}^{(0)}, \\
2\kappa (-1)^{k}\zeta _{3} &=&\zeta _{3}^{2}[\mu ^{(0)}-3\zeta
_{3}^{2}-12-6u_{4}^{(0)})], \\
0 &=&\mu ^{(0)}-3u_{4}^{(0)}-12-6\zeta _{3}^{2}.
\end{eqnarray*}
Eliminating $\mu ^{(0)}$ and $u_{4}^{(0)}$ leads to
\begin{equation*}
u_{4}^{(0)}=1-\frac{2\kappa }{3}\zeta _{3}(-1)^{k},
\end{equation*}
and
\begin{equation*}
(3\zeta _{3}+2\kappa (-1)^{k})(\zeta _{3}^{2}-1)=0.
\end{equation*}

\paragraph{Balanced hexa-rolls type 1}

For the solution $\zeta _{3}=1$, we obtain
\begin{equation}
z_{3}=\varepsilon e^{i\theta _{3}},\text{ }u_{4}^{(0)}=1+\frac{2\kappa }{3}
(-1)^{k+1},\text{ }\mu ^{(0)}=21+2\kappa (-1)^{k+1},  \label{eq:sol_IIa}
\end{equation}
for which we need to satisfy $u_{4}^{(0)}>0$, i.e.,
\begin{equation}
\kappa (-1)^{k}<\frac{3}{2},  \label{eq:cond_IIa}
\end{equation}
and we observe that $\mu ^{(0)}>0$ (supercritical bifurcation).
\red{These solutions have the three hexagon amplitudes equal at leading order.}

Now, we observe that the solution $\zeta _{3}=-1$ may be obtained 
from~\cref{eq:sol_IIa} in adding $\pi $ to $\theta _{3}$ and change $k$ into $k+1$. It
follows that this does not give a new solution.

\paragraph{Balanced hexa-rolls type 2}

For the solution $\zeta _{3}=\frac{2}{3}\kappa (-1)^{k+1}$, we obtain
\begin{equation}
 z_{3}=\frac{2}{3}\kappa (-1)^{k+1}\varepsilon ,
 \text{ }
 u_{4}^{(0)}=1+\frac{4}{9}\kappa ^{2},
 \text{ }
 \mu ^{(0)}=15+4\kappa ^{2},  \label{eq:sol_IIc}
\end{equation}
where there is no restriction on $\kappa$, and we observe that 
$\mu^{(0)}>0$ (supercritical bifurcation).
\red{These solutions have one of the three hexagon amplitudes different from 
the other two at leading order.}

For proving that these \red{balanced hexa-roll} solutions at leading order provide solutions for the
full system at all orders, let us define
\begin{align}
z_{1} &=\varepsilon e^{i\theta _{1}}(1+x_{1}), &
z_{2} &=\varepsilon e^{i\theta _{2}}(1+x_{2}), &
z_{3} &=\varepsilon \zeta _{3}e^{i\theta_{3}}(1+x_{3}),
\label{eq:perturb_hexa+roll} \\
u_{4} &=\varepsilon ^{2}(u_{4}^{(0)}+v_{4}), &
\mu &=\varepsilon^{2}(\mu ^{(0)}+\nu ), &
z_{5}&=z_{6}=0, \qquad
\theta _{1}+\theta _{2}+\theta _{3}=k\pi, \notag
\end{align}
where $u_{4}^{(0)}$ $\mu ^{(0)}$, and $\zeta _{3}$ are those computed above
in~\cref{eq:sol_IIa}, \cref{eq:sol_IIc}. Replacing these expressions in~\cref{eq:hexa-rolls1},
it is clear that the previously neglected terms play the role of a
perturbation of higher order. Higher orders of the bifurcation equation are
given by~\cref{eq:QP_case} or~\cref{eq:struct_P1_periodic_case}. We notice that the system is real because in
setting~\cref{eq:perturb_hexa+roll}, the monomials $q_{4}$, $q_{j,k}$, $q_{st}^{\prime }$ cancel for all $j,k,s,t$. Hence there are only four
remaining equations in the bifurcation system, with the same form in the
quasiperiodic and in the periodic cases.

Dividing by the suitable power of $\varepsilon $, the linear terms in 
$(x_{1},x_{2},x_{3},v_{4},\nu )$ are, at leading order (replacing $\mu ^{(0)}$
and $u_{4}^{(0)}$ by their values)
\begin{eqnarray*}
&&\nu -6v_{4}+2(3+2\kappa \zeta _{3}(-1)^{k})x_{1}-12(\zeta
_{3}^{2}-1)x_{3}-[2\kappa (-1)^{k}\zeta _{3}+12](x_{1}+x_{2}+x_{3}) \\
&&\nu -6v_{4}+2(3+2\kappa \zeta _{3}(-1)^{k})x_{2}-12(\zeta
_{3}^{2}-1)x_{3}-[2\kappa (-1)^{k}\zeta _{3}+12](x_{1}+x_{2}+x_{3}) \\
&&\nu -6v_{4}+2(3+2\kappa \zeta _{3}(-1)^{k})x_{3}-[2\kappa (-1)^{k}(\zeta_{3})^{-1}+12](x_{1}+x_{2}+x_{3}) \\
&&\nu -3v_{4}-12(\zeta _{3}^{2}-1)x_{3}-12(x_{1}+x_{2}+x_{3}). 
\end{eqnarray*}
The fact that we have a freedom for the choice of the scale $\varepsilon $
allows us to take $x_{1}=0$. So, if we are able to invert the matrix $M$
defined above, acting on $(x_{2},x_{3},v_{4},\nu )$, i.e., solving
\begin{equation*}
M(x_{2},x_{3},v_{4},\nu )^{t}=(h_{1},h_{2},h_{3},h_{4})^{t},
\end{equation*}
with an inverse with a norm of order 1, then this would mean that we can
invert the differential at the origin for $\varepsilon =0$, for the full
system in $(x_{2},x_{3},v_{4},\nu )$, hence we can use the implicit function
theorem to solve the full system, including all orders.

Now, we obtain
\begin{eqnarray*}
h_{2}-h_{1} &=&2x_{2}(3+2\kappa \zeta _{3}(-1)^{k}), \\
h_{3}-h_{1} &=&2x_{3}(3+2\kappa \zeta _{3}(-1)^{k})+12(\zeta
_{3}^{2}-1)x_{3}+2\kappa (-1)^{k}[\zeta _{3}-(\zeta _{3})^{-1}](x_{2}+x_{3}),
\end{eqnarray*}
which gives $x_{2}$ and $x_{3}$ provided that
\begin{equation}
(3+2\kappa \zeta _{3}(-1)^{k})\neq 0,  \label{eq:cond_inv_1}
\end{equation}
and
\begin{equation}
-6+6\kappa \zeta _{3}(-1)^{k}+12\zeta _{3}^{2}-2\kappa (\zeta
_{3})^{-1}(-1)^{k}\neq 0.  \label{eq:cond_inv_2}
\end{equation}
It appears that condition~\cref{eq:cond_inv_2} is the same as~\cref{eq:cond_inv_1} in the cases when $\zeta _{3}=\pm 1$. In the third case, when 
$\zeta _{3}=\frac{2}{3}\kappa (-1)^{k+1}$, both conditions~\cref{eq:cond_inv_1} and~\cref{eq:cond_inv_2} give
\begin{equation}
\kappa ^{2}\neq \frac{9}{4}.  \label{eq:cond_inv_IIb}
\end{equation}
Once these conditions are realized, it is clear that we can invert the
matrix $M$ (solving with respect to $(\nu,v_{4})$ is straighforward, once 
$x_{2},x_{3}$ is computed). The solution is obtained under the form of a
power series in $\varepsilon$, with coefficients depending on $\kappa$.
The series is formal in the quasiperiodic case, while it is convergent for 
$\varepsilon$ small enough in the periodic case. In all cases, the
bifurcation is supercritical $(\mu >0)$. Finally, the solutions~\cref{eq:sol_IIa} and~\cref{eq:sol_IIc} are the principal parts of superposed rolls and
hexagons. Notice that we can shift the hexagons in the plane using $\theta_{1}$ and $\theta_{2}$, and independently shift the rolls using the phase~$\theta_{4}$. Notice that a similar result holds by replacing $z_{4}$ by $z_{5}$ or $z_{6}$.

For understanding in the plane $(\mu,\chi)$ where the solutions bifurcate,
we first look at $\mu >0$ and solve at leading order the second degree equation
for $\varepsilon$. For the solution~\cref{eq:sol_IIa} this gives 
\begin{equation*}
21\varepsilon ^{2}+2\chi \varepsilon (-1)^{k+1}-\mu =0
\end{equation*}
i.e., (since $\varepsilon>0$)
\begin{equation*}
\varepsilon =\frac{(-1)^{k}\chi +\sqrt{\chi ^{2}+21\mu }}{21}.
\end{equation*}
Hence the conditions~\cref{eq:cond_IIa} and~\cref{eq:cond_inv_1} lead to
\begin{eqnarray*}
13(-1)^{k}\chi  &<&\sqrt{\chi ^{2}+21\mu }, \\
15\chi (-1)^{k+1} &\neq &\sqrt{\chi ^{2}+21\mu }.
\end{eqnarray*}
This gives the conditions (see \cref{fig:bifdiag} left side)
\begin{eqnarray*}
\mu  &>&8\chi ^{2},\text{ for }(-1)^{k}\chi >0,\text{ Parabola }(P_{1}) \\
\mu  &\neq &\frac{32}{3}\chi ^{2}\text{ for }(-1)^{k}\chi <0,
\text{ Parabola }(P_{2})
\end{eqnarray*}
For the solution~\cref{eq:sol_IIc} we have, from the expression of $\mu $ and
from~\cref{eq:cond_inv_IIb}, the conditions (see \cref{fig:bifdiag} right side)
\begin{equation*}
\mu >4\chi ^{2},\text{ \ }\mu \neq \frac{32}{3}\chi ^{2},\text{ \ Parabolas }
(P_{3})\text{ and }(P_{2}).
\end{equation*}

Finally, we state the following

\begin{figure}[tbp]
\begin{center}
\includegraphics[width=0.6\hsize]{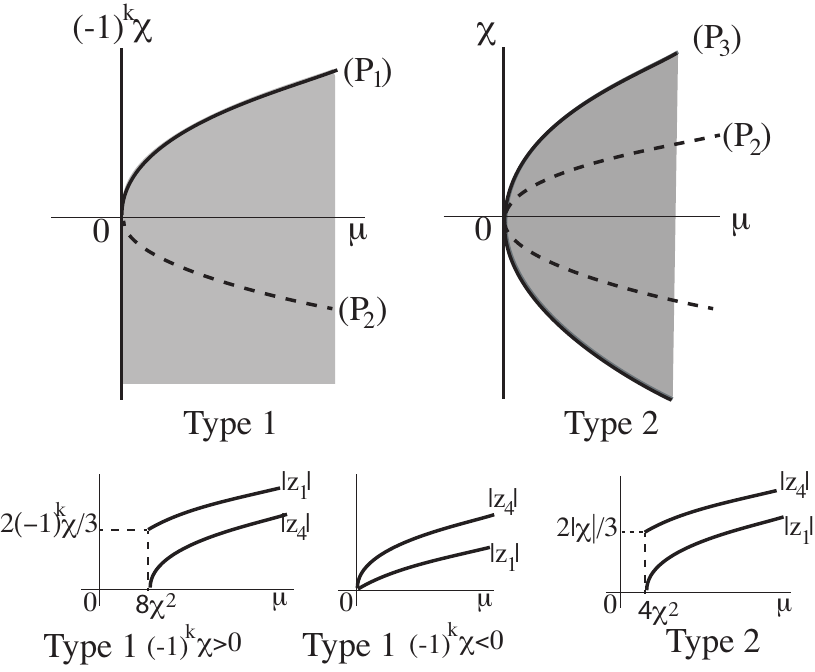}\\[0pt]
\end{center}
\caption{Domain of existence of bifurcating superposition of hexagons and
rolls \red{(balanced hexa-rolls types~1 and~2)} for small $|\chi|$. Solutions
of \red{type~1 (three hexagon amplitudes equal at leading order)} are on the left side, solutions of \red{type~2 (two of the three hexagon amplitudes equal at leading order)} are on the right side. The
parabola $(P_2)$ (dashed line) is a forbidden place.}
\label{fig:bifdiag}
\end{figure}

\begin{theorem}[Hexa-rolls: superposed hexagons and rolls in balance]
\label{thm:superp_hexa-rolls} Assume that $\alpha \in \EO$.   Then, for  $\chi=\varepsilon \kappa$, $\varepsilon >0$ close enough to
0, 
we can build a series in powers of 
$\varepsilon$, solution of~\cref{eq:SH},
of the form
\begin{eqnarray*}
u(\varepsilon ,\kappa ,\Theta ,k,j) &=&\varepsilon u_{1}(\Theta
)+\sum_{n\geq 1}\varepsilon ^{2n+1}u_{2n+1}(\kappa ,\Theta ,k,j),\text{ }
u_{2n+1}\bot e^{i\bk_{1}\cdot \bx},\text{ }n\geq 1, \\
u_{1}(\Theta ,\kappa ,k,j) &=&\sum_{m=1,2}e^{i(\bk_{m}\cdot \bx+\theta _{m})}+\zeta_3e^{i(\bk_{3}\cdot \bx+\theta_3)}
+\sqrt{u_{4}^{(0)}}e^{i(\bk_{j}\cdot \bx
+\theta _{j})}+c.c.,\\
\Theta  &=&(\theta _{1},\theta _{2},\theta _{3},\theta _{j}),\text{ }j=4
  \text{ or }5\text{ or }6, \text{  }  \theta_1+\theta_2+\theta_3=k\pi,\text{  }k=0\text{ or }1\\
\mu (\varepsilon ,\kappa ,k,j) &=&\varepsilon ^{2}\mu ^{(0)}(\kappa
,k)+\sum_{n\geq 2}\varepsilon ^{2n}\mu _{2n}(\kappa ,k,j), 
\end{eqnarray*}
\red{Balanced hexa-rolls type 1 (three hexagon amplitudes equal at leading order):}
\begin{equation*}
\zeta_3=1,\text{  }\mu ^{(0)}(\kappa,k) =(-1)^{k+1}2\kappa +21,\text{ }u_{4}^{(0)}=(-1)^{k+1}\frac{2}{3}
\kappa +1,\text{  }(-1)^{k}\kappa<3/2,\text{  }(-1)^{k}\kappa\neq -3/2.
\end{equation*}
\red{Balanced hexa-rolls type 2 (two of the three hexagon amplitudes equal at leading order):}
\begin{equation*}
    \zeta_3=\frac{2}{3} \kappa (-1)^{k+1},\text{  }\mu ^{(0)}(\kappa) = 15+4\kappa^2,\text{  }u_{4}^{(0)}=1+\frac{4}{9}\kappa^2,\text{  } \kappa \neq \pm 3/2.
\end{equation*}
The freedom left for $\Theta$ corresponds
to an arbitrary choice for translations $\mathbf{T}_{\bdelta}$, as well for
hexagons as for rolls (for $\theta_{j}$).  In the quasiperiodic case ($\alpha\in\Eii$), these solutions give quasipatterns using the methods of~\cite{Iooss2019}. See \cref{fig:bifdiag} for understanding the domain of
bifurcating solutions in the plane $(\mu,\chi)$.
\Cref{fig:example_hexaroll_patterns} shows quasiperiodic examples of~$u_1$.
\end{theorem}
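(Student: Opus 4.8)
The plan is to exploit the reduction already established: under the ansatz $q_1\neq0$, $z_4\neq0$, $z_5=z_6=0$, the full six-component bifurcation equation collapses to the four real equations~\cref{eq:hexa-rolls1}, because on this invariant subspace $q_4$ and all the periodic resonant monomials $q_{l,k}$ and $q'_{s,t}$ of~\cref{eq:struct_P1_periodic_case} vanish identically, so that the remaining system has the same form in the quasiperiodic case~\cref{eq:QP_case} and in the periodic case. First I would impose the balanced scaling $z_1=\varepsilon e^{i\theta_1}$, $z_2=\varepsilon e^{i\theta_2}$, $z_3=\varepsilon\zeta_3 e^{i\theta_3}$ with $\theta_1+\theta_2+\theta_3=k\pi$, together with $u_4=\varepsilon^2 u_4^{(0)}$, $\mu=\varepsilon^2\mu^{(0)}$, and crucially $\chi=\varepsilon\kappa$. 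The point of this last choice is that the quadratic coupling $2\chi\bar q_1$ then enters at order $\varepsilon^4$, the same order as the cubic self-interaction, which is precisely what makes the roll amplitude $\sqrt{u_4}$ comparable to the hexagon amplitude $\varepsilon$, in contrast to the rolls-dominant regime of \cref{thm:hexa-rolls_(I)}.

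At leading order, since $\chi=\varepsilon\kappa$ makes the $\chi^2$ pieces of the $\alpha_j$ of order~$\varepsilon^2$, I would drop them and set $\alpha_1=3$, $\alpha_2=\alpha_4=\alpha_5=\alpha_6=6$. Dividing the first three equations of~\cref{eq:hexa-rolls1} by $\varepsilon^4$ and the fourth by $\varepsilon^2$ reduces the problem to three algebraic relations in $(\zeta_3,u_4^{(0)},\mu^{(0)})$. Eliminating $\mu^{(0)}$ and $u_4^{(0)}$ produces the factored identity $(3\zeta_3+2\kappa(-1)^k)(\zeta_3^2-1)=0$, whose roots yield exactly the two branches of the theorem: type~1 with $\zeta_3=1$ and type~2 with $\zeta_3=\frac{2}{3}\kappa(-1)^{k+1}$. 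The accompanying values of $u_4^{(0)}$ and $\mu^{(0)}$ are then read off as in~\cref{eq:sol_IIa} and~\cref{eq:sol_IIc}, and positivity of $u_4^{(0)}$ (needed so that $z_4$ is genuine) imposes $(-1)^k\kappa<\frac{3}{2}$ in the type~1 case (\cref{eq:cond_IIa}) and is automatic for type~2; in both cases $\mu^{(0)}>0$, giving supercriticality.

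The core of the argument is persistence of these leading-order solutions under the full untruncated system. I would write the perturbed unknowns as in~\cref{eq:perturb_hexa+roll}, absorb the one scaling gauge freedom by fixing $x_1=0$, and compute the differential at $\varepsilon=0$ with respect to $(x_2,x_3,v_4,\nu)$, noting that all the higher-order contributions from~\cref{eq:QP_case} or~\cref{eq:struct_P1_periodic_case} enter only through right-hand sides $h_j$ each carrying a positive power of~$\varepsilon$. Forming the differences $h_2-h_1$ and $h_3-h_1$ isolates $x_2$ and $x_3$, and invertibility of the linearization matrix $M$ reduces to the two nondegeneracy conditions~\cref{eq:cond_inv_1} and~\cref{eq:cond_inv_2}; these coincide when $\zeta_3=\pm1$ and both collapse to $\kappa^2\neq\frac{9}{4}$ (\cref{eq:cond_inv_IIb}) when $\zeta_3=\frac{2}{3}\kappa(-1)^{k+1}$. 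Once $M$ is boundedly invertible, the implicit function theorem delivers $(x_2,x_3,v_4,\nu)$, hence $u$ and $\mu$, as power series in~$\varepsilon$ with coefficients depending on~$\kappa$, with $u$ odd and $\mu$ even in~$\varepsilon$.

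The step I expect to be the main obstacle is upgrading these formal series to genuine solutions of~\cref{eq:SHeq} in the quasiperiodic case $\alpha\in\Eii$. In the periodic case $\alpha\in\Ep\cap\EO$ the pseudo-inverse of $\mathbf{L}_0$ is bounded, the series converge for $\varepsilon$ small, and classical Lyapunov--Schmidt closes the argument directly. In the quasiperiodic case, however, the range equation is afflicted by small divisors, so convergence of the series is not guaranteed; here I would follow the Nash--Moser scheme of~\cite{Iooss2019} without change, using the Diophantine structure of $\Eii$ (see \cref{app:definitions_of_all_sets}) to control the pseudo-inverse on a ``good set'' of parameters and verifying the same transversality condition as in~\cite{Iooss2019}, so that true quasipattern solutions exist on a union of intervals in~$\varepsilon$ of asymptotically full measure as $\varepsilon\to0$. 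Throughout, the phason freedom in $\theta_1,\theta_2$ and in $\theta_j$ ($j=4,5,6$) survives the construction, giving the stated arbitrary relative translations of hexagons and rolls.
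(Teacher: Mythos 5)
Your proposal is correct and follows essentially the same route as the paper: the same balanced scaling $\chi=\varepsilon\kappa$, $u_4=\varepsilon^2u_4^{(0)}$, $\mu=\varepsilon^2\mu^{(0)}$, the same leading-order elimination producing $(3\zeta_3+2\kappa(-1)^k)(\zeta_3^2-1)=0$ and hence the two branches~\cref{eq:sol_IIa} and~\cref{eq:sol_IIc}, and the same persistence argument via the perturbation~\cref{eq:perturb_hexa+roll}, the gauge choice $x_1=0$, and invertibility of $M$ under~\cref{eq:cond_inv_1}--\cref{eq:cond_inv_2}, closed by Lyapunov--Schmidt in the periodic case and the Nash--Moser scheme of~\cite{Iooss2019} for $\alpha\in\Eii$. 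No substantive gaps or deviations from the paper's proof.
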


\begin{remark}
\red{As for hexa-rolls with rolls dominating}, these solutions are new, even in
the periodic case. Moreover,
notice that \red{in this case also} we have the \red{surprising} freedom on
shifts for the roll part, even in the periodic case. This follows from the
reality of the 4-dimensional system.
 \end{remark}

\section{Conclusion}\label{sec:conclusion}
We have shown the existence of new quasipattern solutions of the Swift--Hohenberg equation with quadratic as well as cubic nonlinearity: superposed hexagons with unequal amplitudes (valid only for small $\mu,\chi$). The existence of superposed hexagons with equal amplitudes ($\varepsilon = \pm \delta$) had already been established in~\cite{Fauve2019,Iooss2019}. We have also found (provided the cubic coefficients satisfy an inequality) a new class of solutions, superposed hexagons and rolls: the roll amplitude dominates if the quadratic coefficient~$\chi$ is not small, but for small~$\chi=\mathcal{O}(\sqrt{|\mu|})$, the rolls and hexagons can have similar amplitudes. For small~$\chi$, we have also found superposed symmetry-broken hexagons and rolls. Our approach relies on the small-divisor techniques from~\cite{Iooss2019} for solutions of the amplitude equations to be translated into quasipattern solutions of the PDE~\cref{eq:SHeq}. The end result is that for a full measure set of angles ($\alpha\in\Eii$), two hexagonal patterns with essentially arbitrary relative orientation and position can be superposed to produce quasipattern solutions of the Swift--Hohenberg equation. Similarly, superposed hexagons and rolls, again with essentially arbitrary relative orientation and position, also give quasipattern solutions.

In the periodic case we recover the superposed hexagon solutions already known from~\cite{Dionne1997}. We have shown that the additional triangular superlattice solutions identified by~\cite{Silber1998} in the case $(a,b)=(3,2)$ also arise for general $(a,b)$. We find a new class of periodic superposed hexagon and roll solutions, provided the cubic coefficients satisfy an inequality \red{and $a>b+1$}. Surprisingly, even in the periodic case, the hexagons and rolls can be translated arbitrarily with respect to each other.

The approach we have taken differs from that familiar from equivariant bifurcation theory (which applies only in the periodic case). When the amplitude equations reduce to a single equation, the results are of course the same. The new solutions arise in cases where there is more than one equation to solve, and in some cases, these solutions have no symmetry. 
\red{Our approach indicates how a wider class of pattern solutions can be 
investigated in pattern formation problems posed on the whole plane. It is
likely that there are many other solutions still to be found: 
hexagons with superposed rhombuses dominating (see~\cite{Subramanian2021b}), 
three sets of
rolls at different angles to each other, superpositions of hexagons and
squares, or squares and rolls at different angles, \dots. In all of these cases, careful 
consideration will have to be given to the Diophantine condition and to the 
behavior of high-order nonlinear modes.}

We have not discussed stability of these quasipatterns: that is an important and difficult problem. However, the reason for including a quadratic term in the Swift--Hohenberg equation~\cref{eq:SHeq} is that three-wave interactions generated by quadratic terms, particularly in problems in which patterns on two length scales are simultaneously unstable, are known to play a key role in stabilizing quasipatterns in a variety of contexts~\cite{Edwards1994,Lifshitz1997, Rucklidge2009,Mermin1985,Newell1993,Zhang1997a, Rucklidge2012, Skeldon2015,Subramanian2016, Ratliff2019, Castelino2020,Archer2013,Argentina2012}. Despite this, we do not expect any of the new solutions to be stable in the Swift--Hohenberg equation, but they (or related solutions) may be stable in other situations.

The recently discovered ``bronze-mean hexagonal
quasicrystals'' described in~\cite{Dotera2017,Nakakura2019,Archer2021} fall into the class of superposed hexagons. These
quasicrystals are not solutions of a PDE, but rather are constructed from assemblies of three tiles: small equilateral
triangles, large equilateral triangles, and rectangles. The Fourier transform of a
six-fold aperiodic tiling made from these tiles has prominent peaks arranged as in
\cref{fig:wavevectors}(c), with $\alpha=25.66^\circ$, and the ideas presented here may be relevant to existence of this type of quasipattern in a pattern-forming~\hbox{PDE}.

Finally, we mention a potential application of this body of work to bi-layer graphene, where two layers of hexagonally connected carbon atoms are superposed with a small orientation difference~\cite{Zeller2014}: for $\alpha$ about~$1^\circ$, these bi-layer structures can be superconducting~\cite{Yankowitz2019}. Our work may be relevant for finding quasiperiodic structures in models of this system.

\section*{Acknowledgments}
We would like to acknowledge conversations with Tomonari Dotera, Ian Melbourne,
Mary Silber and Priya Subramanian, and the anonymous referees for their
constructive comments. We are grateful to Jay Fineberg and Arshad Kudrolli for
permission to reproduce the images in \cref{fig:fw_examples}. AMR is grateful
for support from the Leverhulme Trust, UK (RF-2018-449/9), and from the EPSRC,
UK (EP/P015611/1).

\appendix

\section{Definitions of all the sets of angles} \label{app:definitions_of_all_sets}
\red{Here we first recall definitions given in main text, and supplement these
with descriptions of $\Ei$ and~$\Eii$.}

\red{The set $\Ep$ (periodic case) is given in \cref{def:Ep}, and has  $\cos\alpha$
and $\sqrt{3}\sin\alpha$ both rational, with $\alpha\in(0,\frac{\pi}{3})$. The
complement of $\Ep$, restricted to $(0,\frac{\pi}{6}]$, is $\Eqp$
(quasiperiodic case). The set $\EO$, given in \cref{def:E0}, is the set of
angles~$\alpha$ such that the only solutions of $|\bk(\bmm)|=1$ are $\pm
\bk_{j}$, $j=1,\dots,6$.}

\red{The two sets~$\Ei$ and $\Eii$ are defined in detail in~\cite{Iooss2019} and
described below: these are angles $\alpha\in\Eqp$ where additional Diophantine
conditions are satisfied. The final set is~$\Eii$.}

\red{Lemma~7 of~\cite{Iooss2019} states that for nearly all $\alpha \in \Eqp\cap
(0,\frac{\pi}{6}]$, and for any $\varepsilon>0$, there exists $c>0$ such that
for all $\bmm\neq 0$ with $|\bk(\bmm)|\neq 1$,}
 \begin{equation*}
 (|\bk(\bmm)|^{2}-1)^{2}
 \geq
 \frac{c}{|N_{\bk}|^{12+\varepsilon}}
 \end{equation*}
\red{holds. The set $\Ei$ is the set of all $\alpha$'s such
that this inequality holds, and $\Ei$ is of full measure.}

\red{Let us now choose an integer $1\leq d\leq4$ and consider an expression of the form}
 \begin{equation} \label{eq:Eii_P_expression}
 P= a_{0} + \sum_{1\leq n\leq d}a_{n0}\cos ^{n}\alpha +
                                \sqrt{3}a_{n-1,1}\sin \alpha \cos^{n-1}\alpha,
 \end{equation}
\red{where the coefficients $\mathbf{a}=(a_{0},a_{n0},a_{n-1,1},n=1,\dots,d)$ are integers: $\mathbf{a}\in\mathbb{Z}^{(2d+1)}$.  
The following proposition is proved in~\cite{Iooss2019} (see Proposition~21):}

\begin{proposition}
\red{For nearly all $\alpha\in\EO\cap\Ei\cap (0,\frac{\pi}{6}],$ there exists $c>0$ such that for all $\mathbf{a}\in\mathbb{Z}^{(2d+1)}\backslash\{0\}$ and for $l=2d(2d+1)$,}
\begin{equation*}
|P|\geq \frac{c}{|\mathbf{a}|^{l}},
\end{equation*}
\red{where $\mathbf{a}=(a_{0},a_{n0},a_{n-1,1},n=1,\dots,d)$, $1\leq d\leq 4$, and}
\begin{equation*}
|\mathbf{a}|=|a_{0}|+\sum_{1\leq n\leq d}|a_{n0}|+|a_{n-1,1}|.
\end{equation*}
\end{proposition}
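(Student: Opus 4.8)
The plan is to prove this as a metric (full-measure) statement via a Borel--Cantelli argument, the essential nondegeneracy input being that the $2d+1$ functions occurring in $P$ are linearly independent. Fix $d$ and regard $P=P(\alpha;\mathbf{a})$ as a function of $\alpha$ for each fixed $\mathbf{a}\in\mathbb{Z}^{2d+1}$. Its even part is a combination of $1,\cos\alpha,\dots,\cos^{d}\alpha$ and its odd part a combination of $\sin\alpha,\sin\alpha\cos\alpha,\dots,\sin\alpha\cos^{d-1}\alpha$; writing each $\cos^{n}\alpha$ as a polynomial in $\cos\alpha$ and using $\sin\alpha\neq0$ on $(0,\frac{\pi}{6}]$ shows these $2d+1$ functions are linearly independent. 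In fact they lie in, hence form a basis of, the kernel of the order-$(2d+1)$ operator $\mathbf{L}=\frac{d}{d\alpha}\prod_{j=1}^{d}\bigl(\frac{d^{2}}{d\alpha^{2}}+j^{2}\bigr)$, whose product contains only odd powers of $\frac{d}{d\alpha}$, so that the coefficient of the next-to-top derivative vanishes. By Abel's identity their Wronskian is therefore a nonzero \emph{constant}, and consequently there is $c_{0}>0$, independent of both $\alpha$ and $\mathbf{a}$, with
\begin{equation*}
\max_{0\le j\le 2d}\bigl|P^{(j)}(\alpha;\mathbf{a})\bigr|\ge c_{0}\,|\mathbf{a}|
\qquad\text{for all }\alpha\in[0,\tfrac{\pi}{6}].
\end{equation*}

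Next I would bound the measure of the resonant set $B_{\mathbf{a}}(\epsilon)=\{\alpha\in(0,\frac{\pi}{6}]:|P(\alpha;\mathbf{a})|<\epsilon\}$. Since $P$ is a trigonometric polynomial of degree $\le d$, all its derivatives satisfy $\|P^{(j)}\|_{\infty}\le C_{d}|\mathbf{a}|$; combined with the lower bound above, at every point of $B_{\mathbf{a}}(\epsilon)$ with $\epsilon<\frac{1}{2}c_{0}|\mathbf{a}|$ some derivative of order between $1$ and $2d$ has size at least $\frac12 c_{0}|\mathbf{a}|$. A Pyartli-type sublevel estimate, obtained by a descent through the derivative orders $2d,2d-1,\dots,1$ (each step using that $|f^{(k)}|\ge\rho$ forces $\bigl|\{|f|<\epsilon\}\bigr|\le C_{k}(\epsilon/\rho)^{1/k}$), then yields
\begin{equation*}
\bigl|B_{\mathbf{a}}(\epsilon)\bigr|\le C\Bigl(\frac{\epsilon}{|\mathbf{a}|}\Bigr)^{1/(2d)}.
\end{equation*}

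I would then run Borel--Cantelli. The number of $\mathbf{a}\in\mathbb{Z}^{2d+1}$ with $|\mathbf{a}|=h$ is $O(h^{2d})$. Taking $\epsilon_{\mathbf{a}}=c_{*}|\mathbf{a}|^{-l}$ with $l=2d(2d+1)$ gives $\bigl|B_{\mathbf{a}}(\epsilon_{\mathbf{a}})\bigr|\le C'|\mathbf{a}|^{-(l+1)/(2d)}$, so that
\begin{equation*}
\sum_{\mathbf{a}\neq0}\bigl|B_{\mathbf{a}}(\epsilon_{\mathbf{a}})\bigr|\le C''\sum_{h\ge1}h^{2d}\,h^{-(l+1)/(2d)}=C''\sum_{h\ge1}h^{\,2d-(l+1)/(2d)},
\end{equation*}
which converges precisely because $l=2d(2d+1)$ makes the exponent strictly less than $-1$; this is exactly where the value of $l$ is forced. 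By Borel--Cantelli, for almost every $\alpha$ only finitely many $B_{\mathbf{a}}(\epsilon_{\mathbf{a}})$ contain $\alpha$; for each of these finitely many $\mathbf{a}$ one has $P(\alpha;\mathbf{a})\neq0$ (as $P$ is analytic and not identically zero), so replacing $c_{*}$ by a smaller $c=c(\alpha)>0$ absorbs them and gives $|P|\ge c\,|\mathbf{a}|^{-l}$ for every $\mathbf{a}\neq0$. Intersecting this full-measure set with the full-measure sets $\EO$ and $\Ei$ preserves full measure, which proves the proposition.

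I expect the main obstacle to be obtaining the uniform lower bound $\max_{j}|P^{(j)}|\ge c_{0}|\mathbf{a}|$ together with the sublevel estimate with the sharp exponent $1/(2d)$ \emph{uniformly in} $\mathbf{a}$, even though the index of the dominant derivative varies from point to point. The constancy of the Wronskian (via Abel's identity) disposes of the first difficulty cleanly by making $c_{0}$ independent of $\alpha$, while the second requires the careful inductive descent through derivative orders rather than a single application of Pyartli's lemma; everything downstream is routine lattice-point counting.
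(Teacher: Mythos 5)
Your proof is correct. Note that the paper itself does not prove this proposition --- it only records it as ``proved in~\cite{Iooss2019} (see Proposition~21)'' --- so there is no in-paper argument to compare against; your self-contained metric (Borel--Cantelli) argument, with the uniform nondegeneracy bound $\max_{j\le 2d}|P^{(j)}|\ge c_0|\mathbf{a}|$ obtained from the constant nonvanishing Wronskian, the sublevel estimate with exponent $1/(2d)$, and the lattice count $O(h^{2d})$, lands exactly on the threshold $l=2d(2d+1)$ appearing in the statement, which is strong evidence that it reconstructs the same strategy as the cited reference (the only plausible variant being to get the $1/(2d)$ sublevel bound from the $\le 2d$ zeros of the degree-$d$ trigonometric polynomial $P$ rather than from the derivative descent). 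The one step worth writing out fully if this were to be included is the varying-order Pyartli descent with constants uniform in $\mathbf{a}$; as you note, this works precisely because $\|P\|_{C^{2d+1}}/(c_0|\mathbf{a}|)=O(1)$, and the final passage from ``finitely many exceptional $\mathbf{a}$'' to a single $c(\alpha)>0$ correctly uses that each $P(\cdot;\mathbf{a})$ is analytic and not identically zero.
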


\red{The set $\Eii$ is the set of all $\alpha\in(0,\frac{\pi}{6}]$ such that this
inequality holds for any $d\leq 4$, \red{provided that $|\mathbf{a}|\neq0$}.
The set $\Eii$ is a subset of $\EO\cap\Ei$, and $\Eii$ is of full
measure~\cite{Iooss2019}.}

\section{Proof of the properties of two example angles} \label{app:two_examples}
\red{While the set~$\Eii$ is of full measure~\cite{Iooss2019}, in practice it
can be difficult to determine whether any particular angle is or is not in the
set. Here we take two examples and prove that $\alpha\approx25.66^\circ$
($\cos\alpha=\frac{1}{4}\sqrt{13}$) is in~$\Eii$, while
$\alpha\approx26.44^\circ$ ($\cos\alpha=\frac{1}{12}(5+\sqrt{33})$) is not.}

\subsection{First example}
Let us consider $\alpha\in\Eqp$ such that
\begin{equation*}
\cos \alpha =\frac{\sqrt{13}}{4},\text{ \ }\sqrt{3}\sin \alpha =\frac{3}{4},
\end{equation*}
with $\alpha\approx25.66^\circ$.
In order to show that $\alpha\in\Eii$, 
we must first prove that $\alpha\in\EO$, which means that the
points of the lattice $\Gamma$ on the unit circle are only the twelve basic
points $\pm{\bk_{j}}$, $j=1,\dots,6$.
For
 \[
 \bk=n_{1}\bk_{1}+n_{2}\bk_{2}+n_{4}\bk_{4}+n_{5}
 \bk_{5},\text{ \ }n_{j}\in 
 \mathbb{Z},
 \]
the condition $|\bk|^{2}=1$ becomes
\begin{eqnarray*}
1 &=&n_{1}^{2}+n_{2}^{2}+n_{4}^{2}+n_{5}^{2}-n_{1}n_{2}-n_{4}n_{5}+{} \\
&&{}+\cos \alpha (2n_{1}n_{4}+2n_{2}n_{5}-n_{1}n_{5}-n_{2}n_{4})+{} \\
&&{}+\sqrt{3}\sin \alpha (n_{2}n_{4}-n_{1}n_{5}),
\end{eqnarray*}
which, separating the rational and irrational parts, and with the given value of~$\alpha$, leads to
\begin{align}
2n_{1}n_{4}+2n_{2}n_{5}-n_{1}n_{5}-n_{2}n_{4} &=0,  \label{eq:syst_for_nj} \\
3(n_{2}n_{4}-n_{1}n_{5})+4(n_{1}^{2}+n_{2}^{2}+n_{4}^{2}+n_{5}^{2}-n_{1}n_{2}-n_{4}n_{5}) &=4.
\nonumber
\end{align}
Solving with respect to $n_{5}$ leads to
\[
n_{5}=n_{4}\frac{n_{2}-2n_{1}}{2n_{2}-n_{1}},
\]
provided that $n_{1}\neq 2n_{2}$,
\begin{eqnarray*}
0 &=&4n_{4}^{2}\left( 1+(\frac{n_{2}-2n_{1}}{2n_{2}-n_{1}})^{2}
     - \frac{n_{2}-2n_{1}}{2n_{2}-n_{1}}\right) + \\
&&{}+3n_{4}\left( n_{2}-n_{1}\frac{n_{2}-2n_{1}}{2n_{2}-n_{1}}\right)
     + 4(n_{1}^{2}+n_{2}^{2}-n_{1}n_{2}-1),
\end{eqnarray*}
i.e.,
\begin{equation*}
6n_{4}^{2}(n_{1}^{2}+n_{2}^{2}-n_{1}n_{2})+3n_{4}(n_{1}^{2}+n_{2}^{2}-n_{1}n_{2})(2n_{2}-n_{1})+2(n_{1}^{2}+n_{2}^{2}-n_{1}n_{2}-1)(2n_{2}-n_{1})^{2}=0.
\end{equation*}
The discriminant of this quadratic equation for~$n_4$ reads
\begin{eqnarray*}
\Delta 
&=&9(n_{1}^{2}+n_{2}^{2}-n_{1}n_{2})^{2}(2n_{2}-n_{1})^{2}-48(n_{1}^{2}+n_{2}^{2}-n_{1}n_{2}-1)(2n_{2}-n_{1})^{2}(n_{1}^{2}+n_{2}^{2}-n_{1}n_{2})
 \\
&=&3(n_{1}^{2}+n_{2}^{2}-n_{1}n_{2})(2n_{2}-n_{1})^{2}\left[
16-13(n_{1}^{2}+n_{2}^{2}-n_{1}n_{2})\right]. 
\end{eqnarray*}
We observe that $\Delta $ should be $\geq 0$, and since 
$(n_{1}^{2}+n_{2}^{2}-n_{1}n_{2})(2n_{2}-n_{1})^{2}\geq 0$, this implies
\[
16\geq 13(n_{1}^{2}+n_{2}^{2}-n_{1}n_{2}).
\]
This in turn implies that 
\[
n_{1}^{2}+n_{2}^{2}-n_{1}n_{2}=1\text{ or }0.
\]
The only solutions are
\[
(n_{1},n_{2})=(0,0),(0,\pm 1),(\pm 1,0),(\pm 1,\pm 1),
\]
leading to
\begin{eqnarray*}
\Delta  &=&9\text{ for }(n_{1},n_{2})=(\pm 1,0),(\pm 1,\pm 1), \\
\Delta  &=&36\text{ for }(n_{1},n_{2})=(0,\pm 1).
\end{eqnarray*}
The case $(n_{1},n_{2})=(0,0)$ in~\cref{eq:syst_for_nj}, leads to 
$n_{4}^{2}+n_{5}^{2}-n_{4}n_{5}=1$, which correspond to $\pm\bk_{4}$, $\pm \bk_{5}$ and $\pm \bk_{6}$.
The case $(n_{1},n_{2})=(\pm 1,0),(\pm 1,\pm 1)$ leads to $n_{4}=0$ or $\mp\frac{1}{2}$ (which is not acceptable). Finally the case is $(n_{1},n_{2})=(0,\pm 1)$
gives
\[
n_{4}=0\text{ or }\mp 1, 
\]
and $n_{5}=0$ or $\pm\frac{1}{2}$, and the only good possibility is $n_{4}=n_{5}=0$
and this corresponds to $\pm \bk_{1},\pm \bk_{2},\pm \bk_{3}$.
It remains to study the case $n_{1}=2n_{2}$, $n_{4}=0$. Replacing this in~\cref{eq:syst_for_nj}, we obtain
\[
6n_{2}^{2}-3n_{2}n_{5}+2n_{5}^{2}-2=0 
\]
and it is easy to conclude that there are no other solutions of~\cref{eq:syst_for_nj}.
The conclusion is that $\alpha\in\EO$.

Let us now prove that $\alpha$ satisfies the two Diophantine conditions
required in~\cite{Iooss2019} and described in \cref{app:definitions_of_all_sets}.
We observe that
\begin{eqnarray*}
4(|\bk|^{2}-1) &=&q_{0}\sqrt{13}+q_{1}, \\
q_{0} &=&2n_{1}n_{4}+2n_{2}n_{5}-n_{1}n_{5}-n_{2}n_{4}, \\
q_{1}
&=&3(n_{2}n_{4}-n_{1}n_{5})+4(n_{1}^{2}+n_{2}^{2}+n_{4}^{2}+n_{5}^{2}-n_{1}n_{2}-n_{4}n_{5})-4.
\end{eqnarray*}
Since $\sqrt{13}$ is a quadratic irrational (the solution of a quadratic equation with integer coefficients), it is known~\cite{Hardy1960} that
there exists $C>0$ such that
\[
|q_{0}\sqrt{13}+q_{1}|\geq \frac{C}{|q_{0}|+|q_{1}|},\text{ }
(q_{0},q_{1})\in \mathbb{Z}^{2}\backslash \{0\}.
\]
Since we have
\begin{eqnarray*}
|q_{0}| &\leq &\frac{3}{2}(n_{1}^{2}+n_{2}^{2}+n_{4}^{2}+n_{5}^{2}), \\
|q_{1}| &\leq &\frac{15}{2}(n_{1}^{2}+n_{2}^{2}+n_{4}^{2}+n_{5}^{2})+4 \\
|q_{0}|+|q_{1}| &\leq &11(n_{1}^{2}+n_{2}^{2}+n_{4}^{2}+n_{5}^{2}),
\end{eqnarray*}
hence
\[
  (|\bk|^{2}-1)^{2}\geq \frac{C^{\prime }}
                             {(n_{1}^{2}+n_{2}^{2}+n_{4}^{2}+n_{5}^{2})^{2}},
\]
which means that $\alpha \in \Ei$ as defined in \cite{Iooss2019} and described in \cref{app:definitions_of_all_sets}.

Now for $\Eii$,  let us follow the lines of \cref{app:definitions_of_all_sets}. \red{For this choice of~$\alpha$, and for any integer $d\leq4$, the expression~\cref{eq:Eii_P_expression} takes the form}
\[
P=\frac{b_{0}+b_{1}\sqrt{13}}{b_2},\text{  }b_0,b_1,b_2 \in \mathbb{Z},
\]
\red{where the integer denominator depends on~$\alpha$ and~$d$ but not on the integers~$\mathbf{a}$ in~\cref{eq:Eii_P_expression}.}
Then, as soon as $|b_0|+|b_1|\neq 0$ we again have a Diophantine estimate
\[
P>\frac{C^{\prime }}{|b_{0}|+|b_{1}|},
\]
where $b_2$ is absorbed into~$C'$. This is
the required property for $\alpha\in\Eii$ in
\cite{Iooss2019} (see also \cref{app:definitions_of_all_sets}), and so the proof that $\alpha\in\Eii$ is complete. More generally if $\cos\alpha$ is rational and $\sqrt{3}\sin\alpha$ is a quadratic irrational, or vice versa, $\Ei$~should be satisfied, as should the Diophantine requirement of~$\Eii$.

\subsection{Second example}
Let us consider $\alpha \in \mathcal{E}_{qp}$ such that
 \begin{equation*}
 \cos \alpha =\frac{5+\sqrt{33}}{12},\text{ \ }\sqrt{3}\sin \alpha =\frac{15-\sqrt{33}}{12}, 
 \end{equation*}
with $\alpha\approx26.44^\circ$.
We wish to prove that $\alpha\notin\Eii$. We have
 \[
 \bk=n_{1}\bk_{1}+n_{2}\bk_{2}+n_{4}\bk_{4}+n_{5}\bk_{5},\text{ \ }
     n_{j}\in \mathbb{Z},
 \]
and, again separating rational and irrational parts, the condition $|\bk|^{2}=1$ 
leads to
 \begin{equation}
 0=3(n_{1}^{2}+n_{2}^{2}+n_{4}^{2}+n_{5}^{2}-n_{1}n_{2}-n_{4}n_{5}-1)+5(n_{2}n_{4}-n_{1}n_{5})
 \label{eq:cond1}
 \end{equation}
and
 \begin{equation}
 n_{1}n_{4}+n_{2}n_{5}-n_{2}n_{4}=0.  \label{eq:cond2}
 \end{equation}
Then we observe that
 \[
 (n_{1},n_{2},n_{4},n_{5})=(2,1,-1,1)
 \]
is solution of~\cref{eq:cond1}, \cref{eq:cond2}. This means that the following
wave vectors lie on the unit circle
\begin{eqnarray*}
&&\pm (\bk_{1}-\bk_{3}-\bk_{4}+\bk_{5}) \\
&&\pm (\bk_{2}-\bk_{1}-\bk_{5}+\bk_{6}) \\
&&\pm (\bk_{3}-\bk_{2}-\bk_{6}+\bk_{4})
\end{eqnarray*}
and it is clear that $\pm \bk_{j}$, $j=1,\dots,6$ are not the only
elements of $\Gamma$ on the unit circle, so $\alpha\notin\EO$
and~$\alpha\notin\Eii$.

\section{Proof of \cref{lem:Ep_ab}} \label{app:proof_of_Lemma_2p2}
Let us show the following

\begin{lemma}
Let $\alpha \in \Ep\cap (0,\frac{\pi}{6})$, with $\cos\alpha$ and $\sqrt{3}\sin\alpha$ both rational, and define positive
integers $p,q,p^{\prime }$ such that
 \begin{equation}
 \cos \alpha =\frac{p}{q},
 \qquad
 \sqrt{3}\sin \alpha =\frac{p^{\prime }}{q},
 \qquad
 3p^{2}+p^{\prime 2}=3q^{2},  \label{eq:relat_alpha}
 \end{equation}
where $(p,q,p^{\prime })$ have no common divisor. We define~$d$ to be the greatest
common divisor of $2(p+q)$ and $(p+q+p^{\prime })$. Then, $(a,b)$
defined by
 \begin{equation}\label{eq:ab}
 a=\frac{2(p+q)}{d},
 \qquad
 b=\frac{p+q+p^{\prime }}{d}
 \end{equation}
are relatively prime integers that 
satisfy~\cref{eq:cosalpha} and $a>b>\tfrac{1}{2}a>0$.
\end{lemma}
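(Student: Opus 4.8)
The plan is to verify the four assertions of the statement—integrality, coprimality, the identities in \cref{eq:cosalpha}, and the chain of inequalities $a>b>\frac12 a>0$—essentially by direct computation, exploiting the single algebraic relation $3p^{2}+p'^{2}=3q^{2}$ from \cref{eq:relat_alpha}. First I would record that, since $\alpha\in\Ep\cap(0,\frac{\pi}{6})$, both $\cos\alpha$ and $\sqrt{3}\sin\alpha$ are strictly positive, so $p,q,p'$ are all positive integers, and that the relation rearranges to $p'^{2}=3(q^{2}-p^{2})=3(q-p)(q+p)$, which is the key substitution used throughout. Integrality of $a$ and $b$ is immediate, because $d$ is by definition a common divisor of the two numerators $2(p+q)$ and $p+q+p'$.

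The heart of the argument is the evaluation of the three polynomial combinations appearing in \cref{eq:cosalpha}. Writing $s=p+q$ so that $a=2s/d$ and $b=(s+p')/d$, I would compute
\begin{equation*}
a^{2}-ab+b^{2}=\frac{4s^{2}-2s(s+p')+(s+p')^{2}}{d^{2}}=\frac{3s^{2}+p'^{2}}{d^{2}},
\end{equation*}
and then use $p'^{2}=3(q-p)(q+p)$ to collapse the numerator to $3(p+q)^{2}+3(q-p)(q+p)=6q(p+q)$. The same manipulation gives $a^{2}+2ab-2b^{2}=(6s^{2}-2p'^{2})/d^{2}=12p(p+q)/d^{2}$, and, using $2b-a=2p'/d$, also $3a(2b-a)=3\cdot\frac{2s}{d}\cdot\frac{2p'}{d}=12(p+q)p'/d^{2}$. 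Forming the two ratios, the common factor $12(p+q)/d^{2}$ cancels against $2(a^{2}-ab+b^{2})=12q(p+q)/d^{2}$ and leaves $\cos\alpha=p/q$ and $\sqrt{3}\sin\alpha=p'/q$ exactly as in \cref{eq:relat_alpha}, which establishes \cref{eq:cosalpha}.

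Coprimality then follows from the standard identity $\gcd(m/d,n/d)=\gcd(m,n)/d$ when $d=\gcd(m,n)$: since $d=\gcd\bigl(2(p+q),\,p+q+p'\bigr)$ by the definition in \cref{eq:ab}, we get $\gcd(a,b)=1$. For the inequalities, positivity of $\frac12 a=(p+q)/d$ is clear; $b>\frac12 a$ holds because $b-\frac12 a=p'/d>0$; and the only nontrivial step, $a>b$, reduces to $2(p+q)>p+q+p'$, i.e.\ $p+q>p'$. Squaring and substituting $p'^{2}=3(q-p)(q+p)$ turns this into $(p+q)^{2}>3(q-p)(q+p)$, which after dividing by $p+q>0$ becomes $p+q>3(q-p)$, i.e.\ $2p>q$, i.e.\ $\cos\alpha>\frac12$. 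This last inequality holds for every $\alpha<\frac{\pi}{3}$, and in particular on $(0,\frac{\pi}{6})$ where $\cos\alpha>\frac{\sqrt{3}}{2}$. I expect the only place demanding care to be the algebraic collapse of $a^{2}-ab+b^{2}$ and $a^{2}+2ab-2b^{2}$ via the relation $3p^{2}+p'^{2}=3q^{2}$; once those simplify to clean multiples of $p+q$, the identities, coprimality, and inequalities all drop out as bookkeeping.
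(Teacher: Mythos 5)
Your proposal is correct and follows essentially the same route as the paper's proof: both verify \cref{eq:cosalpha} by computing $a^{2}+2ab-2b^{2}$, $3a(2b-a)$ and $2(a^{2}-ab+b^{2})$ as $p$, $p'$, $q$ times the common factor $12(p+q)/d^{2}$ using $3p^{2}+p'^{2}=3q^{2}$, and then read off the inequalities from $\cos\alpha>\tfrac12$ and $p'>0$ (the paper phrases this as the chain $p'<\tfrac{3}{2}q<3p<3q$, you obtain $p+q>p'$ by squaring; the content is the same). Your explicit treatment of coprimality via $\gcd(a,b)=\gcd\bigl(2(p+q),p+q+p'\bigr)/d=1$ is a detail the paper leaves implicit, but it is exactly what the definition of $d$ provides.
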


\begin{proof}
Let us assume that~\cref{eq:relat_alpha} holds, and we seek integers $(a,b)$ such that~\cref{eq:cosalpha} holds. 
If $(a,b)$ are integers given by~\cref{eq:ab}, then 
(using $3p^{2}+p^{\prime 2}=3q^{2}$) this leads to
 \begin{eqnarray*}
 a^{2}+2ab-2b^{2} &=& p \times \frac{12(p+q)}{d^{2}}, \\
 3a(2b-a) &=& p' \times \frac{12(p+q)}{d^{2}}, \\
 2(a^{2}-ab+b^{2}) &=& q \times \frac{12(p+q)}{d^{2}}.
 \end{eqnarray*}
Dividing the first and second lines by the third leads to~\cref{eq:cosalpha}.
Now since $\alpha \in (0,\frac{\pi}{3})$ we have
\begin{equation*}
p^{\prime }<\tfrac{{3}}{2}q<3p<3q,
\end{equation*}
which leads to
\begin{equation*}
a>b>\tfrac{1}{2}a>0.
\end{equation*}
\end{proof}

It remains to check that we can assume $a+b$ not multiple of~$3$. Suppose that this is not the case, then we define
 \[
 a^{\prime }=\tfrac{1}{3}(a+b),\qquad 
 b^{\prime }=\tfrac{1}{3}(2a-b),
 \]
then it is easy to check that
\[
\cos\left(\frac{\pi}{3}-\alpha\right)
 = \frac{a^{\prime 2}+2a^{\prime }b^{\prime }-2b^{\prime 2}}
        {2(a^{\prime 2}-a^{\prime }b^{\prime }+b^{\prime 2})},
\qquad
\sqrt{3}\sin\left(\frac{\pi}{3}-\alpha\right)
   = \frac{3a^{\prime }(2b^{\prime }-a^{\prime })}
          {2(a^{\prime 2}-a^{\prime }b^{\prime }+b^{\prime 2})},
\]
hence we have for $\frac{\pi}{3}-\alpha$ the same formulas as for $\alpha $ in replacing $(a,b)$ by $(a^{\prime
},b^{\prime })$. This means that in such a case we should choose to consider
the angle $\alpha ^{\prime }=\frac{\pi}{3}-\alpha $ instead of $\alpha $, which
does not change the fact that $\alpha ^{\prime }\in (0,\frac{\pi}{3})$. If it
appears that $a^{\prime }+b^{\prime }$ is also multiple of~3, then we need to
iterate the operation. In fact this operation means that we can choose basis vectors $(s_{1}-s_{2},s_{1}+2s_{2})$ instead of 
$(s_{1},s_{2})$, for the periodic lattice: these are $\sqrt{3}$ larger.
The property~(iii) of \cref{lem:Ep_ab}  is proved. 

Now, we prove the density of $\mathcal{E}_{p}$. The continuous monotonous
function of~$x$
\begin{equation*}
\frac{x^{2}+2x-2}{2(x^{2}-x+1)}
\end{equation*}
makes a homeomorphism between$(1,2)$ and $(\frac{1}{2},1)$, it is clear that the set
of values taken by $\cos \alpha $ for $x=a/b$ rational is dense on $(\frac{1}{2},1)$.
It follows that the set of angles $\alpha \in \lbrack 0,\frac{\pi}{3})$ satisfying~\cref{eq:cosalpha} for $a/b$ rational is dense. Hence the property (i) of \cref{lem:Ep_ab} (the density of~$\mathcal{E}_{p}$) is proved.

\begin{remark}
We notice that $d$ divides $2(p+q)$, and $2p^{\prime }$ and that $d^{2}$
divides $12(p+q)$ because $p,q$ and $p^{\prime }$ have no common divisor and
$12(p+q)(q-p)=4p^{\prime 2}$
\end{remark}

\section{Proof of~\cref{eq:struct_P1_periodic_case}}\label{app:proof_of_periodic_case}
In this case the wave vectors $\bk_{j}$ are defined in \cref{eq:periodic_kj}, and \cref{eq:ident_kj} leads to
\begin{eqnarray*}
(n_{1}-n_{3})a+(n_{2}-n_{3})(b-a)+(n_{4}-n_{6})a-(n_{5}-n_{6})b &=&0, \\
(n_{1}-n_{3})b-(n_{2}-n_{3})a+(n_{4}-n_{6})(a-b)-(n_{5}-n_{6})a &=&0.
\end{eqnarray*}
Since $a$ and $b$ have no common factor, it follows that there exist 
$(j,l)\in\mathbb{Z}^{2}$ such that
 \begin{eqnarray*}
 n_{1}-n_{2}+n_{4}-n_{6} &=&jb, \\
 n_{2}-n_{3}-n_{5}+n_{6} &=&-ja, \\
 n_{2}-n_{3}-n_{4}+n_{5} &=&lb, \\
 n_{1}-n_{3}-n_{4}+n_{6} &=&la.
 \end{eqnarray*}
This system leads to
 \begin{eqnarray*}
 n_{1}-n_{3} &=&jb+\frac{l-j}{3}(a+b), \\
 n_{1}-n_{2} &=&la-\frac{l-j}{3}(a+b), \\
 n_{4}-n_{5} &=&-ja-\frac{l-j}{3}(a+b), \\
 n_{4}-n_{6} &=&jb-la+\frac{l-j}{3}(a+b).
 \end{eqnarray*}
Since $a+b$ is not a multiple of $3$, this implies that there is 
a $k\in\mathbb{Z}$ such that
 \begin{equation*}
 l-j=3k,
 \end{equation*}
and
 \begin{eqnarray*}
 n_{1}-n_{3} &=&(j+k)b+ka, \\
 n_{1}-n_{2} &=&(j+2k)a-kb, \\
 n_{4}-n_{5} &=&-(j+k)a-kb, \\
 n_{4}-n_{6} &=&(j+k)b-(j+2k)a.
 \end{eqnarray*}
We notice that the monomials invariant under $\mathbf{T}_{\bdelta}$, of
minimal degree found in \cite{Dionne1997} correspond to the following
choices: $(j,k)=(1,0),(-2,1),(1,-1)$, their complex conjugate being given by
the opposite values of $(j,k)$. The basic invariant monomials where $a$ and 
$b$ occur are found by looking for the 27 monomials independent of two of
the $z_{j}$:
 \begin{equation*}
 q_{I,1}=z_{2}^{b}{\bar{z}}_{3}^{a-b}{\bar{z}}_{5}^{a-b}z_{6}^{b},\text{
 \ }q_{I,2}={\bar{z}}_{2}^{a}{\bar{z}}_{3}^{b}z_{5}^{a}z_{6}^{a-b},
 \text{ \ }q_{I,3}=z_{2}^{a-b}z_{3}^{a}{\bar{z}}_{5}^{b}{\bar{z}}_{6}
 ^{a},
 \end{equation*}
 \begin{equation*}
 q_{II,1}=z_{2}^{b}{\bar{z}}_{3}^{a-b}z_{4}^{a-b}z_{6}^{a},\text{ \ }
 q_{II,2}=z_{2}^{a-b}z_{3}^{a}z_{4}^{b}{\bar{z}}_{6}^{a-b},\text{ \ }
 q_{II,3}=z_{2}^{a}z_{3}^{b}z_{4}^{a}z_{6}^{b},
 \end{equation*}
 \begin{equation*}
 q_{III,1}=z_{2}^{a}z_{3}^{b}z_{4}^{a-b}{\bar{z}}_{5}^{b},\text{ \ }
 q_{III,2}=z_{2}^{b}{\bar{z}}_{3}^{a-b}{\bar{z}}_{4}^{b}{\bar{z}}_{5}
 ^{a},\text{ \ }q_{III,3}=z_{2}^{a-b}z_{3}^{a}z_{4}^{a}z_{5}^{a-b},
 \end{equation*}
 \begin{equation*}
 q_{IV,1}=z_{1}^{b}z_{3}^{a}z_{5}^{a-b}{\bar{z}}_{6}^{b},\text{ \ }
 q_{IV,2}=z_{1}^{a-b}{\bar{z}}_{3}^{b}z_{5}^{b}z_{6}^{a},\text{ \ }
 q_{IV,3}=z_{1}^{a}z_{3}^{a-b}z_{5}^{a}z_{6}^{a-b},
 \end{equation*}
 \begin{equation*}
 q_{V,1}=z_{1}^{a-b}{\bar{z}}_{3}^{b}{\bar{z}}_{4}^{b}z_{6}^{a-b},\text{
 \ }q_{V,2}=z_{1}^{a}z_{3}^{a-b}{\bar{z}}_{4}^{a}{\bar{z}}_{6}^{b},
 \text{ \ }q_{V,3}=z_{1}^{b}z_{3}^{a}{\bar{z}}_{4}^{a-b}{\bar{z}_{6}}^{a},
 \end{equation*}
 \begin{equation*}
 q_{VI,1}=z_{1}^{a}z_{3}^{a-b}{\bar{z}}_{4}^{a-b}z_{5}^{b},\text{ \ }
 q_{VI,2}=z_{1}^{a-b}{\bar{z}}_{3}^{b}{\bar{z}}_{4}^{a}{\bar{z}}_{5}
 ^{a-b},\text{ \ }q_{VI,3}=z_{1}^{b}z_{3}^{a}z_{4}^{b}z_{5}^{a},
 \end{equation*}
 \begin{equation*}
 q_{VII,1}=z_{1}^{b}{\bar{z}}_{2}^{a-b}z_{5}^{a}z_{6}^{a-b},\text{ \ }
 q_{VII,2}=z_{1}^{a-b}z_{2}^{a}{\bar{z}}_{5}^{a-b}z_{6}^{b},\text{ \ }
 q_{VII,3}=z_{1}^{a}z_{2}^{b}z_{5}^{b}z_{6}^{a},
 \end{equation*}
 \begin{equation*}
 q_{VIII,1}=z_{1}^{b}{\bar{z}}_{2}^{a-b}{\bar{z}}_{4}^{a}{\bar{z}_{6}}^{b},
 \text{ \ }q_{VIII,2}=z_{1}^{a}z_{2}^{b}{\bar{z}}_{4}^{b}z_{6}^{a-b},
 \text{ \ }q_{VIII,3}=z_{1}^{a-b}z_{2}^{a}z_{4}^{a-b}z_{6}^{a},
 \end{equation*}
 \begin{equation*}
 q_{IX,1}=z_{1}^{b}{\bar{z}}_{2}^{a-b}{\bar{z}}_{4}^{a-b}z_{5}^{b},
 \text{ \ }q_{IX,2}=z_{1}^{a}z_{2}^{b}{\bar{z}}_{4}^{a}{\bar{z}}_{5}
 ^{a-b},\text{ \ }q_{IX,3}=z_{1}^{a-b}z_{2}^{a}{\bar{z}}_{4}^{b}{\bar{z}_{5}}^{a}.
 \end{equation*}
Notice that $q_{I,1}$, $q_{V,1}$, $q_{IX,1}$ are mentioned 
in~\cite{Dionne1997}. We may also notice that these invariants are not independent
since there are relationships between them and the~$u_{j}$.
We may group
these invariant monomials into nine sets of monomials 
\begin{eqnarray*}
G_{1} &=&\{q_{I,1},\overline{q_{V,1}},q_{IX,1}\}\text{ with degree }2a,
\\
G_{2} &=&\{q_{II,1},\overline{q_{VI,2}},q_{VII,1}\}\text{
with degree }3a-b, \\
G_{2}^{\prime } &=&\{q_{II,2},q_{VI,1},q_{VII,2}\}\text{with
degree }3a-b, \\
G_{3} &=&\{q_{III,1},q_{IV,1},q_{VIII,2}\}\text{ \ with
degree }2a+b, \\
G_{3}^{\prime } &=&\{q_{III,2},\overline{q_{IV,2}},q_{VIII,1}\}\text{ with degree }2a+b, \\
G_{4} &=&\{q_{III,3},q_{IV,3},\text{ }q_{VIII,3}\}\text{ with
degree }4a-2b, \\
G_{5} &=&\{\overline{q_{I,2}},q_{V,3},q_{IX,2}\}\text{ \ with
degree }3a, \\
G_{5}^{\prime } &=&\{q_{I,3},q_{V,2},q_{IX,3}\}\text{, with
degree }3a, \\
G_{6} &=&\{q_{II,3},q_{VI,3},q_{VII,3}\}\text{ with degree }
2a+2b,
\end{eqnarray*}
and their complex conjugates.

Let us control the action of various symmetries (other than~$\mathbf{T}_{\delta}$, 
which leaves them invariant), useful for obtaining the system of
6 complex bifurcation equations. We have
\begin{eqnarray*}
\mathbf{R}_{\pi/3}\{q_{I,1},\overline{q_{V,1}},q_{IX,1}\} &=&\{q_{V,1}
,\overline{q_{IX,1}},\overline{q_{I,1}}\}, \\
\mathbf{\tau }\{q_{I,1},\overline{q_{V,1}},q_{IX,1}\} &=&\{q_{I,1},q_{IX,1}
,\overline{q_{V,1}}\},   \\
\mathbf{S}\{q_{I,1},\overline{q_{V,1}},q_{IX,1}\} &=&\{q_{I,1},\overline{q_{V,1}},q_{IX,1}\}, 
\end{eqnarray*}
\begin{eqnarray*}
\mathbf{R}_{\pi/3}\{q_{II,1},\overline{q_{VI,2}},q_{VII,1}\} &=&\{q_{VI,2},
\overline{q_{VII,1}},\overline{q_{II,1}}\}, \\
\mathbf{\tau }\{q_{II,1},\overline{q_{VI,2}},q_{VII,1}\}
&=&\{q_{VII,2},q_{VI,1},q_{II,2}\},   \\
\mathbf{S}\{q_{II,1},\overline{q_{VI,2}},q_{VII,1}\} &=&(-1)^{a+b}\{q_{II,1},
\overline{q_{VI,2}},q_{VII,1}\},  
\end{eqnarray*}
\begin{eqnarray*}
\mathbf{R}_{\pi/3}\{q_{II,2},q_{VI,1},q_{VII,2}\} &=&
 \{\overline{q_{VI,1}},\overline{q_{VII,2}},\overline{q_{II,2}}\},  \\
\mathbf{\tau }\{q_{II,2},q_{VI,1},q_{VII,2}\} &=&
 \{q_{VII,1},\overline{q_{VI,2}},q_{II,1}\},   \\
\mathbf{S}\{q_{II,2},q_{VI,1},q_{VII,2}\} &=&
 (-1)^{a+b}\{q_{II,2},q_{VI,1},q_{VII,2}\},  
\end{eqnarray*}%
\begin{eqnarray*}
\mathbf{R}_{\pi/3}\{q_{III,1},q_{IV,1},q_{VIII,2}\} &=&
 \{\overline{q_{IV,1}},\overline{q_{VIII,2}},\overline{q_{III,1}}\},  \\
\mathbf{\tau }\{q_{III,1},q_{IV,1},q_{VIII,2}\} &=& 
 \{q_{IV,2},\overline{q_{III,2}},\overline{q_{VIII,1}}\},   \\
\mathbf{S}\{q_{III,1},q_{IV,1},q_{VIII,2}\} &=&
 (-1)^{b}\{q_{III,1},q_{IV,1},q_{VIII,2}\},  
\end{eqnarray*}%
\begin{eqnarray*}
\mathbf{R}_{\pi/3}\{q_{III,2},\overline{q_{IV,2}},q_{VIII,1}\} &=&
 \{q_{IV,2},\overline{q_{VIII,1}},\overline{q_{III,2}}\}, \\
\mathbf{\tau }\{q_{III,2},\overline{q_{IV,2}},q_{VIII,1}\} &=&
 \{\overline{q_{IV,1}},\overline{q_{III,1}},\overline{q_{VIII,2}}\},   \\
\mathbf{S}\{q_{III,2},\overline{q_{IV,2}},q_{VIII,1}\} &=&
 (-1)^{b}\{q_{III,2},\overline{q_{IV,2}},q_{VIII,1}\},  
\end{eqnarray*}
\begin{eqnarray*}
\mathbf{R}_{\pi/3}\{q_{III,3},q_{IV,3},q_{VIII,3}\} &=&
 \{\overline{q_{IV,3}},\overline{q_{VIII,3}},\overline{q_{III,3}}\},  \\
\mathbf{\tau }\{q_{III,3},q_{IV,3},q_{VIII,3}\} &=&
 \{q_{IV,3},q_{III,3},q_{VIII,3}\},   \\
\mathbf{S}\{q_{III,3},q_{IV,3},q_{VIII,3}\} &=&
 \{q_{III,3},q_{IV,3},q_{VIII,3}\},  
\end{eqnarray*}
\begin{eqnarray*}
\mathbf{R}_{\pi/3}\{\overline{q_{I,2}},q_{V,3},q_{IX,2}\} &=&
 \{\overline{q_{V,3}},\overline{q_{IX,2}},q_{I,2}\},  \\
\mathbf{\tau }\{\overline{q_{I,2}},q_{V,3},q_{IX,2}\} &=&
 \{\overline{q_{I,3}},\overline{q_{IX,3}},\overline{q_{V,2}}\},   \\
\mathbf{S}\{\overline{q_{I,2}},q_{V,3},q_{IX,2}\} &=&
 (-1)^{a}\{\overline{q_{I,2}},q_{V,3},q_{IX,2}\},  
\end{eqnarray*}
\begin{eqnarray*}
\mathbf{R}_{\pi/3}\{q_{I,3},q_{V,2},q_{IX,3}\} &=&
 \{\overline{q_{V,2}},\overline{q_{IX,3}},\overline{q_{I,3}}\},   \\
\mathbf{\tau }\{q_{I,3},q_{V,2},q_{IX,3}\} &=&
 \{q_{I,2},\overline{q_{IX,2}},\overline{q_{V,3}}\},   \\
\mathbf{S}\{q_{I,3},q_{V,2},q_{IX,3}\} &=&
 (-1)^{a}\{q_{I,3},q_{V,2},q_{IX,3}\},  
\end{eqnarray*}
\begin{eqnarray*}
\mathbf{R}_{\pi/3}\{q_{II,3},q_{VI,3},q_{VII,3}\} &=&
 \{\overline{q_{VI,3}},\overline{q_{VII,3}},\overline{q_{II,3}}\},  \\
\mathbf{\tau }\{q_{II,3},q_{VI,3},q_{VII,3}\} &=&
 \{q_{VII,3},q_{VI,3},q_{II,3}\},   \\
\mathbf{S}\{q_{II,3},q_{VI,3},q_{VII,3}\} &=&
 \{q_{II,3},q_{VI,3},q_{VII,3}\}.
\end{eqnarray*}
All this leads in a straightforward way to~\cref{eq:struct_P1_periodic_case}.

\section{Form of the cubic part of the bifurcation system}\label{app:form_of_the_cubic_part}
Equation~\cref{eq:SH},
projected orthogonally on the complement of
$\ker\mathbf{L}_{0}$, leads to 
\begin{equation}
\widetilde{\mathbf{L}_{0}}w=\mu w - 
   \chi \mathbf{Q}_{0}(v_{1}+w)^{2}-\mathbf{Q}_{0}(v_{1}+w)^{3},
  \label{eq:range}
\end{equation}
where we set
 \begin{equation*}
 u=v_{1}+w,\text{ }v_{1}\in \ker \mathbf{L}_{0},\text{ }
 w\in \{\ker \mathbf{L}_{0}\}^{\bot },
 \end{equation*}
and $\mathbf{Q}_{0}$ is the orthogonal projection on the complement of  $\ker\mathbf{L}_{0}$, $\widetilde{\mathbf{L}_{0}}$ being the restriction of $\mathbf{L}_{0}$
on its range, the inverse of which is the pseudo-inverse of $\mathbf{L}_{0}$
(bounded in the periodic case, unbounded in the quasiperiodic case because
of small divisors). Equation~\cref{eq:range} may be solved formally with
respect to $w$ as a power series in $v_{1}$ and~$\mu$. We have at quadratic order
\begin{equation*}
w_{2}=-\chi \widetilde{\mathbf{L}_{0}}^{-1}\mathbf{Q}_{0}v_{1}^{2},
\end{equation*}
and at cubic order in $v_{1},\mu $
 \begin{equation*}
 w_{3}=-\mu \chi \widetilde{\mathbf{L}_{0}}^{-2}\mathbf{Q}_{0}v_{1}^{2} + 
  2\chi^{2}\widetilde{\mathbf{L}_{0}}^{-1}\mathbf{Q}_{0}[v_{1}\widetilde{\mathbf{L}_{0}}^{-1}\mathbf{Q}_{0}v_{1}^{2}] - 
 \widetilde{\mathbf{L}_{0}}^{-1}\mathbf{Q}_{0}v_{1}^{3}.
 \end{equation*}
Now the bifurcation equation is
\begin{equation*}
0=\mu v_{1}-\chi \mathbf{P}_{0}(v_{1}+w)^{2}-\mathbf{P}_{0}(v_{1}+w)^{3},
\end{equation*}
where $\mathbf{P}_{0}$ is the orthogonal projection on $\ker \mathbf{L}_{0}$
and where we replace $w$ by its formal expansion in powers of $(\mu,v_{1})$.
This leads to
\begin{equation*}
\mu v_{1}=\chi \mathbf{P}_{0}v_{1}^{2}+\mathbf{P}_{0}v_{1}^{3}-2\chi ^{2}
\mathbf{P}_{0}v_{1}\widetilde{\mathbf{L}_{0}}^{-1}\mathbf{Q}_{0}v_{1}^{2}+
\mathcal{O}(v_{1}^{4}).
\end{equation*}
It follows that, up to cubic order in $(\mu,v_{1})$, the bifurcation system
reads
\begin{equation*}
\mu v_{1}=\chi \mathbf{P}_{0}v_{1}^{2}+\mathbf{P}_{0}v_{1}^{3}-2\chi ^{2}
\mathbf{P}_{0}v_{1}\widetilde{\mathbf{L}_{0}}^{-1}\mathbf{Q}_{0}v_{1}^{2}.
\end{equation*}
The scalar product with $e^{i\bk_{1}\cdot \bx}$ gives
\begin{equation}
\mu z_{1}=\chi \langle v_{1}^{2},e^{i\bk_{1}\cdot \bx}\rangle
+\langle v_{1}^{3},e^{i\bk_{1}\cdot \bx}\rangle -
 2\chi^{2}\langle v_{1}\widetilde{\mathbf{L}_{0}}^{-1}\mathbf{Q}_{0}v_{1}^{2},e^{i\bk_{1}\cdot\bx}\rangle.
\end{equation}
It is straightforward to  check that
\begin{equation*}
\langle v_{1}^{2},e^{i\bk_{1}\cdot \bx}\rangle =
  2 \overline{z_{2}}\overline{z_{3}},
\end{equation*}
\begin{eqnarray*}
 \langle v_{1}^{3},e^{i\bk_{1}\cdot \bx}\rangle  &=&
\langle 3z_{1}^{2}\overline{z_{1}}e^{i\bk_{1}\cdot\bx}+6\sum_{j=2,\dots,6}z_{1}z_{j}\overline{z_{j}}e^{i\bk_{1}\cdot\bx},
        e^{i\bk_{1}\cdot \bx}\rangle  \\
 &=& 3z_{1}u_{1}+6z_{1}(u_{2}+u_{3}+u_{4}+u_{5}+u_{6}).
\end{eqnarray*}
The next term is more complicated:
\begin{equation*}
\langle v_{1}\widetilde{\mathbf{L}_{0}}^{-1}\mathbf{Q}_{0}v_{1}^{2},
        e^{i\bk_{1}\cdot \bx}\rangle =
 \sum_{j=1,\dots,6}z_{j}
    \langle \widetilde{\mathbf{L}_{0}}^{-1}\mathbf{Q}_{0}v_{1}^{2},
            e^{i(\bk_{1}-\bk_{j})\cdot \bx}\rangle
 + \sum_{j=1,\dots,6}\overline{z_{j}}
    \langle \widetilde{\mathbf{L}_{0}}^{-1}\mathbf{Q}_{0}v_{1}^{2},
            e^{i(\bk_{1}+\bk_{j})\cdot \bx}\rangle ,
\end{equation*}
and the relevant terms in $v_{1}^{2}$ are those with an exponent  
 \begin{equation*}
 (\bk_{1}\mp \bk_{j})\cdot \bx,
 \text{ such that } \bk_{1}\mp \bk_{j}\neq \pm \bk_{l},
 \text{ } l=1,\dots,6.
\end{equation*}
the operator $\widetilde{\mathbf{L}_{0}}^{-1}$ provides a multiplication by
\begin{equation*}
(1-|\bk_{1}\mp \bk_{j}|^{2})^{-2}.
\end{equation*}
We notice that
\begin{eqnarray*}
 |\bk_{1}-\bk_{2}| &=&|\bk_{1}-\bk_{3}|,
 \text{ while }|\bk_{1}+\bk_{2}|,|\bk_{1}+\bk_{3}|\text{ do not appear,} \\
&&|\bk_{1}\pm \bk_{4}|,|\bk_{1}\pm \bk_{5}|,|\bk_{1}\pm \bk_{6}|
 \text{ all different and functions of }\alpha.
\end{eqnarray*}%
Hence 
\begin{equation*}
2\chi ^{2}\langle v_{1}\widetilde{\mathbf{L}_{0}}^{-1}\mathbf{Q}_{0}v_{1}^{2},
                  e^{i\bk_{1}\cdot \bx}\rangle
 = \chi^{2}z_{1}[c_{1}u_{1} + c_{2}(u_{2}+u_{3}) + c_{\alpha }u_{4} + 
                 c_{\alpha+}u_{5} + c_{\alpha-}u_{6}],
\end{equation*}
with
\begin{eqnarray*}
c_{1} &=&2(1+1/9),\text{ since }|2\bk_{1}|=2, \\   
c_{2} &=&2(1+1/2),\text{ since }|\bk_{1}-\bk_{2}|=\sqrt{3}, \\
c_{\alpha } &=&2[1+2(1-|\bk_{1}-\bk_{4}|^{2})^{-2} +
                   2(1-|\bk_{1}+\bk_{4}|^{2})^{-2}], \\
c_{\alpha +} &=&2[1+2(1-|\bk_{1}-\bk_{5}|^{2})^{-2} +
                    2(1-|\bk_{1}+\bk_{5}|^{2})^{-2}], \\
c_{\alpha -} &=&2[1+2(1-|\bk_{1}-\bk_{6}|^{2})^{-2} +
                    2(1-|\bk_{1}+\bk_{6}|^{2})^{-2}].
\end{eqnarray*}

\section{Looking for translations}\label{app:translations} 
Let us consider the cases with $\alpha\in\Ep$, then we can
choose the translation operator $\mathbf{T}_{\mathbf{\delta }}$ such that
\begin{align}
\mathbf{\delta}\cdot \bk_{j} &=\phantom{-}\frac{2\pi}{3}\mod 2\pi,\text{ for }j=1,2,3,
\label{eq:id_delta} \\
{}&=-\frac{2\pi}{3} \mod 2\pi,\text{ for }j=4,5,6.  \notag
\end{align}
Indeed, we set 
\begin{equation*}
\mathbf{\delta}=\frac{2\pi}{3}\lambda ^{2}m\mathbf{s}_{1},
\end{equation*}
where $\mathbf{s}_{1}$ and $\lambda $ are defined at \cref{lem:Ep_ab} and $m$ is an
integer. Then~\cref{eq:id_delta} leads to
\begin{eqnarray*}
m(2a-b) &=&2(1+3n_{1}), \\
m(2b-a) &=&2(1+3n_{2}), \\
m(a+b) &=&2(-1+3n_{4}), \\
m(a-2b) &=&2(-1+3n_{5}),
\end{eqnarray*}
where $n_{1},n_{2},n_{4},n_{5}$ are integers. It follows that
\begin{eqnarray*}
n_{2}+n_{5} &=&0, \\
am &=&2(n_{1}+n_{4}),\\
a(2n_{4}-n_{1}-1) &=& b(n_{1}+n_{4}), \\
a(n_{1}+n_{4}+3n_{2}+1) &=& 2b(n_{1}+n_{4}).
\end{eqnarray*}
The last two lines give
\begin{equation*}
n_{2}=n_{4}-n_{1}-1,
\end{equation*}
and so
\begin{eqnarray*}
n_{1}+n_{4} &=&la, \\
2n_{4}-n_{1}-1 &=&lb,
\end{eqnarray*}
where $l$ is an integer, leading to
\begin{equation*}
3n_{4}=1+l(a+b).
\end{equation*}
Since $a+b$ is not multiple of 3, we have to look at two cases: $a+b=3j+1$ \
or $a+b=3j+2$. 

For $a+b=3j+1$ we choose $l=2$, hence
\begin{equation*}
n_{4} =2j+1, \text{  }
n_{1} =2a-2j-1,  \text{  }
n_{2} =4j-2a+1,  \text{  }
n_{5} =-n_{2},  \text{  }
m =4.
\end{equation*}
For $a+b=3j+2$ we choose $l=1$, hence
\begin{equation*}
n_{4} =j+1,  \text{  }
n_{1} =a-j-1,  \text{  }
n_{2} =2j-a+1,  \text{  }
n_{5} =-n_{2},  \text{  }
m =2.
\end{equation*}
It follows that the solutions in \cref{thm:superposed_hexagons_periodic} obtained for $\theta_{1}=\theta _{2}=\theta _{3}=-\theta _{4}=-\theta _{5}=-\theta _{6}=k\frac{\pi}{3}$,
provide \emph{only two different patterns}, one corresponding to $k=0,2,4$,
the other for $k=1,3,5$.

\bibliographystyle{siamplain}
\bibliography{allrefs}

\end{document}